\date{\today}
\definecolor{Greenn}{rgb}{0.0,0.50,0.01}
\numberwithin{equation}{section}
\newcommand{\D}{\mathfrak {D}}
\def\L{L} 
\DeclareMathOperator*{\wl}{w-\lim}
\def\1{1\hskip-0.09cm{\rm l}} 
\def\hD{H_D}
\def\hZ0{H_{Z^0}}
\newcommand{\R}{{\mathbb R}} 
\newcommand{\C}{{\mathbb C}} 
\newcommand{\N}{{\mathbb N}} 
\newcommand{\gH}{{\mathfrak {H}}} 
\newcommand{\gF}{{\mathfrak {F}}}
\newcommand{\cH}{{\mathcal{H}}} 
\def\p{\mathbf{p}}
\def\q{\mathbf{q}}
\def\la{\langle}
\def\ra{\rangle}
\def\z0{Z^0}
\def\mz{m_{Z^0}} 
\def\mw{m_{W}} 
\def\cg{C_{\mathrm{gap}}} 
\newtheorem{prop}{Proposition}[section]
\newtheorem{theo}[prop]{Theorem}
\newtheorem{lem}[prop]{Lemma}
\newtheorem{defi}[prop]{Definition}
\newtheorem{rema}[prop]{Remark}
\newtheorem{hypo}{Hypothesis}
\title[Local decay for weak interactions]{Local Decay for Weak Interactions with Massless Particles}
\author{Jean-Marie Barbaroux}
\address{Jean-Marie Barbaroux\\
 Aix Marseille Univ, Universit\'e de Toulon, CNRS, CPT, Marseille, France.}
\email{barbarou@univ-tln.fr}
\author{J\'er\'emy Faupin}
\address{
J\'er\'emy Faupin\\
Institut Elie Cartan de Lorraine, Universit{\'e} de Lorraine\\
57045 Metz Cedex 1, France.
}
\email{jeremy.faupin@univ-lorraine.fr}
\author{Jean-Claude Guillot}
\address{
Jean-Claude Guillot\\
CNRS-UMR 7641, Centre de Math\'ematiques Appliqu\'ees, Ecole Polytechnique\\
91128 Palaiseau Cedex, France.
}
\email{guillot@cmapx.polytechnique.fr}
\subjclass[2010]{Primary 81Q10; Secondary 46N50, 81Q37}
\keywords{Standard Model, Weak Interactions, Spectral Theory, Mourre Theory, Local Decay}
\begin{document}
\begin{abstract}
We consider a mathematical model for the weak decay of the intermediate boson $Z^0$ into neutrinos and antineutrinos. We prove that the total Hamiltonian has a unique ground state in Fock space and we establish a limiting absorption principle, local decay and a property of relaxation to the ground state for initial states and observables suitably localized in energy and position. Our proofs rest, in particular, on Mourre's theory and a low-energy decomposition.
\end{abstract}

\maketitle

\section{Introduction}\label{S1}

We consider in this paper a mathematical model for the weak decay of the intermediate boson $Z^0$ into neutrinos and antineutrinos. This is a part of a program devoted to the study of mathematical models for the weak decays, as patterned according to the Standard model in Quantum Field Theory; See \cite{AGG,ABFG,BFG4,BFG3,bg3,bg4,Guillot2015}.

In \cite{ABFG}, W. Aschbacher and the authors studied the spectral theory of the Hamiltonian associated to the weak decay of the intermediate bosons $W^{\pm}$ into the full family of leptons. In this paper, we consider the weak decay of the boson $Z^0$, and, for simplicity, we restrict our study to the model representing the decay of $Z^0$ into the neutrinos and antineutrinos associated to the electrons. Hence, neglecting the small masses of neutrinos and antineutrinos, we define a total Hamiltonian $H$ acting in an appropriate Fock space and involving two fermionic, massless particles -- the neutrinos and antineutrinos -- and one massive bosonic particle -- the boson $Z^0$. In order to obtain a well-defined operator, we approximate the physical kernels of the interaction Hamiltonian by square integrable functions and we introduce high-energy cutoffs. In particular the Hamiltonian that we consider is not translation invariant. We emphasize, however, that we do not need to impose any low-energy regularization in the present work. We use in fact the spectral representation of the massless Dirac operator by the sequence of spherical waves (see \cite{GreinerMuller,ref8} and Appendix \ref{appendixA}). The precise definition of $H$ as a self-adjoint operator is given in Section \ref{S2}.

By adapting to our context methods of previous papers \cite{ABFG,BFP1,bg4,Pizzo}, we prove that $H$ has a unique ground state for sufficiently small values of the coupling constant. This ground state is expected to be an equilibrium state in the sense that any initial state relaxes to the ground state by emitting particles that propagate to infinity. Rigorously proving such a statement requires to develop a full scattering theory for our model and is beyond the scope of this paper. Nevertheless, we are able to establish a property of relaxation to the ground state for any initial state of energy smaller than the mass of $Z^0$ and ``\emph{localized in the position variable}'' for the neutrinos and antineutrinos, and for any observable localized in a similar sense.

To prove our main result, we use Mourre's theory in the spirit of \cite{FGS1} and \cite{ABFG}. This gives a limiting absorption principle and local energy decay in any spectral interval above the ground state energy and below the mass of $Z^0$. In particular, the limiting absorption principle shows that the spectrum between the ground state energy and the first threshold is purely absolutely continuous. For this part of the proof, the main difference with \cite{ABFG} is that we have to deal with two different species of massless particles. This leads to some technical issues.

Now, the local decay property obtained by Mourre's theory depends on the energy of the initial state under consideration. More precisely, the rate of decay tends to $0$ as the energy of the initial state approaches the ground state energy. However, in our model, neutrinos and antineutrinos are massless particles and it should be expected that their speed of propagation is energy-independent. Justifying this fact is the main novelty and one of the main achievements of this paper. As a consequence, as mentioned above, we  establish relaxation to the ground state in a suitable sense for localized observables and states. The uniformity in energy of the local decay property is obtained by adapting the proof in \cite{BH1,BH2} and \cite{BF} to the present context, with the delicate issue that we have to deal, in our case, with a more singular Hamiltonian in the infrared region than in \cite{BF}. We refer to Section \ref{subsec:mainresults} for a more detailed explanation of our strategy and a comparison with the literature.

Our paper is organized as follows. We begin with introducing the physical model that we consider and stating our main results in Section \ref{S2}. The proofs are given in Sections \ref{section:S3} and \ref{section:results}. In Appendix \ref{appendixA}, we give estimates of the free massless Dirac  spherical waves, and we recall a technical result in Appendix \ref{appendixB}.


\section{The model and the results}\label{S2}

In \cite{BFG3}, we considered the weak decay of $Z^0$ into electrons and positrons. The model that we study here is the same as the one in \cite{BFG3}, except that the massive fermions, the electrons and positrons, are replaced by neutrinos and antineutrinos that we treat as massless fermions. Therefore, in Subsection \ref{subsec:model}, we present briefly the model studied in the present paper and we refer the reader to \cite[Section 2]{BFG3} for more details. In Subsection \ref{subsec:mainresults}, we state our main results and compare them to the literature.

\subsection{The model}\label{subsec:model}

\subsubsection{The free Hamiltonian}

We use a system of units such that $\hbar = c =1$. The total Fock space for neutrinos, antineutrinos and $Z^0$ bosons is defined as
\begin{equation*}
\mathcal{H} := \mathfrak{F}_D \otimes \mathfrak{F}_{\z0} ,
\end{equation*}
where
\begin{equation*}
\mathfrak{F}_D := {\mathfrak F}_{a} \otimes  {\mathfrak F}_{a} :=  {\mathfrak F}_{a}( \mathfrak{H}_{c} ) \otimes  {\mathfrak F}_{a}( \mathfrak{H}_{c} ) := \bigoplus_{n=0}^\infty  \otimes_a^n \mathfrak{H}_{c} \otimes \bigoplus_{n=0}^\infty  \otimes_a^n \mathfrak{H}_{c} ,
\end{equation*}
is the tensor product of antisymmetric Fock spaces for neutrinos and antineutrinos, and
\begin{equation*}
 \mathfrak{F}_{\z0} := \mathfrak{F}_s( L^2(\Sigma_3) ) := \bigoplus_{n=0}^\infty \otimes_s^n L^2( \Sigma_3 ),
\end{equation*}
is the bosonic Fock space for the boson $\z0$. Here
\begin{equation}\label{def:H_c}
\mathfrak{H}_{c} := L^2( \Sigma ; \mathbb{C}^4 ) := L^2( \mathbb{R}_+ \times \Gamma ;\C^4),
\end{equation}
where
\begin{equation*}
 \Gamma : = \Big \{ ( j ,  m_j ,  \kappa_j ) ,  j\in\N+\frac12,  m_j\in\{  -j, -j+1, \cdots, j-1, j \} ,  \kappa_j\in\{ \pm ( j+ \frac{1}{2}) \}  \Big \} ,
\end{equation*}
 represents the one-particle Hilbert space for both neutrinos and antineutrinos, labeled in terms of modulus of the momentum and angular momentum quantum numbers. We will denote by $\xi_1 := ( p_1 , \gamma_1 ) \in \Sigma = \mathbb{R}_+ \times \Gamma$ the quantum variable in the case of neutrinos, and by $\xi_2 := ( p_2 , \gamma_2 ) \in \Sigma$ the quantum variable in the case of antineutrinos. Likewise, $L^2( \Sigma_3 )$ represents the one-particle Hilbert space for the $Z^0$ bosons, with
 \begin{equation*}
 \Sigma_3 := \R^3\times\{ -1, 0, 1\} ,
 \end{equation*}
and we denote by $\xi_3 := (k, \lambda)\in\Sigma_3$ the quantum variable for $Z^0$. The vacuum in $\mathfrak{F}_D$ (respectively in $\mathfrak{F}_{Z^0}$) is denoted by $\Omega_D$ (respectively by $\Omega_{Z^0}$).

The total free Hamiltonian $H_0$, acting on $\mathcal{H}$, is defined by
\begin{equation*}
  H_0 :=H_D \otimes \1_{ \mathfrak{F}_{Z^0} } + \1_{ \mathfrak{F}_D } \otimes\hZ0  ,
\end{equation*}
where $H_D$ is the Hamiltonian of the quantized Dirac field, acting on $\mathfrak{F}_D$ and given by
\begin{align*}
H_D  & := H_{0,+} + H_{0,-} \phantom{\int} \\
& := \mathrm{d} \Gamma( \omega( p_1 )  ) \otimes \1_{ \mathfrak{F}_a } + \1_{ \mathfrak{F}_a } \otimes \mathrm{d} \Gamma( \omega( p_2 ) )   \\
&:= \int \omega(p_1) b_{+}^*(\xi_1) b_{+}(\xi_1) d \xi_1  +   \int  \omega(p_2) b_{-}^*(\xi_2) b_{-}(\xi_2) d \xi_2 ,
\end{align*}
and $H_{Z^0}$ is the Hamiltonian of the bosonic field, acting on $\mathfrak{F}_{\z0}$, and given by
\begin{equation*}
H_{Z^0} :=  \mathrm{d} \Gamma( \omega_3( k ) )  := \int \omega_3(k) a^*(\xi_3) a(\xi_3)  d \xi_3 .
\end{equation*}
The massless dispersion relation for the neutrinos and antineutrinos is $\omega(p) := p$, the dispersion relation for the massive boson $Z^0$ is $\omega_3( k ) := \sqrt{|k|^2 + \mz^2}$, with $\mz$ the mass of $Z^0$. The operator-valued distributions $b^\sharp_+( \xi_1 )$ (respectively $b^\sharp_-( \xi_2 )$), with $b^\sharp = b^*$ or $b$, are the fermionic annihilation and creation operators for the neutrinos (respectively antineutrinos) and $a^\sharp(\xi_3)$ are the bosonic creation and annihilation operators for the $Z^0$ bosons satisfying the usual canonical commutation relations.  In addition, following the convention described in
\cite[section~4.1]{W} and
\cite[section~4.2]{W}, we will assume that fermionic creation and annihilation operators of neutrino anticommute with fermionic creation and annihilation operators of antineutrino. Therefore, for $\epsilon, \epsilon' = \pm$, the following canonical anticommutation and commutation relations hold (see \cite{BFG3, ABFG}  for details),
\begin{equation}\nonumber
\begin{split}
 &\{ b_{\epsilon}(\xi), b^*_{\epsilon'}(\xi')\} =
 \delta_{\epsilon \epsilon'} \delta(\xi - \xi')  , \quad
 \{ b_{\epsilon}(\xi), b_{\epsilon'}(\xi')\}
 =0\ ,\\
 &[ a(\xi_3), a^*(\xi_3')] = \delta(\xi_3 - \xi_3') \ ,\quad 
 [ a(\xi_3), a(\xi_3') ] = 0 ,\\ 
 &[ b_{\epsilon}(\xi), a(\xi_3)]
 = [ b_{\epsilon}(\xi), a^*(\xi_3)]
 = 0 ,
\end{split}
\end{equation}
where $\{b, b'\} = bb' + b'b$ and $[a,a'] = aa' - a'a$.

The spectrum of $H_0$ consists of the simple eigenvalue $0$, associated with the normalized eigenstate $\Omega_D \otimes \Omega_{Z^0}$, and the semi-axis $[0,\infty)$ of absolutely continuous spectrum.

We conclude this paragraph by introducing the number operators that will be used several times in our analysis. The number operators for neutrinos and antineutrinos, denoted respectively by $N_+$ and $N_-$, are given by
 \begin{equation*}
  N_+ := \int b_{+}^*(\xi_1) b_{+}(\xi_1) d \xi_1, \quad  N_- := \int b_{-}^*(\xi_2) b_{-}(\xi_2) d \xi_2 .
 \end{equation*}
The number operator for bosons is given by
\begin{equation*}
 N_{Z^0} := \int a^*(\xi_3) a(\xi_3) d \xi_3 .
\end{equation*}

\subsubsection{The Interaction Hamiltonian}

According to the Standard Model (see \cite[\S~21]{W2}) the intermediate boson $Z^0$ interacts with every  neutrino and antineutrino associated with the massive leptons, i.e. the electron, the muon and the tau. In order to simplify we only consider in this paper the interaction of $Z^0$ with the neutrinos and antineutrinos of the electron. In the Schr\"odinger representation, the latter is given by (see e.g. \cite[(4.139)]{GreinerMuller} and \cite[\S~21]{W2})
\begin{equation}\label{eq:ekd}
 - \frac{g}{4 \cos\theta} \int  \overline{\Psi}_{\nu}(x) \gamma^\mu (1-\gamma_5)\Psi_{\nu}(x) Z_\mu(x) d x  + \mathrm{h.c.} ,
\end{equation}
where $\cos\theta = \frac{\mw}{\mz}$, with $\mw$ the mass of the bosons $W^{\pm}$ and $\mz$ the mass of $Z^0$, and
\begin{equation*}
\frac{g^{2}}{8 \mw^{2}}=\frac{G_{F}}{\sqrt{2}}.
\end{equation*}
Here $G_{F}$ is the conventional Fermi constant, $ G_{F}\simeq (1.16)10^{-5}$(Gev)$^{-2}$. Furthermore, $\mz\simeq 91.18$ Gev and $ \mw\simeq 80.41$ Gev. In our paper, the coupling constant $g$ will be treated as a non-negative small parameter. In \eqref{eq:ekd}, $\gamma^\mu$, $\mu=0,1,2,3$, and $\gamma_5$ stands for the usual Dirac matrices, $\Psi_{\nu}(x)$ and $\overline{\Psi}_{\nu}(x)$ are the Dirac fields for the neutrinos and antineutrinos given by
\begin{equation*}
 \Psi_{\nu}(x) :=  \int \psi_+(\xi_1, x) b_+(\xi_1) d \xi_1  + \int\widetilde{\psi}_-(\xi_2, x) b^*_-(\xi_2) d \xi_2 ,
\end{equation*}
and $Z_\mu(x)$ is the massive boson field for $Z^0$ defined by (see e.g. \cite[Eq.~(5.3.34)]{W}),
\begin{equation*}
  Z_\mu(x) := {(2\pi)}^{-\frac32} \int \frac{ d \xi_3}{(2 (|k|^2  + \mz^2)^\frac12)^\frac12}  \Big(\varepsilon_\mu(k,\lambda) a(\xi_3) \mathrm{e}^{i k.x} + \varepsilon_\mu^*(k,\lambda) a^*(\xi_3) \mathrm{e}^{- i k.x}\Big) .
\end{equation*}
Here $\varepsilon_\mu(k,\lambda)$ are polarizations vectors for the (spin $1$) boson $Z^0$. Moreover, for $\xi = ( p , \gamma ) = ( p , ( j , m_j , \kappa_j ) )$,
\begin{equation*}
 \widetilde{\psi}_-(\xi, x) :=  \psi_{-}((p,(j,-m_j , - \kappa_j ) ) , x) ,
\end{equation*}
and $\psi_\pm( \xi , x )$ are the continuum eigenstates of the free Dirac operator
\begin{equation*}
D_0 := - i \boldsymbol{\alpha} \cdot \nabla ,
\end{equation*}
acting on $L^2( \mathbb{R}^3 ; \mathbb{C}^4 )$, where $\boldsymbol{\alpha} = (\alpha_1,\alpha_2,\alpha_3)$ is the triple of Dirac matrices in the standard form (see \cite{BFG3}). The generalized eigenstates satisfy, for $\xi = ( p , \gamma )$,
$D_0 \psi_\pm( \xi, x) = \pm  \omega(p) \psi_\pm( \xi , x)$, and are normalized in such a way that
\begin{align*}
& \int_{\mathbb{R}^3} \psi_\pm^\dagger( \xi , x) \psi_\pm( \xi' , x) d  x = \delta ( \xi - \xi' ) = \delta_{\gamma\gamma'}\delta(p-p'),\\
& \int_{\mathbb{R}^3} \psi_\pm^\dagger( \xi, x ) \psi_{\mp}( \xi' , x) d x = 0 ,
\end{align*}
where $\xi' = ( p' , \gamma' )$ and $\psi_\pm^\dagger( \xi , x)$ denotes the adjoint spinor of $\psi_\pm( \xi , x )$ (see Appendix~\ref{appendixA} for a more detailed description).

Expanding formally the interaction \eqref{eq:ekd} into a product of creation and annihilation operators, we obtain a finite sum of Wick monomials with integral kernels too singular to define closed operators. Physically, however, the weak interaction has a very short range and the lifetimes of the intermediate bosons are very short. In other words, for the decays of the intermediate bosons, the weak interaction acts locally in space-time. Thus, in order to obtain a well-defined Hamiltonian, we proceed as in e.g \cite{GlimmJaffe, BDG, BDG2, bg4, ABFG}, replacing the singular kernels by square integrable functions $F^{(\alpha)}$ (see \eqref{eq:inter00}--\eqref{eq:inter112c} below and Hypothesis~\ref{hypothesis-1}). In particular, we introduce cutoffs for high momenta of neutrinos, antineutrinos and $Z^0$ bosons, and we confine in space the interaction between the neutrinos/antineutrinos and the bosons by adding a function $f(|x|)$, with $f \in \mathrm{C}_0^\infty( [ 0 , \infty ) )$. Similarly as in \cite{BFG3}, the interaction Hamiltonian is thus associated to the operator
\begin{equation}\label{eq:inter00}
 H_I := H_I^{(1)} + H_I^{(2)} + \mathrm{h.c.} ,
\end{equation}
acting on $\mathcal{H}$, where
\begin{align}
H_I^{(1)} :=  \int & \Big(\int_{\mathbb{R}^3} f(|x|) \overline{\Psi}_+(\xi_1, x) \gamma^\mu(1-\gamma_5) \widetilde{\Psi}_-(\xi_2, x) \frac{\varepsilon_\mu(\xi_3)}{\sqrt{2 \omega_3(k)}} \mathrm{e}^{i k\cdot x} d x\Big) \notag \\
& \times G^{(1)}(\xi_1, \xi_2, \xi_3) b_+^*(\xi_1) b_-^*(\xi_2) a(\xi_3) d \xi_1 d \xi_2 d \xi_3 , \label{eq:inter11}
\end{align}
and
\begin{align}
H_I^{(2)} := \int & \Big( \int_{\mathbb{R}^3} f(|x|) \overline{\Psi}_+(\xi_1, x) \gamma^\mu(1-\gamma_5) \widetilde{\Psi}_-(\xi_2, x) \frac{\varepsilon^*_\mu(\xi_3)}{\sqrt{2 \omega_3(k)}} \mathrm{e}^{-i k\cdot x} d x\Big) \notag \\
& \times G^{(2)}(\xi_1, \xi_2, \xi_3) b_+^*(\xi_1) b_-^*(\xi_2) a^*(\xi_3) d \xi_1 d \xi_2 d \xi_3 . \label{eq:inter33}
\end{align}

Denoting by $h^{(1)}( \xi_1 , \xi_2 , \xi_3 )$ and $h^{(2)}( \xi_1 , \xi_2 , \xi_3 )$ the integrals w.r.t. $x$ in the expressions above, we see that $H_I^{(1)}$ and $H_I^{(2)}$ can be rewritten as
\begin{align}
&  H_I^{(1)}  := H_I^{(1)}(F^{(1)})  := \int F^{(1)}(\xi_1,\xi_2, \xi_3) b_+^*(\xi_1) b_-^*(\xi_2) a(\xi_3) d\xi_1d\xi_2 d\xi_3 , \label{eq:inter11b} \\
&  H_I^{(2)}  := H_I^{(2)}(F^{(2)}) := \int F^{(2)}(\xi_1, \xi_2, \xi_3) b_+^*(\xi_1) b_-^*(\xi_2) a^*(\xi_3) d \xi_1 d \xi_2 d\xi_3 , \label{eq:inter2244}
\end{align}
where, for $j=1,2$,
\begin{align}
 F^{(j)}(\xi_1, \xi_2, \xi_3) := h^{(j)}( \xi_1 , \xi_2 , \xi_3 ) G^{(j)}( \xi_1 , \xi_2 , \xi_3 ) . \label{eq:inter112c}
\end{align}

We are now ready to define the total Hamiltonian $H$ associated to our model.


\subsubsection{The total Hamiltonian}\label{S2.4}

\begin{defi}\label{def:hamiltonian}
The Hamiltonian of the decay of the boson $\z0$ into a neutrino and an antineutrino, acting on $\mathcal{H}$, is
\begin{equation*}
  H := H_0 + g H_I ,
\end{equation*}
where $g$ is a non-negative coupling constant.
\end{defi}
The assumption that $g \ge 0$ is made for simplicity of exposition, but, of course, since $H_g$ and $H_{-g}$ are unitarily equivalent, all our results below hold for $g \in \mathbb{R}$ ($|g|$ small enough).

\subsection{Main results}\label{subsec:mainresults}

We make the following hypothesis on the interaction $H_I$ defined by \eqref{eq:inter00}--\eqref{eq:inter112c}.
\begin{hypo}\label{hypothesis-1}
$\ $

\noindent(i) $f(\cdot)\in C^\infty([0,\infty))$ and there exists $\Lambda>0$ such that $f(x)=0$ for all $|x|>\Lambda$.

\noindent(ii) For $j=1,2$, $G^{(j)}$ is uniformly bounded in $\xi_1$, $\xi_2$ and $\xi_3$.

\noindent iii) There exists a compact set $K\subset\R^+\times\R^+\times\R^3$ such that, for $j=1,2$ and for all $(\gamma_1, \gamma_2, k)\in\Gamma\times\Gamma\times\{-1,0,1\}$, we have $G^{(j)}(p_1,\gamma_1; p_2,\gamma_2; k,\lambda) = 0$ if $(p_1,p_2,k)\not\in K$.

\end{hypo}

In order to apply Mourre's theory, we will have to strengthen Hypothesis \ref{hypothesis-1}(ii) as follows.
\begin{hypo}\label{hypothesis-2}
For $j=1,2$, $G^{(j)}$ is twice differentiable in the variables $p_1$ and $p_2$, with partial derivatives up to the order 2 in $p_1$ and $p_2$ uniformly bounded in $\xi_1$, $\xi_2$ and $\xi_3$.
\end{hypo}

Our main results are listed in the following theorems.
\begin{theo}[Self-adjointness]\label{thm:sa}
Suppose that Hypothesis~\ref{hypothesis-1} holds. There exists $g_0$ such that, for all $0 \le g \leq g_0$, the Hamiltonian $H$ given in Definition~\ref{def:hamiltonian} is self-adjoint with domain $\mathfrak{D}( H ) = \mathfrak{D}(H_0)$.
\end{theo}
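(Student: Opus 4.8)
The plan is to prove Theorem~\ref{thm:sa} by the Kato--Rellich theorem, showing that $gH_I$ is $H_0$-bounded with relative bound strictly less than one as soon as $g$ is small enough. Since $H_I = H_I^{(1)} + H_I^{(2)} + \mathrm{h.c.}$ is symmetric on $\mathfrak{D}(H_0)$ by construction, it suffices to produce, for $\psi \in \mathfrak{D}(H_0)$, a bound of the form $\|H_I^{(j)}(F^{(j)})\psi\| \le \varepsilon\|H_0\psi\| + C_\varepsilon\|\psi\|$ (with $\varepsilon$ arbitrarily small) for each of the four Wick monomials $H_I^{(1)}(F^{(1)})$, $H_I^{(2)}(F^{(2)})$ and their adjoints, exactly in the spirit of \cite{BFG3,ABFG}.

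The first and main step is to check that the kernels $F^{(j)} = h^{(j)}G^{(j)}$ of \eqref{eq:inter112c} belong to $L^2(\Sigma\times\Sigma\times\Sigma_3)$. By Hypothesis~\ref{hypothesis-1}, $G^{(j)}$ is bounded and, in the variables $(p_1,p_2,k)$, supported in a fixed compact set $K$, while $f\in C_0^\infty([0,\infty))$ confines the $x$-integral defining
\[
 h^{(j)}(\xi_1,\xi_2,\xi_3) = \int_{|x|\le\Lambda} f(|x|)\,\overline{\psi}_+(\xi_1,x)\,\gamma^\mu(1-\gamma_5)\,\widetilde{\psi}_-(\xi_2,x)\,\frac{\varepsilon_\mu(\xi_3)}{\sqrt{2\omega_3(k)}}\,\mathrm{e}^{\pm i k\cdot x}\,dx
\]
to the ball $|x|\le\Lambda$. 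One then has to sum $|h^{(j)}G^{(j)}|^2$ over the angular-momentum quantum numbers $\gamma_1,\gamma_2\in\Gamma$ and integrate over $(p_1,p_2)\in K$ and $\xi_3$. This is precisely where the estimates on the free massless Dirac spherical waves collected in Appendix~\ref{appendixA} enter: restricted to $|x|\le\Lambda$, the $\psi_\pm(\xi,x)$ obey pointwise bounds decaying fast enough in the angular quantum number $j$ that the sums over $\Gamma\times\Gamma$ converge, the remaining integration in $(p_1,p_2)$ being over the compact set $K$ and $\omega_3(k)\ge\mz$ being bounded below. Note that no low-energy regularization is needed, because the only soft behaviour sits in the fermionic variables, for which the field operators are bounded.

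For the second step I would invoke the standard $N_\tau$-type estimates for creation and annihilation operators (see Appendix~\ref{appendixB} and \cite{GlimmJaffe,BDG}). Since the fermionic operators $b_\pm^\sharp(\cdot)$ are bounded on Fock space, with norm controlled by the $L^2$-norm of the smearing function, the only unbounded factor is the single bosonic annihilation (resp. creation) operator, so that
\[
 \|H_I^{(1)}(F^{(1)})\psi\| \le C\,\|F^{(1)}\|_{L^2}\,\|N_{Z^0}^{1/2}\psi\| , \qquad \|H_I^{(2)}(F^{(2)})\psi\| \le C\,\|F^{(2)}\|_{L^2}\,\|(N_{Z^0}+1)^{1/2}\psi\| ,
\]
and likewise for the adjoints. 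Because $Z^0$ is massive, $\omega_3(k)=\sqrt{|k|^2+\mz^2}\ge\mz$, we have $N_{Z^0}\le\mz^{-1}H_{Z^0}\le\mz^{-1}H_0$, whence $\|N_{Z^0}^{1/2}\psi\|^2\le\mz^{-1}\langle\psi,H_0\psi\rangle$, and a standard interpolation ($\varepsilon$-)argument turns each of the four bounds into $\|H_I^{(j)}(F^{(j)})\psi\|\le\varepsilon\|H_0\psi\|+C_\varepsilon\|\psi\|$. Summing, $\|gH_I\psi\|\le gC(\varepsilon\|H_0\psi\|+C_\varepsilon\|\psi\|)$; choosing $\varepsilon$ small and then $g_0$ so small that $gC\varepsilon<1$ for all $0\le g\le g_0$, the Kato--Rellich theorem gives that $H=H_0+gH_I$ is self-adjoint on $\mathfrak{D}(H)=\mathfrak{D}(H_0)$ (and essentially self-adjoint on any core of $H_0$). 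I expect the only genuinely non-routine point to be the $L^2$ bound on the kernels $F^{(j)}$ — that is, controlling the spherical-wave integrals $h^{(j)}$ summably in the angular-momentum labels via Appendix~\ref{appendixA} — everything else being the by-now standard combination of abstract QFT estimates, the mass gap of $Z^0$, and Kato--Rellich.
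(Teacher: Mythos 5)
Your overall frame (Kato--Rellich plus a relative bound for each Wick monomial, with the $L^2$ properties of the kernels coming from Hypothesis~\ref{hypothesis-1} and the spherical-wave estimates of Appendix~\ref{appendixA}) is the same as the paper's, and your first step is fine: the paper's Lemma~\ref{lem:kernel-bound-1} and Remark following Proposition~\ref{prop:relative-bound} indeed give $F^{(j)}\in L^2$ (and even $F^{(j)}/p_i\in L^2$). The gap is in your second step, namely the claim that, because single fermionic operators $b^\sharp_\pm(f)$ are bounded by $\|f\|_{L^2}$, ``the only unbounded factor is the bosonic operator'' and hence $\|H_I^{(1)}\psi\|\le C\|F^{(1)}\|_{L^2}\|N_{Z^0}^{1/2}\psi\|$. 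The interaction contains a \emph{pair} of fermionic creation operators smeared with a two-variable kernel, and such a quadratic monomial is not bounded by the $L^2$ (Hilbert--Schmidt) norm of its kernel: writing $F^{(1)}(\cdot,\cdot,\xi_3)=\sum_k\lambda_k\, e_k\otimes f_k$ (Schmidt decomposition) and $T=\sum_k\lambda_k\, b_+^*(e_k)b_-^*(f_k)$, the operators $c_k=b_+^*(e_k)b_-^*(f_k)$ commute, and on the BCS-type state $\psi=\prod_k 2^{-1/2}(1+c_k)\Omega$ one computes $\|T\psi\|^2=\tfrac14\sum_k\lambda_k^2+\tfrac14\big(\sum_k\lambda_k\big)^2$, so $\|T\|$ is of the order of the \emph{trace} norm $\sum_k\lambda_k$, which is not controlled by $\|F^{(1)}\|_{L^2}$. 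Nothing in Hypothesis~\ref{hypothesis-1} or Lemma~\ref{lem:kernel-bound-1} gives a trace-class bound on the kernels, so your key estimate is unsupported (and, as a general principle, false).

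What the correct $N_\tau$-type bound gives without further input is rather $\|H_I^{(1)}\psi\|\le \|F^{(1)}\|_{L^2}\,\|(N_\mp+1)^{1/2}N_{Z^0}^{1/2}\psi\|$, i.e.\ one fermionic leg can be absorbed into the boundedness of $b^\sharp$, but the other costs a factor of a fermionic number operator; and since the neutrinos are massless, $N_\pm$ is \emph{not} relatively bounded by $H_0$. This is exactly where the soft fermionic behaviour you dismissed enters, and it is why the paper's proof (Proposition~\ref{prop:relative-bound} via Lemma~\ref{lem:bg4}) works with the weighted kernels $F^{(j)}/\sqrt{p_i}$ and $F^{(j)}/\sqrt{p_i\,\omega_3}$, bounding the fermionic pair by $H_D^{1/2}$ at the price of a $p_i^{-1/2}$ on the kernel, which is finite only because the spherical waves vanish like $p^{\ell_j}$, $\ell_j\ge 1$, at $p=0$. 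Your route would also yield an infinitesimal $H_0$-bound for $H_I$ itself and hence self-adjointness for all $g$, whereas the paper's estimate has a fixed (non-small) relative constant and uses the smallness of $g$ in Kato--Rellich --- another sign that the shortcut is too strong. To repair your argument, replace the asserted $L^2$ bound by the paper's scheme: estimate the annihilation pair $B^{(j)}(\xi_3)$ by $\big(\int|F^{(j)}|^2/p_i\,d\xi_1 d\xi_2\big)^{1/2}\|H_D^{1/2}\psi\|$, recover the creation pair from it via the canonical anticommutation relations up to an extra $\|F^{(j)}(\cdot,\cdot,\xi_3)\|\,\|\psi\|$ term, and only then contract with the boson variable using $\omega_3\ge m_{Z^0}$.
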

Theorem \ref{thm:sa} follows from the Kato-Rellich theorem together with relative bounds of $H_I$ w.r.t. $H_0$ that will be established in Section \ref{section:S3}.
\begin{theo}[Existence of a ground state]\label{thm:GS}
Suppose that Hypothesis~\ref{hypothesis-1} holds. There exists $g_0>0$ such that, for all $0 \le g \leq g_0$, the Hamiltonian $H$ has a unique ground state associated to the ground state energy $E:=\inf\sigma(H)$.
\end{theo}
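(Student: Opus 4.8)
The plan is to follow the now-standard strategy for proving existence and uniqueness of a ground state for such QFT Hamiltonians, as developed in \cite{BFP1,Pizzo} and adapted in \cite{ABFG,bg4}. Since the neutrino/antineutrino dispersion relation is $\omega(p)=p$, the infrared behaviour is the most delicate point; however, the coupling functions $F^{(j)}$ have compact support in the momentum variables (Hypothesis~\ref{hypothesis-1}(iii)) and, crucially, one can exploit the spherical-wave representation, in which the kernels $\psi_\pm(\xi,x)$ have a favourable behaviour as $p\to 0$ (see Appendix~\ref{appendixA}). The key quantitative input will be an infrared estimate of the form $\int \|F^{(j)}(\cdot,\cdot,\xi_3)\|^2 \, p_i^{-1-\delta}\, d\xi \lesssim 1$ for some $\delta>0$, or at least a bound on $\| (\text{something like } \partial_{p_i} F^{(j)}) \|$, which is what makes the compactness argument work. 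I would first record, in Section~\ref{section:results}, the relevant $N_\tau$-type and number-energy estimates for $H_I$ and its commutators with the number operators, analogous to the relative bounds already announced for Theorem~\ref{thm:sa}.

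The existence part I would carry out by the infrared-cutoff / compactness method. First, for $\sigma>0$ introduce the regularized Hamiltonian $H_\sigma := H_0 + g H_I^\sigma$, where $H_I^\sigma$ is obtained by restricting the $\xi_3$-integration (and, if needed, cutting off small $p_1,p_2$) so that the boson momentum satisfies $|k|\ge\sigma$; one checks $H_\sigma\to H$ in norm resolvent sense as $\sigma\to 0$. For fixed $\sigma>0$ the photon/boson is effectively massive with a spectral gap, so a standard argument (e.g.\ \cite[]{BFS} type, or directly as in \cite{ABFG}) gives a normalized ground state $\phi_\sigma$ with $H_\sigma\phi_\sigma = E_\sigma\phi_\sigma$, $E_\sigma=\inf\sigma(H_\sigma)$. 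Second, derive uniform bounds: $E_\sigma$ is bounded and converges to $E=\inf\sigma(H)$ (upper semicontinuity is easy; the matching lower bound uses the norm resolvent convergence), and, most importantly, uniform-in-$\sigma$ bounds $\langle\phi_\sigma, N_+\phi_\sigma\rangle + \langle\phi_\sigma, N_-\phi_\sigma\rangle + \langle\phi_\sigma, N_{Z^0}\phi_\sigma\rangle \le C$ together with the ``pull-through'' formula $b_{\pm}(\xi)\phi_\sigma = -g(H_\sigma - E_\sigma + \omega(p))^{-1}(\partial_{b_\pm}H_I^\sigma)\phi_\sigma$, which yields a pointwise estimate $\|b_\pm(\xi)\phi_\sigma\|\le g\, C\, |F^{(j)}|/\omega(p)$ and similarly for $a(\xi_3)$. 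The compactness of the support of $F^{(j)}$ in momenta, combined with the $1/\omega$ singularity being square-integrable in three-dimensional $k$ for the boson and, for the massless fermions, controlled via the spherical-wave decay estimates from Appendix~\ref{appendixA}, gives that the family $\{\phi_\sigma\}$ lies in a compact set (the standard criterion: uniform number bound plus an $\varepsilon$-uniform tail bound on $\mathds{1}_{N_{\bullet}\ge R}$ and on high momenta). Passing to a weak limit $\phi_\sigma\rightharpoonup\phi$, one shows $\phi\ne 0$ (here the uniform bounds prevent escape of mass) and $H\phi = E\phi$.

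For uniqueness I would use the standard perturbative argument: $H_0$ has a simple ground state $\Omega_D\otimes\Omega_{Z^0}$ separated from the rest of the spectrum only by the continuous spectrum starting at $0$, so a naive gap argument fails, but one can instead argue as in \cite{ABFG,bg4}: either (a) by a Perron–Frobenius / positivity-improving argument in a suitable representation, or more robustly (b) by showing that any two ground states are, up to a phase, equal — using again the pull-through formula to control $\|(\phi_\sigma - \text{proj})\|$ and a contradiction with the dimension of the ground-state eigenspace of $H_\sigma$ (which is one for each $\sigma$ by the gapped argument) surviving in the limit. Concretely: if $\dim\ker(H-E)\ge 2$ one produces two orthonormal ground states, shows each is a limit of (normalized combinations of) $\phi_\sigma$'s, and derives a contradiction from $\dim\ker(H_\sigma-E_\sigma)=1$ together with the resolvent convergence.

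The main obstacle is the infrared problem for the \emph{two} massless fermionic species. Unlike the boson, whose $1/\omega_3$ factor is harmless because integration is over $d^3k$, the neutrino and antineutrino contribute factors $1/\omega(p_i)=1/p_i$ whose square is only borderline integrable against the one-dimensional measure $dp_i$ coming from the spherical-wave variables; what saves the day is the behaviour of $\psi_\pm(\xi,x)$ as $p\to 0$ established in Appendix~\ref{appendixA}, which provides an extra power of $p_i$ in the kernel $h^{(j)}$, restoring square-integrability of $b_\pm(\xi)\phi_\sigma$ and hence the compactness needed to extract a nonzero limit. Handling this carefully for both species simultaneously, and propagating the estimates through the pull-through formula and the number-operator bounds uniformly in the infrared cutoff, is the technical heart of the proof; everything else is an adaptation of \cite{ABFG,BFP1,Pizzo}.
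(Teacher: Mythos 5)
Your overall scaffolding (infrared cutoff, pull-through estimates, weak limit, uniqueness by comparison with the cutoff ground states) is the right family of ideas, and you correctly identify that the square-integrability of $F^{(j)}/p_i$ coming from Appendix~\ref{appendixA} is what controls the massless fermions. But there are two genuine gaps. First, the cutoff is placed on the wrong particle: the $Z^0$ boson is massive, $\omega_3(k)\ge m_{Z^0}$, so restricting to $|k|\ge\sigma$ regularizes nothing and certainly does not create a spectral gap; the infrared regularization must be imposed on the fermion momenta $p_1,p_2$ (you mention this only parenthetically). More importantly, even with the correct fermionic cutoff, the claim that ``a standard argument gives a normalized ground state $\phi_\sigma$'' because the cutoff Hamiltonian is gapped is not available for fixed $g$ and small $\sigma$: the perturbation $gH_I$ is of size $O(g)$, not $O(g\sigma)$, so a one-step Kato--Rellich/min--max argument loses the gap as soon as $\sigma\lesssim g$. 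This is precisely why the paper runs Pizzo's iterative scheme (Proposition~\ref{prop:spectral-gap}): the cutoffs form a geometric sequence $\sigma_{n+1}=\gamma\sigma_n$, and at each step only the interaction supported in the momentum shell $[\sigma_{n+1},\sigma_n)$ is added, whose relative bound is of order $\sigma_n-\sigma_{n+1}$ (Lemma~\ref{lm:acompleter}); this is what yields the uniform gap $(1-\cg g)\sigma_n$. Your proposal treats this induction, which is the technical heart of the existence proof, as a black box.

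Second, your uniqueness argument (b) does not close. Simplicity of $\ker(H_\sigma-E_\sigma)$ plus resolvent convergence does not imply simplicity in the limit, because the gap above $E_\sigma$ shrinks to zero and the spectral projections $\mathds{1}_{\{E_\sigma\}}(H_\sigma)$ need not converge; an eigenvalue sitting at the bottom of continuous spectrum can a priori become degenerate in the limit. The paper's argument (following \cite{BFS98}) is quantitative: for a putative second ground state $\phi'$ one bounds $\langle\phi',(\1-\mathds{1}_{\{E_n\}}(H_n))\phi'\rangle\le Cg$ \emph{uniformly in $n$}, by playing the gap $(1-\cg g)\sigma_n$ of $K_n$ against the removal error $\langle\phi',(H-H_n)\phi'\rangle\le Cg\sigma_n$, which in turn requires the weighted estimate $\|(\check{H}^n_\infty)^{-1/2}H^{(j)\,n}_{I,\infty}(N_{Z^0}+1)^{-1/2}\|\le C\sigma_n$ of Lemma~\ref{lm:rel_Ntau} together with the number bounds of Lemma~\ref{lem:C} (and $|E-E_n|\le Cg\sigma_n^2$). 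It is exactly this matching of scales --- gap of order $\sigma_n$ versus error of order $g\sigma_n$ --- that produces the contradiction for small $g$; without it, the limiting argument you sketch cannot rule out degeneracy. (Your alternative (a), a Perron--Frobenius argument, is not carried out in the paper and is not obviously available for this fermionic model.)
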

Theorem \ref{thm:GS} is proven with the help of Pizzo's iterative perturbation theory \cite{Pizzo,BFP1}. Compared to previous papers employing this method to prove the existence of a ground state in quantum field theory models, the main new issue that we encounter comes from the fact that we are considering two different species of massless particles. To overcome this difficulty, we use in particular suitable versions of the $N_\tau$ estimates of \cite{GlimmJaffe}.
\begin{theo}[Location of the essential spectrum]\label{thm:es}
Suppose that Hypothesis~\ref{hypothesis-1} holds. There exists $g_0 > 0$ such that, for all $0 \le g \le g_0$,
\begin{equation*}
 \sigma_{\mathrm{ess}}(H) = [E, \infty) .
\end{equation*}
\end{theo}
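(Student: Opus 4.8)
\textbf{Proof proposal for Theorem \ref{thm:es}.}

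The plan is to prove the two inclusions $[E,\infty)\subset\sigma_{\mathrm{ess}}(H)$ and $\sigma_{\mathrm{ess}}(H)\subset[E,\infty)$ separately, following the standard strategy for QFT Hamiltonians with a massless field (see \cite{ABFG,BFG3}). For the inclusion $[E,\infty)\subset\sigma_{\mathrm{ess}}(H)$, the idea is to construct Weyl sequences for $H$ at energy $E+\mu$ for every $\mu\ge0$ by dressing the ground state $\Phi$ (whose existence and uniqueness is granted by Theorem \ref{thm:GS}) with a low-energy neutrino (or antineutrino). Concretely, take $\Phi\in\mathfrak{D}(H)$ the normalized ground state, and for a form factor $g_n$ supported in a shell $\{p_1\in[\mu,\mu+1/n]\}$ with $\|g_n\|_{L^2(\Sigma)}=1$, set $\Psi_n:=b_+^*(g_n)\Phi$. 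Using the pull-through / commutator formula $[H,b_+^*(g_n)] = b_+^*(\omega g_n) + g\,[H_I,b_+^*(g_n)]$, together with the relative bounds of $H_I$ w.r.t.\ $H_0$ from Section \ref{section:S3} and Hypothesis \ref{hypothesis-1}, one checks that $\|(H-E-\mu)\Psi_n\|\to0$ while $\|\Psi_n\|\to1$; the $b_+$-part of the commutator with $H_I$ is controlled because $F^{(j)}$ is square integrable and $g_n$ concentrates in energy. Since the $\Psi_n$ can be taken to converge weakly to $0$ (the added particle escapes to low momenta, and by orthogonality of the shells one extracts an orthonormal subsequence), this produces a genuine Weyl sequence and gives $E+\mu\in\sigma_{\mathrm{ess}}(H)$ for all $\mu\ge0$.

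For the reverse inclusion $\sigma_{\mathrm{ess}}(H)\subset[E,\infty)$, this is automatic: $E=\inf\sigma(H)$ by definition, so $\sigma(H)\subset[E,\infty)$ and a fortiori $\sigma_{\mathrm{ess}}(H)\subset[E,\infty)$. Thus the only substantive content is the first inclusion, and the whole statement reduces to producing the singular Weyl sequences above. An alternative, more structural route that I would mention is an HVZ-type argument: decompose $\mathcal{H}$ using the fact that $H$ acts almost additively under the splitting of Fock space into a sector with a fixed configuration of asymptotically free massless particles and the rest, and invoke $\sigma_{\mathrm{ess}}(H) = \overline{\bigcup}\,(\sigma_{\mathrm{pp}}(H)+[0,\infty)) \supset E+[0,\infty)$. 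In the present context the direct Weyl-sequence construction is cleaner because there is no infrared regularization and the form factors $F^{(j)}$ are genuinely in $L^2$, so no Pauli--Fierz-type dressing transformation is needed.

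The main obstacle I anticipate is the bookkeeping in the commutator $[H_I,b_+^*(g_n)]$: since $H_I$ contains monomials $b_+^*b_-^*a$ and $b_+^*b_-^*a^*$ (plus h.c.), the anticommutator $\{b_+(\xi_1),b_+^*(g_n)\}=g_n(\xi_1)$ picks out, from the h.c.\ terms, operators of the form $\int \overline{F^{(j)}(\xi_1,\xi_2,\xi_3)}g_n(\xi_1)\,b_-(\xi_2)a^\sharp(\xi_3)\,d\xi$, and one must show these have small norm on $\mathfrak{D}(H_0)$-vectors uniformly. This follows from Cauchy--Schwarz in $\xi_1$ against $g_n$ together with the $H_0^{1/2}$-bound on $b_-$ and $a$ (or $N^{1/2}$-bounds, cf.\ the $N_\tau$ estimates alluded to after Theorem \ref{thm:GS}) — the point being that $\|g_n\|=1$ does not make this term small, so one instead exploits that $g_n$ is supported near momentum $\mu$ to argue that the contribution is $O(1)$ and commutes to the right order, or more simply that $\|[H_I,b_+^*(g_n)]\Phi\|$ stays bounded while the free part $\|b_+^*(\omega g_n)\Phi - \mu\,b_+^*(g_n)\Phi\| = \|b_+^*((\omega-\mu)g_n)\Phi\| \le \||\omega-\mu|^{1/2}g_n\|_{L^2}\cdot\|(H_0+1)^{1/2}\Phi\| \to 0$. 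Closing this estimate carefully — in particular verifying that the $g$-dependent error does not spoil the limit, which may require shrinking $g_0$ — is the only delicate point; everything else is soft.
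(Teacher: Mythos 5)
Your overall construction is essentially the paper's: there too the statement is reduced to producing singular Weyl sequences at $E+\lambda$, $\lambda\ge 0$, obtained by applying massless-fermion creation/annihilation operators smeared with a weakly null one-particle Weyl sequence to an energy-localized vector. The paper uses approximate ground states $\phi_\epsilon\in\mathrm{Ran}\,\1_{[E,E+\epsilon)}(H)$ and the symmetric combination $b_+(f_n)+b_+^*(f_n)+b_-(g_n)+b_-^*(g_n)$ (which makes $\|\psi_{n,\epsilon}\|=1$ exactly), and invokes \cite[Lemma~2.1]{Takaesu2014} to get $\|b_\pm(f_n)\phi_\epsilon\|\to 0$ and weak nullity of the sequence; your variant with the exact ground state $\Phi$ and a single $b_+^*(g_n)$ is legitimate, since Theorem \ref{thm:GS} is established before Theorem \ref{thm:es}. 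The reverse inclusion is indeed trivial, as you say.

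However, the way you close the key estimate is flawed. For a Weyl sequence at \emph{fixed} coupling $g>0$ you must show $\|(H-E-\mu)\Psi_n\|\to 0$, hence you need $g\,\|[H_I,b_+^*(g_n)]\Phi\|\to 0$ as $n\to\infty$; it is not enough that this term ``stays bounded'' or is ``$O(1)$'', and shrinking $g_0$ cannot repair this because $g$ is fixed in the statement (you would only conclude that $E+\mu$ lies within $O(g)$ of the spectrum). Fortunately the term does vanish, by exactly the mechanism you sketch earlier and then abandon: $[H_I^{(j)},b_+^*(g_n)]=0$ because a pair of fermionic creation operators commutes with $b_+^*(g_n)$, and the nonzero commutators, coming from the adjoint monomials, have the form
\begin{equation*}
\int \big\langle F^{(j)}(\cdot,\xi_2,\xi_3),\,g_n\big\rangle_{\mathfrak{H}_c}\, b_-(\xi_2)\,a^\sharp(\xi_3)\, d\xi_2\, d\xi_3 ,
\end{equation*}
whose norm on $\Phi$ is controlled, via the $N_\tau$ estimates of \cite{GlimmJaffe}, by $\big(\int |\langle F^{(j)}(\cdot,\xi_2,\xi_3),g_n\rangle|^2 d\xi_2 d\xi_3\big)^{1/2}\,\|(N_{Z^0}+1)^{1/2}\Phi\|$. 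Cauchy--Schwarz against $g_n$ bounds this by $\|\1_{\mathrm{supp}\, g_n}(p_1)\,F^{(j)}\|_{L^2}$, which tends to $0$ because $F^{(j)}\in L^2$ and the shells shrink — equivalently because $g_n\rightharpoonup 0$, which is how Lemma \ref{lem:ess-spectrum-comm-estimates} in the paper phrases it, via dominated convergence. With this correction, and a short argument that $\|\Psi_n\|\to 1$ (i.e.\ $\|b_+(g_n)\Phi\|\to 0$) and that $\Psi_n\rightharpoonup 0$ (disjoint shells together with the uniform fermionic bound $\|b_+(g_n)\|\le\|g_n\|$ suffice), your proof goes through and coincides in substance with the paper's.
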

To prove Theorem \ref{thm:es}, we construct a Weyl sequence associated to $\lambda$, for any $\lambda \in [E,\infty)$, by adapting to our context arguments taken from \cite{DG}, \cite{Arai} and \cite{Takaesu2014}.

Our next spectral result concerns the absolute continuity of the spectrum of $H$ above the ground state energy $E$ and below the mass of $Z^0$. It is in fact related to the subsequent theorem on local decay, as we will explain later on.
\begin{theo}[Absolute continuity of the spectrum]\label{thm:absolute}
Suppose that Hypotheses~\ref{hypothesis-1} and \ref{hypothesis-2} hold. For all $\varepsilon > 0$, there exists $g_0 > 0$ such that, for all $0 \le g  \le g_0$, the spectrum of $H$ in $( E , E + m_{Z^0} - \varepsilon )$ is purely absolutely continuous.
\end{theo}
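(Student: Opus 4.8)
The plan is to deduce Theorem~\ref{thm:absolute} from a Mourre estimate together with a limiting absorption principle, following the strategy of \cite{FGS1} and \cite{ABFG}. First I would introduce a conjugate operator $A$ adapted to the two species of massless particles, namely the second quantization of the generator of dilations acting simultaneously on the neutrino and antineutrino one-particle spaces, i.e.\ (roughly) $A := \mathrm{d}\Gamma(a_{p_1}) \otimes \1 + \1 \otimes \mathrm{d}\Gamma(a_{p_2})$ acting on $\mathfrak{F}_D$ and extended trivially to $\mathcal{H}$, where $a_p = \tfrac12(p\cdot D_p + D_p \cdot p)$ is the radial dilation generator in the momentum variable $p$. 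Because the $Z^0$ particle is massive, it need not (and should not) be dilated; this is the analogue of the situation in \cite{FGS1}. One checks that $\mathrm{e}^{itA}$ preserves $\mathfrak{D}(H_0)$, that $H_0$ is of class $C^2(A)$ (or $C^\infty(A)$) with $[H_0, iA] = H_D$ in the sense of forms, and — using Hypothesis~\ref{hypothesis-2}, which is exactly what makes the commutators of $H_I$ with $A$ up to second order well defined — that $H$ is of class $C^2(A)$ as well, with $[gH_I, iA]$ bounded relatively to $H_0$ with a constant $O(g)$.

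Next I would establish the Mourre estimate on any interval $I \Subset (E, E+m_{Z^0})$: there exist $c_I > 0$, a compact operator $K_I$, and $g_0 > 0$ such that for $0 \le g \le g_0$,
\begin{equation*}
 \1_I(H)\, [H, iA]\, \1_I(H) \ge c_I\, \1_I(H) + K_I .
\end{equation*}
The heuristic is that $[H_0, iA] = H_D = H_{0,+} + H_{0,-}$, and on the range of $\1_I(H_0)$ below the boson mass the boson sector carries no energy, so all the available energy $H - E \approx H_D$ lives in the fermionic sectors where $\omega(p) = p$ has a strictly positive dilation derivative $= p$; thus $[H_0, iA] \ge H_D \ge (\text{distance to thresholds}) > 0$ up to the contribution of the vacuum, which is removed by the spectral projection onto $I \not\ni E$. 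The $O(g)$ contribution of $[gH_I, iA]$ is absorbed into $c_I$ for $g$ small. Making this rigorous requires a virial-type argument and a partition of the energy, and one must handle the thresholds generated by the $Z^0$ mass; here the novelty compared to \cite{ABFG} is bookkeeping two massless fields rather than one, which I expect to be the main technical obstacle — one needs $N_+$, $N_-$ estimates and a careful localization in the joint fermionic energy to show the boson sector genuinely decouples on $I$, and to verify the compactness of the error term $K_I$ using the relative compactness of $H_I$ and of lower-energy spectral projections (exploiting that $H$ has a ground state and, by Theorem~\ref{thm:es}, essential spectrum $[E,\infty)$, so that eigenvalues in $I$ can only accumulate at $E$).

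With a $C^2(A)$ (or $C^{1,1}(A)$) Mourre estimate in hand, the conclusion is standard: the Mourre theory (e.g.\ the abstract results of \cite{FGS1} or the standard references therein) yields that $H$ has no singular continuous spectrum in $I$ and only finitely many eigenvalues there, each of finite multiplicity, together with the limiting absorption principle
\begin{equation*}
 \sup_{\operatorname{Re} z \in I,\ \operatorname{Im} z \ne 0} \big\| \langle A \rangle^{-s} (H - z)^{-1} \langle A \rangle^{-s} \big\| < \infty , \qquad s > 1/2 .
\end{equation*}
To upgrade ``no singular continuous spectrum and finitely many eigenvalues'' to ``purely absolutely continuous,'' I would rule out embedded eigenvalues in $(E, E+m_{Z^0})$ by a standard virial argument combined with a positive-commutator/energy-localization bound: if $H\phi = \lambda\phi$ with $\lambda \in I$, then $\langle \phi, [H,iA]\phi\rangle = 0$ contradicts the strict positivity $c_I > 0$ in the Mourre estimate once one shows $\phi \in \mathfrak{D}(A)$ (regularity of eigenfunctions, obtained from the $C^2(A)$ property as in \cite{FGS1,ABFG}). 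Letting $I$ exhaust $(E, E+m_{Z^0}-\varepsilon)$ and noting that the smallness threshold $g_0$ can be chosen uniformly for $I \subset (E, E+m_{Z^0}-\varepsilon)$ (the constants $c_I$ degenerate only as $I \to \{E\}$ or $I \to \{E+m_{Z^0}\}$, both excluded) gives the stated result. The delicate point throughout, and the one I would spend the most care on, is the interplay between the two massless fermion fields and the massive boson field in the Mourre estimate and in the regularity of eigenfunctions — precisely the ``technical issues'' alluded to in the introduction.
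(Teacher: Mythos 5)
There is a genuine gap, and it sits exactly at the point you defer to an exhaustion argument: uniformity as the spectral interval approaches $E$. The interval in the statement is $( E , E + m_{Z^0} - \varepsilon )$, whose left endpoint is the ground state energy itself, so you must prove a Mourre estimate on intervals accumulating at $E$, with a coupling threshold $g_0$ depending only on $\varepsilon$. Your final step asserts that the constants degenerate ``only as $I \to \{E\}$ or $I \to \{E+m_{Z^0}\}$, both excluded'' --- but $I \to \{E\}$ is \emph{not} excluded: points arbitrarily close to $E$ lie in $( E , E + m_{Z^0} - \varepsilon )$ and must be covered. With the full second-quantized dilation generator, $E$ is a threshold (as the paper emphasizes), and on an interval at distance $\sigma$ from $E$ the best positive constant one can hope for in the Mourre estimate is of order $\sigma$; the $\mathcal{O}(g)$ contribution of $[ g H_I , iA ]$ can therefore not be absorbed uniformly as $\sigma \to 0$, and your argument only yields absolute continuity on $( E + \delta(g) , E + m_{Z^0} - \varepsilon )$ for some $g$-dependent $\delta(g) > 0$. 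This is precisely the difficulty the paper is organized around: it proves the Mourre estimate (Theorem~\ref{thm:mourre}) on the dyadic intervals $E + [ \rho \sigma_{n+1}/4 , \rho \sigma_n /3 ]$ with conjugate operators $A_{\sigma_n}$ in which the dilation generators are cut off to fermion momenta of size at most $\sigma_n$, and it obtains the lower bound $\rho\gamma\sigma_n/6$ uniformly in $n$ and in $g \le g_0$ by comparing $H$ with the infrared-cutoff Hamiltonians $H_n$, whose restrictions $K_n$ have a spectral gap of order $\sigma_n$ (Proposition~\ref{prop:spectral-gap}, proved by Pizzo's iterative method), together with the infrared bounds of Lemmas~\ref{lm:rel_Ntau}, \ref{lm:rel_Ntau2} and \ref{lm:acompleter2} and the uniform $C^2(A_{\sigma_n})$ regularity of Proposition~\ref{lm:regularity}. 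Absolute continuity on all of $( E , E + m_{Z^0} - \varepsilon )$ is then read off from the limiting absorption principle of Theorem~\ref{thm:LAP} applied on the union of these intervals.

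A secondary but real defect is the passage from a Mourre estimate with compact error $K_I$ to ``purely absolutely continuous''. With $K_I$ present, the virial identity for an embedded eigenfunction $\phi$ only gives $0 \ge c_I \| \phi \|^2 + \langle \phi , K_I \phi \rangle$, which is no contradiction; excluding eigenvalues requires a strict Mourre estimate. Moreover, in this massless model $H_I$ is not relatively compact with respect to $H_0$ (there is no mass gap in the neutrino fields, and operators of the form $a(f)(N+1)^{-1}$ are not compact on Fock space), so the construction of a compact remainder is itself not routine. The paper sidesteps both points: its Mourre estimate is strict, with no compact remainder, the positivity coming from the gap of the infrared-cutoff Hamiltonians rather than from compactness, and absolute continuity follows directly from the limiting absorption principle.
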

Let $\q_1 := i \nabla_{\p_1}$, respectively $\q_2 := i \nabla_{\p_2}$, denote the ``position'' operator for neutrinos, respectively for antineutrinos, and let $q_1 := | \q_1 |$, $q_2 := | \q_2 |$. To shorten the notations, we also set
\begin{equation*}
Q := \mathrm{d} \Gamma( q_1 ) + \mathrm{d} \Gamma( q_2 ).
\end{equation*}
Our main result is summarized in the following theorem.
\begin{theo}[Local decay and relaxation to the ground state] \label{thm:return}
Suppose that Hypotheses~\ref{hypothesis-1} and \ref{hypothesis-2} hold. For all $\varepsilon > 0$, there exists $g_0 > 0$ such that, for all $ 0 \le g  \le g_0$, $\chi \in C_0^\infty ( ( - \infty , m_{ Z^0 } - \varepsilon ) ; \R )$, $t \in \mathbb{R}$, $0 < s \le 1$ and $0 < \mu < s$,
\begin{align}
& \langle Q \rangle^{-s} e^{ - i t H }Ê\chi( H ) \langle Q \rangle^{-s} = e^{ - i t E }Ê\chi( E ) \langle Q \rangle^{-s} P_{\mathrm{gs}} \langle Q \rangle^{-s} + \mathrm{R}_0(t), \label{eq:uniflocdec}
\end{align}
where $P_{\mathrm{gs}}$ is the orthogonal projection onto the ground state of $H$ and $\mathrm{R}_0(t)$ is a bounded operator such that
\begin{equation*}
\| \mathrm{R}_0(t) \|Ê\le C_{s,\mu} \langle t \rangle^{-s + \mu} ,
\end{equation*}
with $C_{s,\mu}$ a positive constant only depending on $s$ and $\mu$. In particular, if $O$ is an observable ``localized'' in the position variable for the neutrinos and antineutrinos, in the sense that
\begin{equation*}
\big \| \langle Q \rangle^s O \langle Q \rangle^s \big \|Ê< \infty ,
\end{equation*}
and if $\phi$ is an initial state ``localized'' in position and energy, in the sense that $\phi = \chi( H ) \langle Q \rangle^{-s} \psi$ for some $\psi \in \mathcal{H}$ and $\chi \in C_0^\infty ( ( - \infty , m_{ Z^0 } - \varepsilon ) ; \R )$, then we have that
\begin{equation}
\langleÊ\phi , e^{ i t H }ÊO e^{ - i t H }Ê\phi \rangle = | \langle \varphi_{ \mathrm{gs}Ê} , \phi \rangle |^2 \langle \varphi_{ \mathrm{gs}Ê} , O \varphi_{ \mathrm{gs} } \rangle + \mathrm{R}_1(t) , \label{eq:returnunif}
\end{equation}
where $\varphi_{ \mathrm{gs}Ê}$ is a ground state of $H$, and with
\begin{equation*}
| \mathrm{R}_1(t) |Ê\le C_s \langle t \rangle^{-s + \mu}.
\end{equation*}
\end{theo}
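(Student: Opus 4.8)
The plan is to deduce Theorem~\ref{thm:return} from a \emph{uniform} (in energy) local decay estimate, which is the real content, and then extract \eqref{eq:returnunif} by an elementary unfolding. Concretely, I would first establish, under Hypotheses~\ref{hypothesis-1}--\ref{hypothesis-2} and for $g$ small depending on $\varepsilon$, the bound
\begin{equation*}
\big\| \langle Q \rangle^{-s} e^{-itH} \bar\chi( H ) \langle Q \rangle^{-s} \big\| \le C_{s,\mu} \langle t \rangle^{-s+\mu},
\end{equation*}
where $\bar\chi \in C_0^\infty((-\infty, m_{Z^0}-\varepsilon);\R)$ is supported away from $E$, i.e.\ $\bar\chi \equiv 0$ near $E$. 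This is where Mourre theory (Theorem~\ref{thm:absolute} and the limiting absorption principle behind it) enters: on any compact interval $[E+\delta, m_{Z^0}-\varepsilon]$ one has a Mourre estimate with a conjugate operator $A$ (a second-quantized dilation-type generator adapted to the two massless species, as in \cite{FGS1,ABFG}), giving $\langle A\rangle^{-s}$-local decay there; the point is then to upgrade $\langle A\rangle^{-s}$ to $\langle Q\rangle^{-s}$ and, crucially, to make the constant $\delta$-independent as $\delta\to 0$. The uniformity is achieved by a low-energy dyadic decomposition à la \cite{BH1,BH2,BF}: split $\bar\chi$ into pieces $\chi_n$ supported in annuli of size $2^{-n}$ above $E$, prove a local decay rate for each $\chi_n$ with a constant growing only polynomially in $2^n$ while the $\langle Q\rangle^{-s}$ factors contribute a gain $2^{-ns}$ (using that $\langle Q\rangle^{-s}\chi_n(H)$ is small in the appropriate sense because the emitted massless particles at energy $\sim 2^{-n}$ are spread over distances $\sim t\,2^{-n}$), and sum the geometric series. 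The more singular infrared behaviour here compared to \cite{BF} — two massless fields, and the form factor $F^{(j)}$ inheriting singularities from the spherical-wave kernels $\psi_\pm$ and the $\omega_3(k)^{-1/2}$ — is handled using the explicit low-energy estimates on the free Dirac spherical waves from Appendix~\ref{appendixA}, which control $\partial_{p_i}F^{(j)}$ and the commutators $[H_I, \mathrm{d}\Gamma(q_i)]$ near $p_i=0$.

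Granting the uniform estimate for $\bar\chi$ vanishing near $E$, I would prove \eqref{eq:uniflocdec} as follows. Write $\chi(H) = \chi(E) P_{\mathrm{gs}} + \big(\chi(H) - \chi(E)P_{\mathrm{gs}}\big)$. The contribution of the first term to $\langle Q\rangle^{-s} e^{-itH}\chi(H)\langle Q\rangle^{-s}$ is exactly $e^{-itE}\chi(E)\langle Q\rangle^{-s} P_{\mathrm{gs}}\langle Q\rangle^{-s}$, the leading term in \eqref{eq:uniflocdec}. For the remainder, note that $\chi(H)-\chi(E)P_{\mathrm{gs}} = \bar\chi(H)$ for a function $\bar\chi \in C_0^\infty((-\infty,m_{Z^0}-\varepsilon);\R)$ that vanishes at $E$ (and hence, by continuity and the fact that $E$ is an isolated eigenvalue with $(E, E+m_{Z^0}-\varepsilon)$ purely a.c.\ by Theorem~\ref{thm:absolute}, vanishes in a neighbourhood of $E$ once we first replace $\chi$ by a function that is \emph{locally constant} near $E$; this reduction costs only a harmless smoothing). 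Here one must check that $P_{\mathrm{gs}}$ maps $\mathcal D(\langle Q\rangle^s)$ to itself, i.e.\ that the ground state $\varphi_{\mathrm{gs}}$ lies in $\mathcal D(\langle Q\rangle^s)$ for $0<s\le 1$ — this follows from an a priori exponential/polynomial localization estimate on $\varphi_{\mathrm{gs}}$ of the type proved for such models (finite $\langle \varphi_{\mathrm{gs}}, Q\varphi_{\mathrm{gs}}\rangle$ from a commutator/IMS argument), so that $\langle Q\rangle^{-s}P_{\mathrm{gs}}\langle Q\rangle^{-s}$ is genuinely bounded. Applying the uniform estimate to $\bar\chi(H)$ then yields $\mathrm R_0(t) = \langle Q\rangle^{-s} e^{-itH}\bar\chi(H)\langle Q\rangle^{-s}$ with the stated bound.

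Finally, \eqref{eq:returnunif} is a direct consequence of \eqref{eq:uniflocdec}. With $\phi = \chi(H)\langle Q\rangle^{-s}\psi$ and $O$ satisfying $\|\langle Q\rangle^s O\langle Q\rangle^s\|<\infty$, insert $\langle Q\rangle^{-s}\langle Q\rangle^s$ around $O$ and use $e^{itH}\chi(H) = \chi(H)e^{itH}$ to write
\begin{equation*}
\langle \phi, e^{itH} O e^{-itH}\phi\rangle = \big\langle \psi, \big(\langle Q\rangle^{-s} e^{itH}\chi(H)\langle Q\rangle^{-s}\big)^* \langle Q\rangle^s O \langle Q\rangle^s \big(\langle Q\rangle^{-s} e^{-itH}\chi(H)\langle Q\rangle^{-s}\big)\psi\big\rangle .
\end{equation*}
Substituting \eqref{eq:uniflocdec} for each of the two propagator factors, the cross terms and the $\mathrm R_0$-$\mathrm R_0$ term are $O(\langle t\rangle^{-s+\mu})$ (the phases $e^{\pm itE}$ cancel in the main-main term and are bounded elsewhere, and $\|\langle Q\rangle^s O\langle Q\rangle^s\|$ absorbs the middle factor), while the main-main term gives $\chi(E)^2 \langle \psi, \langle Q\rangle^{-s}P_{\mathrm{gs}}\langle Q\rangle^{-s}\cdot\langle Q\rangle^s O\langle Q\rangle^s\cdot\langle Q\rangle^{-s}P_{\mathrm{gs}}\langle Q\rangle^{-s}\psi\rangle$, which collapses, using $P_{\mathrm{gs}} = |\varphi_{\mathrm{gs}}\rangle\langle\varphi_{\mathrm{gs}}|$ and $\phi = \chi(H)\langle Q\rangle^{-s}\psi$, to $|\langle\varphi_{\mathrm{gs}},\phi\rangle|^2 \langle\varphi_{\mathrm{gs}}, O\varphi_{\mathrm{gs}}\rangle$. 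I expect the main obstacle to be precisely the \emph{uniformity in energy} of the local decay near the bottom of the continuous spectrum: controlling the low-energy dyadic pieces requires delicate commutator estimates between $H_I$ and $\mathrm{d}\Gamma(q_i)$ that are uniform down to $p_i = 0$, and this is exactly where the two massless species and the singular spherical-wave form factors make the analysis harder than in \cite{BF} — the bookkeeping of the infrared powers, and verifying that the per-scale constants grow slowly enough to be beaten by the $2^{-ns}$ gain, is the crux.
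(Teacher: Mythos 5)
Your first and third paragraphs do follow the paper's architecture (scale-dependent conjugate operators and Mourre estimates near $E$, a low-energy decomposition in geometric annuli, conversion of the $\langle A\rangle^{-s}$ weights into $\langle Q\rangle^{-s}$ weights with a per-scale gain that beats the per-scale blow-up, then the straightforward unfolding of \eqref{eq:returnunif}). The genuine flaw is in your second paragraph: you claim $\chi(H)-\chi(E)P_{\mathrm{gs}}=\bar\chi(H)$ for some $\bar\chi\in C_0^\infty$ vanishing in a neighbourhood of $E$, on the grounds that $E$ is an isolated eigenvalue. It is not: by Theorem \ref{thm:es}, $\sigma_{\mathrm{ess}}(H)=[E,\infty)$, and by Theorem \ref{thm:absolute} the interval $(E,E+m_{Z^0}-\varepsilon)$ is filled with absolutely continuous spectrum. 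Hence, if $\chi(E)\neq 0$ and $\chi$ is constant near $E$, the operator $\chi(H)-\chi(E)P_{\mathrm{gs}}$ acts as $\chi(E)$ times the identity on the nontrivial spectral subspace of $(E,E+\delta]$, whereas $\bar\chi(H)$ vanishes there; no such identity can hold, and your reduction throws away exactly the energies just above $E$ whose uniform treatment is the content of the theorem. The correct route, which is what the paper does, is to apply the low-energy partition directly to $\chi$: write $\chi(\cdot+E)=\chi(E)\1_{\{0\}}+\sum_n\tilde\chi(\cdot+E)\varphi_{\sigma_n}$, keep the point mass at $E$ (which produces the $P_{\mathrm{gs}}$ term), and bound each piece $\langle Q\rangle^{-s}e^{-itH}\varphi_{\sigma_n}(H-E)\tilde\chi(H)\langle Q\rangle^{-s}$ by $C\sigma_n^{\mu}\langle t\rangle^{-s+\mu}$ before summing. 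You essentially have this machinery in your first paragraph, so the flaw is repairable, but as written the reduction fails, and the alternative limiting argument in $\delta$ (with uniform control of a $\delta$-dependent family $\bar\chi_\delta$ whose derivatives blow up) is neither formulated nor carried out.

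Two further remarks. First, the requirement that $\varphi_{\mathrm{gs}}\in\mathcal{D}(\langle Q\rangle^{s})$ is unnecessary: $\langle Q\rangle^{-s}$ is a bounded operator, so $\langle Q\rangle^{-s}P_{\mathrm{gs}}\langle Q\rangle^{-s}$ is bounded without any localization of the ground state, and the localization estimate you invoke is neither proved in this setting nor needed. Second, the mechanism producing the per-scale gain from the $\langle Q\rangle^{-s}$ weights is not the propagation heuristic you give (particles spread over distances of order $t\,2^{-n}$); it is a $t$-independent spectral fact: states in the range of $\varphi_{\sigma_n}(H_n-E_n)$ necessarily contain a soft (anti)neutrino of energy at most $\sigma_n$, and a second-quantized Hardy inequality (Lemma \ref{lm:Hardy}) then yields $\|\langle Q\rangle^{-1}\varphi_{\sigma_n}(H-E)\|\le C\sigma_n^{1/2}$ and $\|\langle Q\rangle^{-1}\varphi_{\sigma_n}(H-E)A_{\sigma_n}\|\le C\sigma_n^{1/2}$ (Propositions \ref{prop:prop1} and \ref{prop:prop2}), after comparing $\varphi_{\sigma_n}(H-E)$ with $\varphi_{\sigma_n}(H_n-E_n)$ at a cost $O(g\sigma_n^{1/2})$; interpolation, the decay $\langle\sigma_n t\rangle^{-s+\mu}$ of Theorem \ref{thm:LAP}, and the commutation bound of Lemma \ref{lm:lmunif} then give the summable bound. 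Since your sketch defers precisely this quantitative input as ``the crux'', the key estimate is missing there as well.
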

Our proof of Theorem \ref{thm:return} (and Theorem \ref{thm:absolute}) is inspired by arguments developed previously in \cite{FGS1}, \cite{ABFG} and \cite{BF}. The approach rests on Mourre's theory \cite{Mo}. As is now well-known, given a self-adjoint operator $H$ and another self-adjoint (``conjugate'') operator $A$, a Mourre estimate $\1_I( H ) [ H , i A ]Ê\1_I( H ) \ge c_0 \1_I( H )$, $c_0 > 0$, combined with a suitable notion of regularity of $H$ w.r.t. $A$ yields a limiting absorption principle for $H$ in $I$,
\begin{equation*}
\sup_{z \in \C, \mathrm{Re}(z) \in I, 0 < | \mathrm{Im}(z) | \le 1} \big \|Ê\langle A \rangle^{-s} ( H - z )^{-1}Ê\langle A \rangle^{-s} \big \|Ê< \infty ,
\end{equation*}
(and hence absolute continuity of the spectrum of $H$ in $I$) and the local decay property
\begin{equation}
\big \|Ê\langle A \rangle^{-s} e^{-itH} \chi (H)Ê\langle A \rangle^{-s} \big \|Ê\le C \langle t \rangle^{-s} , \label{eq:locdecay}
\end{equation}
for all $\chi \in C_0^\infty( I ; \mathbb{R} )$ and certain $s>0$.

In our setting, the main difficulty we encounter to prove a Mourre estimate for the total Hamiltonian $H$ in a spectral interval $I$ close to the ground state energy $E$ is due to the presence of massless particles. If one chooses $A$ as the second quantized generator of dilatations, $E$ becomes a ``threshold'' for $H$. The analysis of the spectral and dynamical properties of a self-adjoint operator near thresholds is generally a subtle problem.

In \cite{FGS1}, for the standard model of non-relativistic QED, Fr{\"o}hlich, Griesemer and Sigal proved a Mourre estimate in any spectral interval of the form $E + [ \alpha \sigma , \beta \sigma ]$, for some fixed $0 < \alpha < \beta$, and any $0 < \sigma < \sigma_0$. Important ingredients entering the proof in \cite{FGS1} are Pizzo's iterative perturbation theory \cite{Pizzo,BFP1} and a unitary Pauli-Fierz transformation that regularizes the standard model of non-relativistic QED in the infrared region.

In \cite{ABFG}, a modification of the method of \cite{FGS1} was proposed in order to cope with a model more singular in the infrared -- the model studied in \cite{ABFG} describes the decay of the intermediate vector bosons $W^\pm$ into the full family of leptons. We mention that yet another modification of the method of \cite{FGS1} was introduced in \cite{CFFS}, in order to prove the local decay property in the translation invariant standard model of non-relativistic QED at a fixed total momentum. In the present paper, since the interaction Hamiltonian of the model we consider is not regular enough in the infrared region to follow the proof of \cite{FGS1}, we proceed as in \cite{ABFG} to obtain a Mourre estimate. A substantial difference with \cite{ABFG}, however, is that we have to deal with two species of massless particles -- neutrinos and antineutrinos -- instead of one, which leads again to a few technical issues. The precise statement of the Mourre estimate that we prove is given in Theorem \ref{thm:mourre}.

In all the previously cited works, as well as in our paper, the positive constant $c_0$ of the Mourre estimate is proportional to the low-energy parameter $\sigma$, i.e. proportional to the distance from the ground state energy $E$ to the spectral interval $I$ under consideration. This in turn implies that the constant $C$ of the local decay property \eqref{eq:locdecay} depends on $\sigma$ (more precisely it can be verified that $C$ in \eqref{eq:locdecay} is of order $\mathcal{O}( \sigma^{-s} )$, see Theorem \ref{thm:LAP} for details). In our context, \eqref{eq:locdecay} can be interpreted as a statement about the propagation of neutrinos and antineutrinos for initial states in the range of $\chi( H ) \langle A \rangle^{-s}$. That the constant $C$ depends on $\sigma$ seems to suggest that, at least for such initial states, the speed of propagation of neutrinos and antineutrinos depends on $\sigma$, i.e. depends on the energy. Yet, in weak decays, neutrinos and antineutrinos can be considered as massless particles and in this paper, we indeed disregard their masses. Therefore it should be expected that, for any initial state, the speed of propagation of neutrinos and antineutrinos is energy-independent. Justifying this fact is one of our main achievements. Besides, apart from its physical relevance, an important consequence of having a uniform local energy decay is the property of relaxation to the ground state for localized observables and states, as stated in Theorem \ref{thm:return}.

In \cite{BF}, Bony and the second author adapted to the framework of Quantum Field Theory a method introduced in \cite{BH1,BH2} in order to justify that photons propagate at the speed of light in the standard model of non-relativistic QED, for any (localized) initial states with low-energy. In this paper, we follow the general strategy of \cite{BF}. The point is to establish that one can arrive at the desired uniform local energy decay by replacing the weights $\langle A \rangle^{-s}$ expressed in terms of powers of the conjugate operator, by weights $\langle Q \rangle^{-s}$ expressed in the (second quantized) position operators. To prove this, we use the localization in energy $\chi(H)$ and a second quantized version of Hardy's inequality.  Again, the fact that we are considering two different species of massless particles here leads to some technical difficulties compared to \cite{BF}. To overcome them, we use in particular the crucial property that neutrinos and antineutrinos are fermions -- not bosons. But the main difference with \cite{BF} comes from the already mentioned fact that the model we consider here is more singular in the infrared region than the standard model of non-relativistic QED (for which the Pauli-Fierz transformation can be applied). Also the structure of the interaction Hamiltonian in our setting is very different from the one in \cite{BF}. It must be handled differently, using for instance the $N_\tau$ estimates of \cite{GlimmJaffe} in a proper way. We therefore modify the proof of \cite{BF} in several places. The main novelties will be underlined in Section \ref{section:results}. A drawback of having a more singular interaction term is that the rate of decay of the remainder term $\mathrm{R}_0$ in the statement of Theorem \ref{thm:return} is slightly smaller than the one in \cite{BF}. This might be an artefact of the method, but we believe that this rate of decay could indeed be intimately related to the infrared behavior of the interaction Hamiltonian of the model.

To conclude this section, we mention that another choice of a conjugate operator, the second quantized generator of radial translations, has been used previously in the literature to deal with comparable problems; see \cite{GGM2,FMS}. This conjugate operator is not self-adjoint, and its commutator with the total Hamiltonian is not controllable by the Hamiltonian itself. Nevertheless, an abstract extension of the usual Mourre theory covering this framework, sometimes called singular Mourre theory, has been developed in \cite{GGM1}. An advantage of using the generator of radial translations as a conjugate operator is that it gives a Mourre estimate with a positive constant $c_0$ which is uniform in the distance from $E$ to the spectral interval $I$. But unfortunately, this choice of a conjugate operator is not possible in our concrete setting, unless one imposes an artificial infrared regularization in the interaction Hamiltonian. Otherwise the Hamiltonian is not regular enough for the Mourre theory to be applied.

The next two sections are devoted to the proofs of Theorem \ref{thm:sa}--\ref{thm:return}. In what follows, $C$ will stand for a positive constant independent of the parameters that may differ from one line to another.


\section{Self-adjointness, existence of a ground state and location of the essential spectrum}\label{section:S3}

\subsection{Proof of  Theorem~\ref{thm:sa}} The self-adjointness of $H$ is a straightforward consequence of the Kato-Rellich theorem together with the relative bound for the interaction given by Proposition~\ref{prop:relative-bound} below. We begin with a technical lemma.
\begin{lem}\label{lem:kernel-bound-1}
Suppose that Hypothesis~\ref{hypothesis-1} holds. For $p\in\R^+$, $j\in\N+\frac12$ and $\ell_j = j+\frac12$, let
\begin{equation}
 A(p, \gamma) := \frac{(2p)^{\ell_j}}{\Gamma(\ell_j)} \left( \int _0^\infty |f(r)| r^{2\ell_j} (1+p^2r^2)(1+r^2+r^4) d r \right)^\frac12 ,
\end{equation}
and
\begin{equation}
 \tilde{A}(p, \gamma) := \ell_j (\ell_j -1)\frac{(2p)^{\ell_j-2}}{\Gamma(\ell_j)} \left( \int _0^\infty |f(r)| r^{2(\ell_j-1)}  d r \right)^\frac12 .
\end{equation}
Then, for $h^{(l)}$ given by \eqref{eq:inter112c} and \eqref{eq:inter11}--\eqref{eq:inter33}, there exists a constant $C_{\mz}$ such that, for all $(\xi_1,\xi_2,\xi_3) = \left(\left(p_1, (j_1, m_{j_1}, \kappa_{j_1})\right); \left(p_2,(j_2,m_{j_2}, \kappa_{j_2})\right); \left(k,\lambda\right)\right)\in\Sigma\times\Sigma\times\Sigma_3$, we have that
\begin{equation*}
 | h^{(l)}(\xi_1,\xi_2,\xi_3)| \leq  C_{\mz} (|k|^2 + \mz^2)^\frac14 A(p_1, \ell_{j_1}) A(p_2, \ell_{j_2}) ,
\end{equation*}
for $l=1,2$. Moreover,
\begin{equation*}
  \left| \frac{\partial}{\partial p_i} h^{(l)}(\xi_1,\xi_2,\xi_3)\right| \leq  C_{\mz} (|k|^2 + \mz^2)^\frac14  p_i^{-1} (\ell_{j_i}+1) A(p_1, \ell_{j_1}) A(p_2, \ell_{j_2}) ,
\end{equation*}
for $l=1,2$ and $i=1,2$, and
\begin{equation*}
\begin{split}
&  \left| \frac{\partial^2}{\partial p_1\partial p_2} h^{(l)}(\xi_1,\xi_2,\xi_3)\right| \leq  C_{\mz} (|k|^2 + \mz^2)^\frac14 \prod_{i=1,2} \left(p_i^{-1} (\ell_{j_i}+1) A(p_i, \ell_{j_i})\right) , \\
& \left| \frac{\partial^2}{\partial p_1^2} h^{(l)}(\xi_1,\xi_2,\xi_3)\right| \leq  24 C_{\mz} (|k|^2 + \mz^2)^\frac14 \left(\tilde{A}(p_1, \ell_{j_1}) + (1+\ell_{j_1} + \ell_{j_1}^2) A(p_1, \ell_{j_1})\right) \\
 & \qquad \qquad \qquad \qquad\qquad \times p_2^{-1} A(p_2, \ell_{j_2}) , \\
& \left| \frac{\partial^2}{\partial p_2^2} h^{(l)}(\xi_1,\xi_2,\xi_3)\right|  \leq 24 C_{\mz} (|k|^2 + \mz^2)^\frac14 \left(\tilde{A}(p_2, \ell_{j_2}) + (1+\ell_{j_2} + \ell_{j_2}^2) A(p_2, \ell_{j_1})\right) \\
 &\qquad\qquad\qquad\qquad\qquad \times p_1^{-1} A(p_1, \ell_{j_2}) . \\
\end{split}
\end{equation*}
\end{lem}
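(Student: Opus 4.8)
The plan is to reduce everything to pointwise bounds on the free Dirac spherical waves $\psi_\pm(\xi,\cdot)$ and their $p$-derivatives up to order $2$, which are exactly the estimates collected in Appendix~\ref{appendixA}. First I would isolate the $\z0$-dependent prefactor in \eqref{eq:inter11}--\eqref{eq:inter33}: by definition $h^{(l)}(\xi_1,\xi_2,\xi_3)$ equals $\frac{\varepsilon_\mu(\xi_3)}{\sqrt{2\omega_3(k)}}$ times an $x$-integral of $f(|x|)\,\overline{\Psi}_+(\xi_1,x)\gamma^\mu(1-\gamma_5)\widetilde\Psi_-(\xi_2,x)\,\mathrm e^{\pm ik\cdot x}$ (sum over $\mu$), and the polarization vectors of the massive spin-$1$ field obey $|\varepsilon_\mu(k,\lambda)| \le \omega_3(k)/\mz$ (only the longitudinal mode grows, like $|k|/\mz$), so
\begin{equation*}
\frac{|\varepsilon_\mu(\xi_3)|}{\sqrt{2\omega_3(k)}} \le \frac{\sqrt{\omega_3(k)}}{\sqrt2\,\mz} = \frac{(|k|^2+\mz^2)^{1/4}}{\sqrt2\,\mz}.
\end{equation*}
Since $\gamma^0\gamma^\mu(1-\gamma_5)$ are fixed bounded $4\times4$ matrices, $\overline\Psi_+ = \psi_+^\dagger\gamma^0$, and $|\mathrm e^{\pm ik\cdot x}| = 1$, applying the Cauchy--Schwarz inequality to the $x$-integral with respect to the weight $|f(|x|)|$ gives
\begin{equation*}
|h^{(l)}(\xi_1,\xi_2,\xi_3)| \le C\,\mz^{-1}(|k|^2+\mz^2)^{1/4}\Big(\int_{\R^3}|f(|x|)|\,|\psi_+(\xi_1,x)|^2\,dx\Big)^{1/2}\Big(\int_{\R^3}|f(|x|)|\,|\widetilde\psi_-(\xi_2,x)|^2\,dx\Big)^{1/2}.
\end{equation*}
The $k$- and $x$-dependent factors do not depend on $p_1,p_2$, so differentiating under the integral sign (legitimate since $f$ is compactly supported and, by Appendix~\ref{appendixA}, $\psi_\pm$ is $C^2$ in $p$) and using the Leibniz rule, the derivatives $\partial_{p_i}h^{(l)}$, $\partial_{p_1}\partial_{p_2}h^{(l)}$ and $\partial_{p_i}^2h^{(l)}$ obey the same bound with $\psi_+(\xi_1,\cdot)$ and/or $\widetilde\psi_-(\xi_2,\cdot)$ replaced by the relevant $p_i$-derivatives. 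It therefore suffices to prove, for $|\alpha|\le 2$, radial estimates of the form $\big(\int_{\R^3}|f(|x|)|\,|\partial_p^\alpha\psi_\pm(\xi,x)|^2\,dx\big)^{1/2} \le C\,R_\alpha(p,\gamma)$, with $R_0 = A$, $R_1 = p^{-1}(\ell_j+1)A$ and $R_2 = \tilde A + (1+\ell_j+\ell_j^2)A$ (up to the numerical constants and $p_i^{-1}$ normalizations appearing in the statement); inserting these into the two displays above reproduces the Lemma, the bound for $\widetilde\psi_-$ being the same as that for $\psi_-$ since the estimates in Appendix~\ref{appendixA} are uniform in $m_j$ and in the sign of $\kappa_j$.

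For the zeroth-order radial estimate I would use the explicit form of $\psi_\pm(\xi,x)$ from Appendix~\ref{appendixA}: for $\xi = (p,(j,m_j,\kappa_j))$ and $r=|x|$, it is, up to coefficients of modulus $\le 1$, a linear combination of terms $p\,j_\ell(pr)\,Y(\hat x)$ with $\ell\in\{\ell_j-1,\ell_j\}$ and $Y$ a normalized spinor spherical harmonic, $\int_{\mathbb S^2}|Y|^2 = 1$. Integrating over the angular variables,
\begin{equation*}
\int_{\R^3}|f(|x|)|\,|\psi_\pm(\xi,x)|^2\,dx \le C\int_0^\infty|f(r)|\,r^2\,p^2\big(|j_{\ell_j}(pr)|^2 + |j_{\ell_j-1}(pr)|^2\big)\,dr.
\end{equation*}
The key input is the elementary bound $|j_\ell(t)| \le t^\ell/(2\ell+1)!!$ together with $(2\ell_j-1)!! \ge \Gamma(\ell_j)$, which yields $p\big(|j_{\ell_j}(pr)| + |j_{\ell_j-1}(pr)|\big) \le \Gamma(\ell_j)^{-1}(2p)^{\ell_j}\,r^{\ell_j-1}(1+pr)$, hence $p^2\big(|j_{\ell_j}(pr)|^2 + |j_{\ell_j-1}(pr)|^2\big) \le 2\,\Gamma(\ell_j)^{-2}(2p)^{2\ell_j}\,r^{2(\ell_j-1)}(1+p^2r^2)$; multiplying by $|f(r)|r^2$ and integrating gives exactly $2\,A(p,\gamma)^2$ (the factor $1+r^2+r^4\ge 1$ in $A$ is slack here, and is kept to absorb the extra powers of $r$ produced by differentiation).

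For the derivative estimates I would use $\partial_p\big(p\,j_\ell(pr)\big) = -\ell\,j_\ell(pr) + pr\,j_{\ell-1}(pr)$, which follows from the recursion $t\,j_\ell'(t) = t\,j_{\ell-1}(t) - (\ell+1)\,j_\ell(t)$, and iterate it once more for the second derivative. Each application of the recursion costs a factor $\le C(\ell_j+1)$ and at most one extra power of $r$ (harmlessly absorbed by the $1+r^2+r^4$ in $A$), and after extracting that power of $r$ the $1/t$ in the recursion becomes the explicit $p^{-1}$; a few small values of $\ell_j$ (where $j_{\ell-2}$ would carry a negative index) are treated directly. This gives the first-order estimate $R_1 = C\,p^{-1}(\ell_j+1)A$. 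For $\partial_p^2$, the genuinely most infrared-singular contribution — obtained by applying the $\ell/t$-type term twice, on the component of orbital momentum $\ell_j-1$ — behaves like $\ell_j(\ell_j-1)\,\Gamma(\ell_j)^{-1}(2p)^{\ell_j-2}\,r^{\ell_j-1}$, with no room left to absorb a further factor $(1+pr)$; its $L^2$ norm against $|f(r)|r^2\,dr$ is precisely $\tilde A(p,\gamma)$, while all the remaining terms are controlled by $C(1+\ell_j+\ell_j^2)\,A(p,\gamma)$ (after dividing by the appropriate power of $p$), and collecting numerical constants yields the overall factor $24$. Finally, for the mixed derivative $\partial_{p_1}\partial_{p_2}h^{(l)}$, the $p_1$-derivative falls on $\psi_+(\xi_1,\cdot)$ and the $p_2$-derivative on $\widetilde\psi_-(\xi_2,\cdot)$, so the Cauchy--Schwarz splitting factorizes into the product of the two first-order radial estimates, giving $\prod_{i=1,2}p_i^{-1}(\ell_{j_i}+1)A(p_i,\ell_{j_i})$; the pure second derivatives similarly combine one second-order estimate with one zeroth-order one, as in the statement.

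The main obstacle is not in this reduction but in the uniform-in-$\ell$ pointwise bounds for $j_\ell$, $j_\ell'$ and $j_\ell''$ proved in Appendix~\ref{appendixA}: the analytic content is elementary (the Bessel power series plus the three-term recursions), but the constants must be tracked precisely enough to produce exactly the stated $\ell_j$-dependence — in particular, to see that $\partial_p^2$ generates the genuinely new term $\tilde A$, with its bare weight $r^{2(\ell_j-1)}$ and prefactor $\ell_j(\ell_j-1)$, rather than merely a larger multiple of $A$. Everything else — the Cauchy--Schwarz splitting in $x$, the polarization bound, the angular integrations, and the Leibniz bookkeeping for the derivatives — is routine.
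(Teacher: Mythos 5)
Your proposal is correct and takes essentially the same route as the paper, whose own (two-line) proof consists precisely of combining the definition of $h^{(l)}$ as the $x$-integral in \eqref{eq:inter11}--\eqref{eq:inter33}, the behaviour of $\varepsilon_\mu(\xi_3)/\sqrt{2\omega_3(k)}$ (which is what produces the factor $(|k|^2+\mz^2)^{1/4}$ and $C_{\mz}$), and the pointwise bounds \eqref{eq:appA-1}--\eqref{eq:appA-3} of Appendix~\ref{appendixA} on the radial eigenfunctions and their first two $p$-derivatives, followed by exactly the Cauchy--Schwarz and Leibniz bookkeeping you describe. The only cosmetic difference is that you re-derive those radial estimates from the spherical-Bessel power-series bound and recursions, whereas the paper obtains them from the confluent-hypergeometric integral representation \eqref{eq:cdf} --- the same functions in the massless case --- so nothing of substance changes.
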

\begin{proof}
The estimates stated in the lemma are direct consequences of the definition of the functions $h^{(l)}$, the bounds \eqref{eq:appA-1}--\eqref{eq:appA-3} obtained in Appendix~\ref{appendixA}, and the fact that $\varepsilon_\mu(\xi_3) / \sqrt{2 \omega_3(k)}$ are bounded functions of $k$ and $\xi_3$.
\end{proof}
\begin{prop}\label{prop:relative-bound}
Suppose that Hypothesis~\ref{hypothesis-1} holds. There exist $C_{F^{(j)}}$ and $\tilde{C}_{F^{(j)}}$ ($j=1,2$), defined as
\begin{equation}\label{eq:defCF}
 C_{F^{(1)}} := 2 \left\| \frac{F^{(1)}}{\sqrt{\omega_3}}\right\|^2
 + \left\| \frac{F^{(1)}}{\sqrt{p_2 \omega_3}}\right\|^2 , \quad
 C_{F^{(2)}} := \left\| \frac{F^{(2)}}{\sqrt{p_2 \omega_3}}\right\|^2 ,
\end{equation}
and
\begin{equation}\label{eq:deftCF}
\begin{split}
\tilde{C}_{F^{(1)}} & := \max\left( \left\| \frac{F^{(1)}}{\sqrt{p_2}}\right\|^2
, \left\| \frac{F^{(1)}}{\sqrt{p_2\omega_3}}\right\|^2 \right), \\
\tilde{C}_{F^{(2)}} & := \max\left( 2 \left\| \frac{F^{(2)}}{\sqrt{\omega_3}}\right\|^2
 + \left\| \frac{F^{(2)}}{\sqrt{p_2 \omega_3}}\right\|^2
 , 2 \left\| F^{(2)}\right\|^2
 + \left\| \frac{F^{(2)}}{\sqrt{p_2}}\right\|^2\right),
\end{split}
\end{equation}
such that, for all $\epsilon>0$ and all $\psi\in\D(H_0)$, we have that
\begin{equation}\nonumber
\begin{split}
 \|H_I \psi\|^2 \leq
 4 \sum_{j=1,2}\Big( C_{F^{(j)}} \|(H_0+1)\psi\|^2
 + \tilde{C}_{F^{(j)}} \Big( (1+\epsilon) \|(H_0+1)\psi \|^2 + \frac{1}{4\epsilon} \|\psi\|^2 \Big)\Big).
\end{split}
\end{equation}
\end{prop}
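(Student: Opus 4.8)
The plan is to estimate $\|H_I\psi\|$ by splitting $H_I = H_I^{(1)} + H_I^{(2)} + (H_I^{(1)})^* + (H_I^{(2)})^*$ and bounding each of the four Wick monomials separately; the desired inequality then follows by the triangle inequality (absorbing numerical factors into the constant $4$ in front of the sum). For each monomial, the strategy is the standard one for such field-theoretic operators: pull one creation/annihilation operator through the integral, apply the canonical (anti)commutation relations, and use Cauchy–Schwarz to factor the resulting expression into the product of an $L^2$-norm of a weighted kernel $F^{(j)}$ and a quadratic form in the free energies $\omega(p_i)$ or $\omega_3(k)$ acting on $\psi$. Concretely, for $H_I^{(1)}(F^{(1)}) = \int F^{(1)}(\xi_1,\xi_2,\xi_3)\,b_+^*(\xi_1)b_-^*(\xi_2)a(\xi_3)\,d\xi_1 d\xi_2 d\xi_3$, one writes $\|H_I^{(1)}\psi\|^2 = \langle \psi, (H_I^{(1)})^* H_I^{(1)}\psi\rangle$, commutes the $a(\xi_3)$ past the $a^*(\xi_3')$ coming from the adjoint, and controls the "diagonal" (delta) term by $\|F^{(1)}/\sqrt{\omega_3}\|^2 \langle \psi, \mathrm{d}\Gamma(\omega_3)\psi\rangle \le \|F^{(1)}/\sqrt{\omega_3}\|^2 \|(H_0+1)\psi\|^2$ and the "off-diagonal" (normal-ordered) term similarly. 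The fermionic operators $b_\pm^\sharp$ are bounded in terms of the number operators $N_\pm$, and since the kernels are compactly supported in $p_1, p_2$ (Hypothesis~\ref{hypothesis-1}(iii)), one has $N_\pm \lesssim H_D \le H_0$ on the relevant subspace, which is how the weights $1/\sqrt{p_2}$ appear and get converted to powers of $(H_0+1)$.

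More precisely, the key mechanism producing the various terms in $C_{F^{(j)}}$ and $\tilde C_{F^{(j)}}$ is the choice, for each operator factor, of whether to estimate it "in $L^2$" (giving a norm of $F^{(j)}$ times a power of a number or energy operator on $\psi$) via a Cauchy–Schwarz with a weight $\omega_3^{\pm 1/2}$ or $p_2^{\pm 1/2}$, versus estimating it "pointwise/trivially" (giving a bare $\|F^{(j)}\|$ but a stronger operator on $\psi$). The two alternatives inside each $\max$ in \eqref{eq:deftCF} reflect exactly this dichotomy. For the annihilation monomial $H_I^{(1)}$ one can afford to put the weight $\omega_3^{-1/2}$ on the $a(\xi_3)$ (since $a^*a = \mathrm{d}\Gamma(1) \le \omega_3^{-1}\mathrm{d}\Gamma(\omega_3)$ on states with $\omega_3 \ge$ const — but here $\omega_3 \ge m_{Z^0}$, so this is automatic), whereas for the creation monomial $H_I^{(2)}$ one must also handle the "extra" $a^*$ term from pulling $a(\xi_3')$ through, which is why $\tilde C_{F^{(2)}}$ carries the $2\|F^{(2)}/\sqrt{\omega_3}\|^2$ and the $2\|F^{(2)}\|^2$ alternatives. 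The $\epsilon$-dependent splitting $(1+\epsilon)\|(H_0+1)\psi\|^2 + \tfrac{1}{4\epsilon}\|\psi\|^2$ arises from a Young-type inequality $\|N\psi\|\,\|\psi\| \le \epsilon\|N\psi\|^2 + \tfrac{1}{4\epsilon}\|\psi\|^2$ applied when one factor lands as $N_\pm$ (hence $\le H_0+1$) and the other as the identity.

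The main obstacle — and the reason the constants are split into $C_{F^{(j)}}$ versus $\tilde C_{F^{(j)}}$ rather than collapsed into a single expression — is the presence of \emph{two} massless fermionic species $b_+$ and $b_-$, whose dispersion relations $\omega(p_1) = p_1$ and $\omega(p_2) = p_2$ vanish at the origin, so that neither $N_+$ nor $N_-$ alone is bounded by $H_D$ without using the momentum cutoff. One has to be careful to convert number operators into energy operators using $N_\pm \le C\, \mathrm{d}\Gamma(\omega(p_\pm)) + (\text{finite multiple of }\mathds{1})$ only on the compact support of the kernels, and to keep track of which of $p_1$ or $p_2$ carries the weight at each step. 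In the displays above it is always $p_2$ (the antineutrino variable) that appears weighted; this is a bookkeeping choice tied to the order $b_+^* b_-^*$ in which the creation operators are written, and one checks that the fermionic anticommutation relations let one dispose of the $b_+$ factor with a bare bound (using $\|b_+(\xi_1)\| $-type estimates integrated against $F^{(j)}$) while the $b_-$ factor is the one paired with the state via Cauchy–Schwarz. Once this bookkeeping is fixed, each of the four terms is a routine $N_\tau$-type estimate in the spirit of \cite{GlimmJaffe}, and summing them with the triangle inequality and relabelling constants yields the stated bound.
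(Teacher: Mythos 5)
Your overall architecture coincides with the paper's: split $H_I$ into the four Wick monomials, treat the two monomials containing $a(\xi_3)$ by a weighted Cauchy--Schwarz in $\xi_3$ with weight $\omega_3^{-1/2}$ (giving the $C_{F^{(j)}}$ terms), and for the two monomials containing $a^*(\xi_3)$ pull $a$ through $a^*$ to produce a diagonal (delta) piece, handled by the Young-type $\epsilon$-splitting, plus a normal-ordered piece, which is exactly the origin of the two alternatives inside the $\max$ in \eqref{eq:deftCF}. This is the paper's proof, up to the paper packaging the fermionic part as the operators $B^{(j)}(\xi_3)$ and invoking Lemma~\ref{lem:bg4}.

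However, the step by which you convert the fermionic factors into powers of $(H_0+1)$ is wrong as written, and it is the one genuinely delicate point. You assert that, because the kernels are compactly supported in $(p_1,p_2,k)$ (Hypothesis~\ref{hypothesis-1}(iii)), one has $N_\pm \lesssim H_D$ ``on the relevant subspace'', and later that $N_\pm \le C\,\mathrm{d}\Gamma(\omega(p_\pm)) + C'$ ``on the compact support of the kernels''. No such inequality holds for massless particles: the compact set $K$ is not bounded away from $p=0$ (the model deliberately has no infrared cutoff), and a state of $n$ neutrinos each of momentum $\varepsilon/n$ has number $n$ but energy $\varepsilon$, so number is not controlled by energy plus a constant, with or without the support restriction. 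The correct mechanism, which is what Lemma~\ref{lem:bg4} encodes, never uses a fermionic number operator at all: one disposes of the $b_+$ factor by the boundedness $\|b_+(f)\|=\|f\|$, and then performs a Cauchy--Schwarz in $\xi_2$ weighted by $p_2$, so that the state is hit by $\mathrm{d}\Gamma(p_2)^{1/2}\le H_D^{1/2}$ directly while the kernel picks up the weight $p_2^{-1/2}$; the finiteness of $\|F^{(j)}/\sqrt{p_2}\|$ is then a property of the infrared decay of the kernel itself ($h^{(l)}\sim p^{\ell_j}$ with $\ell_j\ge 1$, Lemma~\ref{lem:kernel-bound-1} and Appendix~\ref{appendixA}), not a consequence of Hypothesis~\ref{hypothesis-1}(iii), which only controls the ultraviolet region. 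As it stands, your justification of precisely the infrared-sensitive step would fail; replacing it by the weighted Cauchy--Schwarz argument (or by citing the $N_\tau$ estimates in the form of Lemma~\ref{lem:bg4}) repairs the proof and brings it in line with the paper.
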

\begin{rema}
i) Lemma \ref{lem:kernel-bound-1} implies that the functions $F^{(j)}$ and $F^{(j)}/p_i$ are in $L^2$. Therefore, all constants defined in \eqref{eq:defCF} and \eqref{eq:deftCF} are finite.

ii) In the sequel, we use Proposition \ref{prop:relative-bound} for the kernels $F^{(j)}$ given by \eqref{eq:inter112c}, as well as for a localized version of these kernels, namely, for $\1_J F^{(j)}$ where $\1_J$ is the characteristic function of a set $J$. The result remains true in this case, replacing $F^{(j)}$ by $\1_J F^{(j)}$ in the definitions \eqref{eq:defCF} and \eqref{eq:deftCF}.

iii) One can indifferently replace $p_2$ by $p_1$ in the definitions of the constants $C_{F^{(j)}}$ and $\tilde{C}_{F^{(j)}}$.
\end{rema}
We establish Proposition \ref{prop:relative-bound} by combining the $N_\tau$ estimates of \cite{GlimmJaffe} and a strategy used in \cite{BDG, BDG2}. The arguments are fairly standard but we give details for the convenience of the reader.
\begin{proof}[Proof of Proposition~\ref{prop:relative-bound}]
For a.e.  $\xi_3\in\Sigma_3$, we define
\begin{equation*}
\begin{split}
 B^{(1)}(\xi_3)  & := - \int \overline{F^{(1)}(\xi_1, \xi_2, \xi_3)} b_+(\xi_1) b_-(\xi_2) d \xi_1 d \xi_2 , \\
 B^{(2)}(\xi_3)  & :=  \int F^{(2)}(\xi_1, \xi_2, \xi_3) b_+^*(\xi_1) b_-^*(\xi_2) d \xi_1 d \xi_2 .
\end{split}
\end{equation*}
Thus, according to \eqref{eq:inter11b}, we have
\begin{equation}
\begin{split}
 H_I^{(1)}  = \int B^{(1)}(\xi_3)^* \otimes a(\xi_3) d \xi_3  \quad \mathrm{and}\quad
 H_I^{(2)}  = \int B^{(2)}(\xi_3) \otimes a^*(\xi_3) d \xi_3  .
\end{split}
\end{equation}
In the rest of the proof, we establish a relative bound with respect to $H_0$ for each of the terms $\|H_I^{(1)}u\|$, $\|H_I^{(2)}u\|$, $\|(H_I^{(1)})^*u\|$ and $\|(H_I^{(2)})^*u\|$.

\noindent \textbf{Relative bound for $\| H_I^{(1)} u \|$}:
As in the proof of \cite[Lemma~5.1]{BDG2}, picking $u\in\D(H_0)$ and defining
\begin{equation*}
  \Phi(\xi_3) := \omega_3(k)^\frac12 \left( (\hD +1)^\frac12 \otimes a(\xi_3)\right) u,
\end{equation*}
we have
\begin{equation*}
 \int \| \Phi(\xi_3) \|^2 d\xi_3 = \big \| ( (\hD+1)^\frac12\otimes {\hZ0}^\frac12 )u \big \|^2,
\end{equation*}
and
\begin{align}
 & \left\| \int (B^{(1)}(\xi_3))^*\otimes a(\xi_3) d \xi_3 u \right\|^2 \notag \\
 &  \leq \left( \int \frac{1}{\omega_3(k)} \big \| (B^{(1)}(\xi_3)^* (\hD+1)^{-\frac12} \big \|^2_{\mathfrak{F}_D} d\xi_3 \right)
  \big  \| ( (\hD + 1)^\frac12 \otimes \hZ0^\frac12 )u \big \|^2 . \label{eq:prop3.4:1}
\end{align}
On the other hand, from Lemma~\ref{lem:bg4}, we obtain that
\begin{equation}\label{eq:prop3.4:2}
  \left\| (B^{(1)}(\xi_3))^*  (\hD+1)^{-\frac12} \psi \right\|^2
  \leq
  \left( 2 \| F^{(1)}(\cdot, \cdot, \xi_3)\|^2 + \|\frac{F^{(1)}(\cdot, \cdot, \xi_3)}{\sqrt{p_2}} \|^2\right)\|\psi\|^2.
\end{equation}
Thus, for  $C_{F^{(1)}} := 2 \left\| \frac{F^{(1)}}{\sqrt{\omega_3}}\right\|^2
 + \left\| \frac{F^{(1)}}{\sqrt{p_2 \omega_3} }\right\|^2$, the inequalities \eqref{eq:prop3.4:1} and \eqref{eq:prop3.4:2} give
\begin{align}
\big \| H_I^{(1)} u \big \| = \left\| \int (B^{(1)}(\xi_3))^* \otimes a(\xi_3) d \xi_3 u  \right\|^2
 & \leq C_{F^{(1)}} \left\| ( (\hD+1)^\frac12 \otimes \hZ0^\frac12 ) u \right\|^2 \notag \\
 & \leq  C_{F^{(1)}} \| (H_0+1) u \|^2 . \label{eq:prop3.4:3}
\end{align}

\noindent \textbf{Relative bound for $\| (H_I^{(2)})^* u\|$}:
Using again \cite[Lemma~5.1]{BDG2}, since
\begin{equation*}
 (B^{(2)} (\xi_3) )^* = - \int \overline{F^{(2)}(\xi_1, \xi_2, \xi_3)} b_+(\xi_1) b_-(\xi_2) d \xi_1 d \xi_2  ,
\end{equation*}
we obtain that
\begin{align}
 & \left\| \int (B^{(2)}(\xi_3))^* \otimes a(\xi_3) d\xi_3 u \right\|^2 \notag \\
 & \leq
 \left( \int \frac{1}{\omega_3(k)} \big \| (B^{(2)}(\xi_3))^* (\hD+1)^{-\frac12} \big \|^2_{\mathfrak{F}_D} d\xi_3 \right)
 \| ( (\hD+1)^\frac12 \otimes \hZ0^\frac12)u \|^2 . \label{eq:prop3.4:4}
\end{align}
Similarly to the above, using Lemma~\ref{lem:bg4} yields
\begin{equation*}
 \int \frac{1}{\omega_3(k)} \big \| (B^{(2)}(\xi_3))^* (\hD+1)^{-\frac12} \big \|^2_{\mathfrak{F}_D} d\xi_3
 \leq
 \int \frac{|F^{(2)}(\xi_2,\xi_2,\xi_3)|^2}{p_2 \omega_3(k)} d \xi_1 d \xi_2 d \xi_3 .
\end{equation*}
Therefore, for $C_{F^{(2)}} : = \left\| \frac{F^{(2)}}{\sqrt{ p_2 \omega_3}}\right\|^2 $,
we get with \eqref{eq:prop3.4:4},
\begin{equation}\label{eq:prop3.4:5}
  \big \| (H_I^{(2)})^* u \big \| = \left\| \int (B^{(2)}(\xi_3))^* \otimes a(\xi_3) d\xi_3 u \right\|^2  \leq C_{F^{(2)}} \| (H_0 +1) u \|^2 .
\end{equation}

\noindent \textbf{Relative bound for $\| (H_I^{(1)})^* u \|$}: We have
\begin{align*}
 \int B^{(1)}(\xi_3) \otimes a^*(\xi_3) d \xi_3 u = \int \Big( - \int & \overline{F^{(1)}(\xi_1,\xi_2,\xi3)} b_+(\xi_1) b_-(\xi_2) d \xi_1 d \xi_2 \Big) \\
 &  \otimes a^*(\xi_3) d\xi_3 u .
\end{align*}
Hence, using the canonical commutation relations, we obtain
\begin{equation}\label{eq:prop3.4:6}
\begin{split}
&  \left\| \int B^{(1)}(\xi_3) \otimes a^*(\xi_3) d\xi_3 u \right\|^2 \\
& = \int \la B^{(1)}(\xi_3)\otimes a(\xi_3') a^*(\xi_3)u, B^{(1)}(\xi_3')\otimes\1 u\ra d\xi_3d\xi_3' \\
& = \int \la B^{(1)}(\xi_3)\otimes \1 \delta(\xi_3' - \xi_3) u, B^{(1)}(\xi_3')\otimes\1 u\ra d\xi_3d\xi_3' \\
& \ \ + \int \la B^{(1)}(\xi_3)\otimes a^*(\xi_3) a(\xi_3') u, B^{(1)}(\xi_3')\otimes\1 u\ra d\xi_3d\xi_3' .
\end{split}
\end{equation}
The first term in the right hand side of \eqref{eq:prop3.4:6} is
\begin{equation}\label{eq:prop3.4:7}
\begin{split}
 &  \int \la B^{(1)}(\xi_3)\otimes \1 \delta(\xi_3' - \xi_3) u, B^{(1)}(\xi_3')\otimes\1 u\ra d\xi_3d\xi_3' \\
 & = \int \big \| B^{(1)}(\xi_3) \otimes \1 u \big \|^2d\xi_3 \\
 & = \int \big \| (B^{(1)}(\xi_3) (\hD+1)^{-\frac12}\otimes\1) ((\hD+1)^\frac12\otimes\1) u \big \|^2 d\xi_3 \\
 & \leq   \left(\int \big \| (B^{(1)}(\xi_3) (\hD+1)^{-\frac12}) \big \|^2 d \xi_3 \right)
\big \| ((\hD+1)^\frac12\otimes\1) u \big \|^2 \\
 & \leq C'_{F^{(1)}} \left( \epsilon \|(\hD+1)\otimes\1 u\|^2 + \frac{1}{4\epsilon} \|u\|^2 \right)  ,
 \end{split}
\end{equation}
with $C'_{F^{(1)}}:= \left\| \frac{F^{(1)}}{\sqrt{p_2}}\right\|^2$, and where we used Lemma~\ref{lem:bg4} in the last inequality.

The second term in the right hand side of \eqref{eq:prop3.4:6} can be rewritten as
\begin{equation}\label{eq:prop3.4:8}
\begin{split}
  & = \int \la B^{(1)}(\xi_3)\otimes a(\xi_3') u, B^{(1)}(\xi_3')\otimes  a(\xi_3) u\ra d\xi_3d\xi_3' \\
  & = \int \frac{1}{\sqrt{\omega_3(k)}} \frac{1}{\sqrt{\omega_3(k')}} \Big\la B^{(1)}(\xi_3) (\hD+1)^{-\frac12}\otimes\1 \Phi(\xi_3'), \\
  & \qquad \qquad \qquad \qquad \qquad \quad  B^{(1)}(\xi_3')  (\hD+1)^{-\frac12}\otimes\1 \Phi(\xi_3) \Big \ra d\xi_3d\xi_3' \\
  & \leq \left( \int \frac{1}{\sqrt{\omega_3(k)}} \big \| B^{(1)}(\xi_3) (\hD+1)^{-\frac12} \big \| \big \| \Phi(\xi_3) \big \| d\xi_3\right)^2 \\
  & \leq  \left( \int \frac{1}{\omega_3(k)} \| B^{(1)}(\xi_3) (\hD+1)^{-\frac12} \|^2 d\xi_3 \right)
  \int \|\Phi(\xi_3)\|^2 d\xi_3 \\
  & \leq  \left( \int \frac{1}{\omega_3(k)} \big \| B^{(1)}(\xi_3) (\hD+1)^{-\frac12} \big \|^2 d\xi_3 \right)
  \| (H_0 + 1) u \|^2 \\
  & \leq C''_{F^{(1)}} \| (H_0 + 1) u\|^2 , \\
 \end{split}
\end{equation}
where
\begin{equation*}
    C''_{F^{(1)}} = \left\| \frac{F^{(1)}}{\sqrt{p_2\omega_3}}\right\|^2 .
\end{equation*}
Thus, collecting \eqref{eq:prop3.4:7} and \eqref{eq:prop3.4:8} yields, for $\tilde{C}_{F^{(1)}} = \max
( C'_{F^{(1)}} , C''_{F^{(1)}}) $,
\begin{align}
\big \| (H_I^{(1)})^* u \big \|^2 &= \left\| \int B^{(1)}(\xi_3) \otimes a^*(\xi_3) d \xi_3 u \right\|^2  \notag \\
& \leq \tilde{C}_{F^{(1)}}  \left( (1+\epsilon) \| (H_0+1) u \|^2 + \frac{1}{4\epsilon} \|u\|^2 \right) . \label{eq:prop3.4:9}
\end{align}

\noindent \textbf{Relative bound for $H_I^{(2)}$}: With similar argument as above, we can prove
\begin{align}
\big \| H_I^{(2)} u \big \|^2 &= \left\| \int B^{(2)}(\xi_3) \otimes a^*(\xi_3) d\xi_3 \right\|^2 \notag \\
& \leq \tilde{C}_{F^{(2)}} \left( (1+\epsilon) \| (H_0+1) u\|^2 + \frac{1}{4\epsilon} \| u \|^2 \right), \label{eq:prop3.4:10}
\end{align}
where $\tilde{C}_{F^{(2)}} = \max\Big( 2 \left\| \frac{F^{(2)}}{\sqrt{\omega_3}}\right\|^2
 + \left\| \frac{F^{(2)}}{\sqrt{p_2 \omega_3}}\right\|^2
 , 2 \left\| F^{(2)}\right\|^2
 + \left\| \frac{F^{(2)}}{\sqrt{p_2}}\right\|^2\Big)$.

Collecting \eqref{eq:prop3.4:3}, \eqref{eq:prop3.4:5}, \eqref{eq:prop3.4:9} and \eqref{eq:prop3.4:10} concludes the proof of Proposition~\ref{prop:relative-bound}.
\end{proof}
%
%

\subsection{Proof of Theorem~\ref{thm:GS}}

To prove the existence of a ground state, we follow the strategy of \cite{BFS98} and \cite{BFP1,Pizzo} (see also \cite{AGG, BDG2}). The first step consists in proving the existence of a spectral gap for the Hamiltonian with neutrinos sharp infrared cutoffs.

Let $(\sigma_n)$ be a nonnegative decreasing sequence that tends to zero as $n$ tends to infinity.
Given $m<n$ two positive integers, let $\mathfrak{H}_{c,n} : = L^2([\sigma_n, \infty)\times\Gamma; \C^4)$ be the one neutrino space with momentum larger than $\sigma_n$, and
$\mathfrak{F}_{n} := \mathfrak{F}_{a}(\mathfrak{H}_{c,n})$ be its associated Fock space. Similarly, we define $\mathfrak{F}_{n}^m := \mathfrak{F}_{a}(L^2([\sigma_n, \sigma_m)\times\Gamma; \C^4) )$ and $\mathfrak{F}_\infty^n:=\mathfrak{F}_a(L^2([0, \sigma_n)\times\Gamma; \C^4)$. The Fock space $\mathfrak{F}_a$ for the neutrino is thus unitarily equivalent to $\mathfrak{F}_n \otimes
\mathfrak{F}_{n+1}^n \otimes \mathfrak{F}_{\infty}^{n+1}$. We then consider the total Fock space $\mathcal{H}_n$ with neutrino and antineutrino momenta larger than $\sigma_n$,
\begin{equation*}
 \mathcal{H}_n := \mathfrak{F}_n \otimes \mathfrak{F}_n \otimes \mathfrak{F}_{Z^0}.
\end{equation*}
The Hilbert space $\mathcal{H}_n$ identifies with a subspace of $\mathcal{H}$.

The neutrino infrared cutoff Hamiltonian is defined on $\mathcal{H}$ by
\begin{equation*}
 H_n : = H_{0} + g H_{I,n},
\end{equation*}
for all $n \in \mathbb{N}$, where the interaction part is
\begin{equation*}
 H_{I,n} := H_{I,n}^{(1)} + H_{I,n}^{(2)} + \mathrm{h.c.} ,
\end{equation*}
with
\begin{align*}
& H_{I,n}^{(1)}  := \int F^{(1)}(\xi_1, \xi_2,  \xi_3) \1_{\{ \sigma_n \leq p_1\}}(p_1)
 \1_{\{ \sigma_n \leq p_2\}}(p_2) b_+^*(\xi_1) b_-^*(\xi_2) a(\xi_3) d\xi_1d\xi_2 d\xi_3 , \\
&  H_{I,n}^{(2)}   : = \int F^{(2)}(\xi_1, \xi_2, \xi_3)
  \1_{\{ \sigma_n \leq p_1\}}(p_1)
 \1_{\{ \sigma_n \leq p_2\}}(p_2) b_+^*(\xi_1) b_-^*(\xi_2) a^*(\xi_3) d\xi_1d\xi_2 d\xi_3 .
\end{align*}
The restriction of $H_n$ to $\mathcal{H}_n$ is denoted by $K_n$. In other words,
\begin{equation*}
K_n : = H_{0,n} + g H_{I,n},
\end{equation*}
as an operator on $\mathcal{H}_n$, where the free part is here
\begin{equation}\nonumber
\begin{split}
 H_{0,n}  := & \int \omega(p_1) \1_{\{ \sigma_n \leq p_1\}}(p_1)
 b^*_+(\xi_1) b_+(\xi_1) d \xi_1 \\
 & + \int \omega(p_2) \1_{\{ \sigma_n \leq p_2\}}(p_2)
 b^*_-(\xi_2) b_-(\xi_2) d \xi_2 +\hZ0 .
\end{split}
\end{equation}
Observe that
\begin{equation*}
H_n = K_n \otimes \1_{\infty}^n \otimes \1_\infty^n + \1_n \otimes \check{H}^n_\infty ,
\end{equation*}
where $\1^n_\infty$ denotes the identity on $\mathfrak{F}_{\infty}^{n}$, $\1_n$ denotes the identity on $\mathcal{H}_{n}$ and
\begin{equation}\nonumber
\begin{split}
 \check{H}_\infty^n :=& \mathrm{d} \Gamma( \omega( p_1 ) ) \otimes \1^n_\infty + \1^n_\infty \otimes \mathrm{d} \Gamma ( \omega( p_2 ) ) \\
 =  & \int \1_{\{  p_1 \le \sigma_n\}}(p_1) \omega(p_1) b_+^*(\xi_1) b_+(\xi_1) d \xi_1
 \\
 & + \int \1_{\{  p_2 \le \sigma_n\}}(p_2) \omega(p_2) b_-^*(\xi_2) b_-(\xi_2) d \xi_2  ,\quad\mbox{on}\quad \mathfrak{F}_{\infty}^{n} \otimes \mathfrak{F}_{\infty}^{n} .
\end{split}
\end{equation}

To prove the spectral gap result, we also need the following  Hamiltonians,
\begin{equation*}
\tilde{K}_n := K_n\otimes \1_{n+1}^n \otimes\1_{n+1}^n + \1_n\otimes \check{H}_{n+1}^n \quad\mbox{on}\quad \mathcal{H}_n\otimes\left(\mathfrak{F}_{n+1}^n\otimes\mathfrak{F}_{n+1}^n\right) \simeq \mathcal{H}_{n+1},
\end{equation*}
where $\1_{n+1}^n\otimes\1_{n+1}^n$ denotes the identity on  $\mathfrak{F}_{n+1}^n\otimes \mathfrak{F}_{n+1}^n$, and with
\begin{equation}\nonumber
\begin{split}
 \check{H}_{n+1}^n : = & \int \1_{\{\sigma_{n+1}\leq p_1 < \sigma_n\}}(p_1) \omega(p_1) b_+^*(\xi_1) b_+(\xi_1) d \xi_1
 \\
 & + \int \1_{\{\sigma_{n+1}\leq p_2 < \sigma_n\}}(p_2) \omega(p_2) b_-^*(\xi_2) b_-(\xi_2) d \xi_2  ,\quad\mbox{on}\quad \mathfrak{F}_{n+1}^n\otimes \mathfrak{F}_{n+1}^n .
\end{split}
\end{equation}
We define
\begin{equation}\nonumber
\begin{split}
 H_{I,n+1}^{\ \ n} :=  H_{I,n+1}^{(1) \, n} +  H_{I,n+1}^{(2) \, n} + \mathrm{h.c.} ,
\end{split}
\end{equation}
where
\begin{equation}\nonumber
\begin{split}
 H_{I,n+1}^{(1)\, n} = &
  \int F^{(1)}(\xi_1, \xi_2, \xi_3)
 \big(
 \1_{\{\sigma_{n+1} \leq p_1 < \sigma_n, \sigma_n\leq p_2\}}(p_1, p_2) \\
 &
 + \1_{\{\sigma_{n+1} \leq p_1 < \sigma_n, \sigma_{n+1}\leq p_2<\sigma_n\}}(p_1, p_2)
 + \1_{\{\sigma_{n} < p_1 , \sigma_{n+1}\leq p_2 < \sigma_{n}\}}(p_1, p_2)
 \big) \\
 & b_+^*(\xi_1) b_-^*(\xi_2) a(\xi_3) d \xi_1 d \xi_2 d \xi_3 , \phantom{\int}
\end{split}
\end{equation}
and
\begin{equation}\nonumber
\begin{split}
 H_{I,n+1}^{(2)\, n} = &
  \int F^{(2)}(\xi_1, \xi_2, \xi_3)
 \big(
 \1_{\{\sigma_{n+1} \leq p_1 < \sigma_n, \sigma_n\leq p_2\}}(p_1, p_2) \\
 &
 + \1_{\{\sigma_{n+1} \leq p_1 < \sigma_n, \sigma_{n+1}\leq p_2<\sigma_n\}}(p_1, p_2)
 + \1_{\{\sigma_{n} < p_1 , \sigma_{n+1}\leq p_2 < \sigma_{n}\}}(p_1, p_2)
 \big) \\
 & b_+^*(\xi_1) b_-^*(\xi_2) a^*(\xi_3) d \xi_1 d \xi_2 d \xi_3 . \phantom{\int}
\end{split}
\end{equation}
With these definitions, we get
\begin{equation*}
K_{n+1} = \tilde{K}_n + g H_{I,n+1}^{\ \ n} ,
\end{equation*}
where the equality holds up to a unitary transformation.

Our first lemma gives a relative bound of $H_{I,n}$ w.r.t. $H_{0,n}$.
\begin{lem}\label{cor2:relative-bound}
Suppose that Hypothesis~\ref{hypothesis-1} holds. For a given $n \in \mathbb{N}$ and for $j\in\{1,2\}$, consider $C_{n,j}:=C_{\1_{\{p_1\geq \sigma_n; p_2\geq\sigma_n\}}F^{(j)}}$ and $\tilde{C}_{n,j}:=\tilde{C}_{\1_{\{p_1\geq \sigma_n; p_2\geq\sigma_n\}}F^{(j)}}$ obtained by replacing $F^{(j)}$ by $\1_{\{p_1\geq \sigma_n; p_2\geq\sigma_n\}}F^{(j)}$ in \eqref{eq:defCF} and \eqref{eq:deftCF} respectively. Then for all $\epsilon>0$ and all $\psi\in\D(H_{0,n})$, we have that
\begin{equation}\nonumber
\begin{split}
 & \|H_{I,n} \psi\|^2 \\
 & \leq
 4 \sum_{j=1,2}\Big( C_{n,j} \|(H_{0,n}+1)\psi\|^2
 + \tilde{C}_{n,j} \Big( (1+\epsilon) \|(H_{0,n}+1)\psi \|^2 + \frac{1}{4\epsilon} \|\psi\|^2 \Big)\Big) .
\end{split}
\end{equation}
\end{lem}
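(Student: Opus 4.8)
The plan is to observe that Lemma~\ref{cor2:relative-bound} is nothing but Proposition~\ref{prop:relative-bound} applied to a slightly different pair of operators, so the proof should be a short reduction rather than a new computation. First I would note that $H_{I,n}$ is exactly $H_I(\mathbf{F})$ in the notation of \eqref{eq:inter11b}--\eqref{eq:inter2244} with the kernels $F^{(j)}$ replaced by $\1_{\{p_1\geq\sigma_n;\,p_2\geq\sigma_n\}}F^{(j)}$; this is immediate from comparing the definitions of $H_{I,n}^{(1)}$, $H_{I,n}^{(2)}$ with \eqref{eq:inter11b}, \eqref{eq:inter2244}, since the characteristic functions $\1_{\{\sigma_n\leq p_1\}}\1_{\{\sigma_n\leq p_2\}}$ can be absorbed into the kernel. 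Part ii) of the Remark following Proposition~\ref{prop:relative-bound} explicitly says that the relative bound holds with $F^{(j)}$ replaced by $\1_J F^{(j)}$ for a characteristic function $\1_J$ of any set $J$, and here $J = \{p_1\geq\sigma_n\}\times\{p_2\geq\sigma_n\}\times\Sigma_3$, so the bound
\begin{equation*}
\|H_{I,n}\psi\|^2 \leq 4\sum_{j=1,2}\Big( C_{n,j}\|(H_0+1)\psi\|^2 + \tilde{C}_{n,j}\big( (1+\epsilon)\|(H_0+1)\psi\|^2 + \tfrac{1}{4\epsilon}\|\psi\|^2\big)\Big)
\end{equation*}
holds for all $\psi\in\D(H_0)$, with $C_{n,j}$ and $\tilde{C}_{n,j}$ as defined in the statement.

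The one point requiring care is that Lemma~\ref{cor2:relative-bound} is stated with $H_{0,n}$ in place of $H_0$ and with $\psi\in\D(H_{0,n})$. To bridge this I would use the tensor decomposition $H_n = K_n\otimes\1_\infty^n\otimes\1_\infty^n + \1_n\otimes\check H_\infty^n$ recorded just above, together with the corresponding factorization $\mathcal{H}\simeq\mathcal{H}_n\otimes(\mathfrak{F}_\infty^n\otimes\mathfrak{F}_\infty^n)$. Since $H_{I,n}$ acts as $gH_{I,n}|_{\mathcal{H}_n}\otimes\1$ under this identification and does not touch the low-momentum factor, and since $\check H_\infty^n\geq 0$ while $H_{0,n}\otimes\1 \leq H_0$ on the relevant domain (both free energies being nonnegative, with $H_0 = H_{0,n}\otimes\1 + \1\otimes\check H_\infty^n$), one has $\|(H_{0,n}+1)\psi\|\leq \|(H_0+1)\psi\|$ for $\psi$ in the appropriate domain, and the quadratic-form computation of Proposition~\ref{prop:relative-bound} goes through verbatim with $H_{0,n}$ replacing $H_0$ throughout. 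Concretely, it is cleanest to simply rerun the proof of Proposition~\ref{prop:relative-bound} with $H_{0,n}$ in place of $H_0$: the only facts used there are the expression of $H_I$ in terms of the $B^{(j)}(\xi_3)$, the $N_\tau$-type bound of Lemma~\ref{lem:bg4}, and the elementary estimate $\int\|\Phi(\xi_3)\|^2 d\xi_3 = \|((\hD+1)^{1/2}\otimes\hZ0^{1/2})u\|^2 \leq \|(H_0+1)u\|^2$, each of which is insensitive to replacing $\hD$ by its infrared-cutoff version (the cutoff only removes modes, hence can only decrease the relevant $L^2$ norms of the kernels and the free energies).

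The main (and essentially only) obstacle is bookkeeping: making sure that restricting the neutrino momenta to $p_i\geq\sigma_n$ genuinely corresponds to multiplying the kernels $F^{(j)}$ by the stated characteristic function, and that the constants $C_{n,j}$, $\tilde{C}_{n,j}$ one obtains from the proof are exactly those of \eqref{eq:defCF}--\eqref{eq:deftCF} evaluated on the truncated kernels. There is no new analytic content, so I would keep the write-up to a few lines: state that $H_{I,n} = H_I(\1_{\{p_1\geq\sigma_n\}}\1_{\{p_2\geq\sigma_n\}}F^{(j)})$, invoke Proposition~\ref{prop:relative-bound} together with part ii) of the Remark, and note that the proof is unchanged when $H_0$ is replaced by $H_{0,n}$ since $H_0 \ge H_{0,n}$ in the sense of forms on $\D(H_0)$ and $H_{I,n}$ commutes with the decomposition $\mathcal{H}\simeq\mathcal{H}_n\otimes(\mathfrak{F}_\infty^n\otimes\mathfrak{F}_\infty^n)$.
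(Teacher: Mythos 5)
Your final argument coincides with the paper's: the paper proves Lemma~\ref{cor2:relative-bound} by the one-line observation that the proof of Proposition~\ref{prop:relative-bound} applies verbatim after replacing $\mathcal{H}$ by $\mathcal{H}_n$, $H_0$ by $H_{0,n}$ and $H$ by $K_n$, which is exactly the ``rerun the proof with the truncated kernels $\1_{\{p_1\geq\sigma_n;\,p_2\geq\sigma_n\}}F^{(j)}$ and the cutoff free Hamiltonian'' route you settle on. One small caution: the intermediate inequality $\|(H_{0,n}+1)\psi\|\le\|(H_0+1)\psi\|$ points in the wrong direction for upgrading the $H_0$-bound of Remark~ii) to the stated $H_{0,n}$-bound (since $H_{0,n}\le H_0$ makes the claimed estimate the stronger one), so it is the rerun of the argument with $H_{0,n}$, $\mathcal{H}_n$ and the cutoff version of Lemma~\ref{lem:bg4} — and not that monotonicity — that actually carries the proof, as you correctly identify in your concluding paragraph.
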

\begin{proof}
The proof is exactly the same as the one of Proposition~\ref{prop:relative-bound}, replacing $\mathcal{H}$ by $\mathcal{H}_n$, $H_0$ by $H_{0,n}$ and $H$ by $K_n$. 
\end{proof}
We will also need the following relative bound.
\begin{lem}\label{lm:acompleter}
Suppose that Hypothesis~\ref{hypothesis-1} holds. There exist $a$, $\tilde{a}$, $b$ and $\tilde{b}$ depending on the functions $F^{(j)}$ ($j=1,2$) such that, for all $n\geq 0$ and all $\psi\in\D(H_0)$, we have that
\begin{equation}\label{eq:bnd-a-b}
\begin{split}
 & \|H_{I,n}^{\ \ n+1} \psi\| \leq (\sigma_n - \sigma_{n+1}) \left( a  \|H_{0,n+1} \psi\| + b \|\psi\| \right)
\end{split}
\end{equation}
and
\begin{equation}\label{eq:form-bnd-a-b}
\begin{split}
 & \left| \la H_{I,n}^{\ \ n+1} \psi, \psi\ra\right|
  \leq  (\sigma_n - \sigma_{n+1}) \left( \tilde{a} \la H_{0,n+1} \psi, \psi\ra
  + \tilde{b}\|\psi\| \right).
\end{split}
\end{equation}
\end{lem}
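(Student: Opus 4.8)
The plan is to prove Lemma \ref{lm:acompleter} by the same strategy as Proposition \ref{prop:relative-bound}, but now exploiting the crucial gain coming from the narrow momentum shell $\{\sigma_{n+1}\le p_i<\sigma_n\}$ that is present in at least one of the two fermionic variables in every term of $H_{I,n}^{\ \ n+1}$. First I would recall the structure: $H_{I,n}^{\ \ n+1}$ is a sum of Wick monomials of the same form as $H_I^{(1)}$, $H_I^{(2)}$ and their adjoints, but with kernels $\1_{\Delta_n}F^{(j)}$, where $\Delta_n$ is a union of three sets each of which forces $p_1$ or $p_2$ to lie in $[\sigma_{n+1},\sigma_n)$. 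Applying Proposition \ref{prop:relative-bound} (in the localized form noted in Remark (ii) after it, i.e. with $F^{(j)}$ replaced by $\1_{\Delta_n}F^{(j)}$, and using Remark (iii) to put the $p_1$ or $p_2$ weight wherever it is convenient) already gives a bound of the form $\|H_{I,n}^{\ \ n+1}\psi\|\le C(\|(H_{0}+1)\psi\|+\|\psi\|)$ with constants controlled by quantities like $\|\1_{\Delta_n}F^{(j)}/\sqrt{p_i\omega_3}\|^2$, $\|\1_{\Delta_n}F^{(j)}/\sqrt{p_i}\|^2$, $\|\1_{\Delta_n}F^{(j)}\|^2$. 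So the whole point is to show that each of these $L^2$-norms is $O(\sigma_n-\sigma_{n+1})$.

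The key estimate is therefore the following: for each of the three pieces of $\Delta_n$, one of the variables, say $p_1$, is constrained to $[\sigma_{n+1},\sigma_n)$, so
\begin{equation*}
\int \1_{\{\sigma_{n+1}\le p_1<\sigma_n\}}\, |G(p_1,\dots)|^2\, |h^{(j)}|^2\, \frac{d\xi_1 d\xi_2 d\xi_3}{(\text{weights})} \le \Big(\sup_{p_1\in[\sigma_{n+1},\sigma_n)} \big(\cdots\big)\Big)(\sigma_n-\sigma_{n+1}),
\end{equation*}
where the supremum is over the $p_1$-density of the remaining integrand. Concretely, Lemma \ref{lem:kernel-bound-1} gives $|h^{(j)}(\xi_1,\xi_2,\xi_3)|\le C_{\mz}(|k|^2+\mz^2)^{1/4}A(p_1,\ell_{j_1})A(p_2,\ell_{j_2})$, and from the definition of $A$ one reads off that $A(p_1,\ell_{j_1})^2$ is a continuous function of $p_1$ on $[0,\infty)$ that is bounded on the compact $p_1$-range imposed by Hypothesis \ref{hypothesis-1}(iii); moreover, dividing by $p_1$ (the weight needed, say, for the neutrino variable) only introduces a factor $p_1^{-1}$, and since $\ell_{j_1}\ge 1$ one has $A(p_1,\ell_{j_1})^2\lesssim p_1^{2\ell_{j_1}}\le p_1^2$ near $0$, so $A(p_1,\ell_{j_1})^2/p_1$ stays bounded near $p_1=0$ as well. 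Hence the $p_1$-density of every integrand occurring in the constants of Proposition \ref{prop:relative-bound} is uniformly bounded over $p_1\in[0,\sigma_0)$ (uniformly in $n$), and since the summation over angular quantum numbers $\gamma_1$ and the integration over $\xi_2,\xi_3$ are both restricted to compact/finite sets by Hypothesis \ref{hypothesis-1}, we get $\|\1_{\Delta_n}F^{(j)}/\sqrt{p_i}\|^2\le C(\sigma_n-\sigma_{n+1})$, and similarly with an extra $\omega_3^{-1}$ or without any $p$-weight. The three pieces of $\Delta_n$ each get the same treatment (in the middle piece both variables are in the shell, which only helps). This proves \eqref{eq:bnd-a-b} with $a,b$ given by suitable suprema of these $p$-densities, with $a$ collecting the $\|(H_0+1)\psi\|$-type terms and $b$ the $\|\psi\|$-type term (choosing, e.g., $\epsilon=1$ in Proposition \ref{prop:relative-bound}); one checks easily that $\|H_{0,n+1}\psi\|$ and $\|\psi\|$ control $\|(H_0+1)\psi\|$ up to harmless constants on the relevant subspace, so \eqref{eq:bnd-a-b} follows.

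For the quadratic-form bound \eqref{eq:form-bnd-a-b}, I would not go through a norm bound but estimate $|\langle H_{I,n}^{\ \ n+1}\psi,\psi\rangle|$ directly, pairing each creation operator with $\psi$ on one side. Writing a typical monomial as $\int \1_{\Delta_n}F^{(j)}\,b_+^*b_-^* a^\sharp$, one uses Cauchy--Schwarz in the form $|\langle b_+^*(\xi_1)b_-^*(\xi_2)a^\sharp(\xi_3)\psi,\psi\rangle|\le \|b_+(\xi_1)\cdots\psi\|\,\|\cdots\psi\|$ together with the standard $N_\tau$-type / pull-through estimates of \cite{GlimmJaffe} to transfer the annihilation operators onto number operators and hence onto $H_{0,n+1}$ (for the $b$'s) and onto $H_{Z^0}\le H_{0,n+1}$ (for the $a$'s); the massless $p_i^{-1}$ weights appear exactly as in the proof of Proposition \ref{prop:relative-bound} when one converts $\int p_i b_\pm^* b_\pm$ into a bound on $\int |F|^2 p_i^{-1}\cdots$. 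The shell factor $\1_{\Delta_n}$ again produces the overall $(\sigma_n-\sigma_{n+1})$ by the same supremum-times-length-of-interval argument. This gives \eqref{eq:form-bnd-a-b} with $\tilde a,\tilde b$ of the same nature as $a,b$. The main obstacle is bookkeeping rather than conceptual: one must (i) carefully verify that the $p_i^{-1}$ weights do not spoil the boundedness of the $p_i$-density of the integrand near $p_i=0$ — this is where $\ell_{j_i}\ge 1$, hence $A(p_i,\ell_{j_i})^2=O(p_i^2)$, is essential — and (ii) handle the three constituent sets of $\Delta_n$ and the four monomials (the two $H_I^{(l)}$ and their adjoints) uniformly, which is precisely why invoking Proposition \ref{prop:relative-bound} and its Remarks as a black box, rather than redoing the commutator computations, keeps the proof short.
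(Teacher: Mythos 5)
Your overall strategy coincides with the paper's: apply the relative bound of Proposition \ref{prop:relative-bound} to the kernels localized on the momentum shell and show that the resulting constants are small. But there is a genuine quantitative gap in the smallness you establish. You only prove $\|\1_{\Delta_n}F^{(j)}/\sqrt{p_i}\|^2 \le C(\sigma_n-\sigma_{n+1})$ (and similarly for the other weighted norms), by bounding the $p_i$-density by a constant and multiplying by the length of the shell. Since Proposition \ref{prop:relative-bound} bounds $\|H_{I,n}^{\ \ n+1}\psi\|^2$ by these \emph{squared} norms times $\|(H_{0}+1)\psi\|^2$-type quantities, taking the square root yields only a factor $(\sigma_n-\sigma_{n+1})^{1/2}$ in \eqref{eq:bnd-a-b}, not the linear factor $(\sigma_n-\sigma_{n+1})$ that the lemma asserts; the same loss propagates to your direct estimate of the form \eqref{eq:form-bnd-a-b}. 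This is not a cosmetic issue: in the induction step of Proposition \ref{prop:spectral-gap} the shift of the gap must be $O(g\,\sigma_n)$, and a $(\sigma_n-\sigma_{n+1})^{1/2}$ bound would make the gap close as $n\to\infty$. What is needed (and what the paper proves, see \eqref{eq:alpha_beta}) is that the squared localized norms are $O((\sigma_n-\sigma_{n+1})^{2})$. You actually have the missing ingredient in hand but do not use it: since $\ell_{j_i}\ge 1$, Lemma \ref{lem:kernel-bound-1} gives $A(p_i,\ell_{j_i})^2\lesssim p_i^{2\ell_{j_i}}\le C p_i^{2}$ on the compact momentum support, hence on the shell the relevant densities, with or without the $p_i^{-1}$ weight, are themselves $O(p_i)=O(\sigma_n)=O(\sigma_n-\sigma_{n+1})$; multiplying this by the interval length produces the second factor of $(\sigma_n-\sigma_{n+1})$, and the regions of $\Delta_n$ where the unconstrained variable runs over $[\sigma_n,\infty)$ are handled because the corresponding integral is finite by the $L^2$ bounds. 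With this correction your first part becomes the paper's proof; for \eqref{eq:form-bnd-a-b} the paper then simply invokes the standard fact that an operator bound implies a form bound with the same constants (\cite[Theorem~X.18]{RS}), which is shorter than your direct pairing argument, though the latter would also work once the kernel norms carry the right power.

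Two secondary inaccuracies should also be fixed. First, Hypothesis \ref{hypothesis-1}(iii) restricts only $(p_1,p_2,k)$ to a compact set; the sums over the angular quantum numbers $(j,m_j,\kappa_j)$ are infinite, and their convergence comes from the $1/\Gamma(\ell_j)$ decay in Lemma \ref{lem:kernel-bound-1} (this is how the paper gets $F^{(j)},\,F^{(j)}/\sqrt{p_i}\in L^2$), not from the hypothesis. Second, your claim that $\|H_{0,n+1}\psi\|+\|\psi\|$ controls $\|(H_0+1)\psi\|$ is false in general (states with many fermions of momentum below $\sigma_{n+1}$ have large $H_0$ but small $H_{0,n+1}$); the correct route is the one of Lemma \ref{cor2:relative-bound} and of \eqref{eq:bnd-alpha-beta}: because the localized kernels vanish for $p_i<\sigma_{n+1}$, the $N_\tau$-type estimates in the proof of Proposition \ref{prop:relative-bound} directly produce $H_{0,n+1}$ on the right-hand side, so no comparison with $H_0$ is needed.
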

\begin{proof}
A straightforward consequence of Proposition~\ref{prop:relative-bound} and its proof is that for
\begin{equation*}
\alpha_{n,j}:=C_{\1_{\{\sigma_n>p_1\geq \sigma_{n+1}; \sigma_n>p_2\geq\sigma_{n+1}\}}F^{(j)}} ,
\end{equation*}
and
\begin{equation*}
{\beta}_{n,j}:=\tilde{C}_{\1_{\{\sigma_n>p_1\geq \sigma_{n+1}; \sigma_n>p_2\geq\sigma_{n+1}\}}F^{(j)}} ,
\end{equation*}
as given by \eqref{eq:defCF} and \eqref{eq:deftCF} respectively, we have for all $\epsilon>0$ and all $\psi\in\D(H_{0,n})$,
\begin{align}
 \|H_{I,n}^{\ \ n+1} \psi\|^2  \leq 4 \sum_{j=1,2}\Big( & \alpha_{n,j} \|(H_{0,n+1}+1)\psi\|^2 \notag \\
& + {\beta}_{n,j} \Big( (1+\epsilon) \|(H_{0,n+1}+1)\psi \|^2 + \frac{1}{4\epsilon} \|\psi\|^2 \Big)\Big) . \label{eq:bnd-alpha-beta}
\end{align}
Now from Lemma~\ref{lem:kernel-bound-1}, we get that for all $i$ and $j$, the functions $F^{(j)}$ defined in \eqref{eq:inter112c},
and $F^{(j)}/\sqrt{p_i}$ are locally bounded and in $L^2$.
Since in addition the function $1/\omega_3$ is bounded, we derive from \eqref{eq:defCF} and \eqref{eq:deftCF}, applied to $\1_{\{\sigma_n>p_1\geq \sigma_{n+1}; \sigma_n>p_2\geq\sigma_{n+1}\}}F^{(j)}$ instead of $F^{(j)}$,  that there exists $c_j$, $\alpha$ and $\beta$ such that for all $n$,
\begin{equation}\label{eq:alpha_beta}
 \alpha_{n,j} \leq c_j \alpha (\sigma_n - \sigma_{n+1})^2
 \quad\mbox{and}\quad
 \beta_{n,j} \leq c_j \beta (\sigma_n - \sigma_{n+1})^2 .
\end{equation}
Equations \eqref{eq:bnd-alpha-beta} and \eqref{eq:alpha_beta} imply \eqref{eq:bnd-a-b}. Using \eqref{eq:bnd-a-b} and \cite[Theorem~X.18]{RS} yields \eqref{eq:form-bnd-a-b} and thus conclude the proof.
\end{proof}


Another related result, that will also be used in the proof of Theorem~\ref{thm:GS}, is as follows. We set
\begin{equation*}
g H_{I,\infty}^{\ \ n} := H - H_n ,
\end{equation*}
and we recall that
\begin{equation*}
 \check{H}_\infty^n = \mathrm{d} \Gamma ( \1_{[0,\sigma_n]}(p_1) p_1 ) + \mathrm{d} \Gamma ( \1_{[0,\sigma_n]}(p_2) p_2 ).
\end{equation*}
As before, the interaction term may be decomposed as
\begin{equation}\nonumber
\begin{split}
 H_{I,\infty}^{\ \ n} :=  H_{I,\infty}^{(1)\, n} +  H_{I,\infty}^{(2)\, n} + \mathrm{h.c.} ,
\end{split}
\end{equation}
where
\begin{equation}\nonumber
\begin{split}
 H_{I,\infty}^{(1)\, n} = &
  \int F^{(1)}(\xi_1, \xi_2, \xi_3)
 \big(
 \1_{\{ p_1 \le \sigma_n , \sigma_n\leq p_2\}}(p_1, p_2)  + \1_{\{ p_1 \le \sigma_n,  p_2 \le \sigma_n \}}(p_1, p_2) \\
 & \quad + \1_{\{ \sigma_{n} \le p_1 , p_2 \le \sigma_{n}\}}(p_1, p_2)
 \big)  b_+^*(\xi_1) b_-^*(\xi_2) a(\xi_3) d \xi_1 d \xi_2 d \xi_3 ,
\end{split}
\end{equation}
and
\begin{equation}\nonumber
\begin{split}
 H_{I , \infty }^{(2)\, n} = &
  \int F^{(2)}(\xi_1, \xi_2, \xi_3)
 \big(
 \1_{\{ p_1 \le \sigma_n, \sigma_n \leq p_2\}}(p_1, p_2)
 + \1_{\{ p_1 \le \sigma_n,   p_2 \le \sigma_n\}}(p_1, p_2) \\
& \quad + \1_{\{ \sigma_{n} \le p_1 , p_2 \le \sigma_{n}\}}(p_1, p_2)
 \big)  b_+^*(\xi_1) b_-^*(\xi_2) a^*(\xi_3) d \xi_1 d \xi_2 d \xi_3 .
\end{split}
\end{equation}
We have the following statement.
\begin{lem}\label{lm:rel_Ntau}
Suppose that Hypothesis \ref{hypothesis-1} holds. For $j = 1 , 2$, there exists $C_j > 0$ such that, for all $n \in \mathbb{N}$,
\begin{align}
\big \| (  \check{H}_\infty^n )^{- \frac12}  H_{I,\infty}^{(j) \, n} ( N_{ Z^0 } + 1 )^{-\frac12}  \big \| \le C_j \sigma_n . \label{eq:rel_Ntau}
\end{align}
\end{lem}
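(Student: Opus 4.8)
The plan is to reduce the bound \eqref{eq:rel_Ntau} to a pointwise kernel estimate, exactly as in the proof of Proposition~\ref{prop:relative-bound}, but keeping careful track of the infrared cutoff at scale $\sigma_n$. Write $H_{I,\infty}^{(j)\,n}$ in the form $\int B_\infty^{(j),n}(\xi_3)^\sharp \otimes a(\xi_3)^\sharp\, d\xi_3$, where $B_\infty^{(j),n}(\xi_3)$ collects the $b$-operators together with the cutoff function $\1_{\{p_1\le\sigma_n\text{ or }p_2\le\sigma_n\}}$ appearing in the definition of $H_{I,\infty}^{(j)\,n}$. Since $\check H_\infty^n = \mathrm{d}\Gamma(\1_{[0,\sigma_n]}(p_1)p_1)+\mathrm{d}\Gamma(\1_{[0,\sigma_n]}(p_2)p_2)$, applying $(\check H_\infty^n)^{-1/2}$ on the left plays the role that $(H_D+1)^{-1/2}$ played in Proposition~\ref{prop:relative-bound}, and $(N_{Z^0}+1)^{-1/2}$ on the right controls the bosonic creation/annihilation operator $a(\xi_3)^\sharp$ after integrating in $\xi_3$. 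The key point is that on the support of the cutoffs at least one of the momenta $p_1,p_2$ lies in $[0,\sigma_n]$, so dividing the kernel $F^{(j)}$ by $\sqrt{\check H_\infty^n}$ effectively divides by $\sqrt{p_i}$ with $p_i\le\sigma_n$ on that variable.

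Next I would carry out the estimate concretely. Using an $N_\tau$-type bound (as in Lemma~\ref{lem:bg4} and the proof of Proposition~\ref{prop:relative-bound}), one gets, for the $B$-operator acting on the relevant Fock space,
\begin{equation*}
\big\| (\check H_\infty^n)^{-\frac12} B_\infty^{(j),n}(\xi_3)^\sharp (N_{Z^0}+1)^{-\frac12}\cdots\big\| \lesssim \Big(\int \frac{|\1_{\{p_1\le\sigma_n\,\text{or}\,p_2\le\sigma_n\}}(p_1,p_2) F^{(j)}(\xi_1,\xi_2,\xi_3)|^2}{\min(\1_{[0,\sigma_n]}(p_1)p_1,\cdots)}\, d\xi_1 d\xi_2\Big)^{\frac12},
\end{equation*}
and then integrating over $\xi_3$ with the bosonic factor absorbed by $(N_{Z^0}+1)^{-1/2}$. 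So the whole matter reduces to showing that the resulting $L^2$-type norm of $F^{(j)}$ restricted to the infrared region and divided by the appropriate square root of the small momentum is $\mathcal{O}(\sigma_n)$. By Lemma~\ref{lem:kernel-bound-1}, $|F^{(j)}(\xi_1,\xi_2,\xi_3)| = |h^{(j)}G^{(j)}| \lesssim (|k|^2+\mz^2)^{1/4} A(p_1,\ell_{j_1})A(p_2,\ell_{j_2})$, with $G^{(j)}$ bounded and compactly supported in $(p_1,p_2,k)$ by Hypothesis~\ref{hypothesis-1}(iii). Since $A(p,\ell_j)\lesssim p^{\ell_j}\lesssim p$ for $p$ in a bounded set (the lowest value of $\ell_j$ being $1$, as $j\ge 1/2$), the square of the kernel carries a factor $p_i^2$ from the small variable, one power of which cancels the $p_i^{-1}$ weight from $(\check H_\infty^n)^{-1/2}$, leaving $\int_0^{\sigma_n} p_i\, dp_i = \sigma_n^2/2$ after integrating the small variable over $[0,\sigma_n]$; the remaining integrations over the other (bounded) variables and over $\xi_3$ are finite. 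Taking the square root gives the claimed bound $C_j\sigma_n$, uniform in $n$. One has to check the three pieces of the cutoff separately — the region $\{p_1\le\sigma_n,\, p_2\ge\sigma_n\}$, the region $\{p_1\le\sigma_n,\, p_2\le\sigma_n\}$ (where both weights help, so the bound is even better, of order $\sigma_n$ still), and the symmetric region $\{p_1\ge\sigma_n,\, p_2\le\sigma_n\}$ — but in each one at least one momentum is bounded by $\sigma_n$ and the corresponding weight $\1_{[0,\sigma_n]}(p_i)p_i$ is available in $\check H_\infty^n$, which is all that is needed.

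The main obstacle I anticipate is bookkeeping rather than conceptual: one must verify that on each of the three cutoff regions the operator $\check H_\infty^n$ genuinely dominates the needed single-mode energy $p_i$ for the small variable, i.e. that the $N_\tau$ estimate can indeed be applied with the weight $(\check H_\infty^n)^{-1/2}$ in place of $(H_D+1)^{-1/2}$ acting only on the low-momentum fibers. This is where the fermionic anticommutation relations and the precise tensor decomposition $\mathfrak{F}_a \simeq \mathfrak{F}_n\otimes\mathfrak{F}_{n+1}^n\otimes\mathfrak{F}_\infty^{n+1}$ must be used to make sense of $\check H_\infty^n$ acting on the correct factor, and care is needed because $\check H_\infty^n$ is not invertible on the vacuum of the low-momentum sector — but on that sector $B_\infty^{(j),n}(\xi_3)$ (which creates or annihilates at least one low-momentum neutrino/antineutrino) lands in the range of a spectral projection on which $\check H_\infty^n$ is bounded below by the corresponding low momentum, so the composition $(\check H_\infty^n)^{-1/2}B_\infty^{(j),n}(\xi_3)$ is well-defined and bounded. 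Once this structural point is settled, the estimate proceeds as in Proposition~\ref{prop:relative-bound}, and the $\sigma_n$ gain comes purely from the volume of the infrared region together with the vanishing of $A(p,\ell_j)$ at $p=0$.
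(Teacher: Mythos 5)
Your proposal is correct and follows essentially the same route as the paper: split $H_{I,\infty}^{(j)\,n}$ according to the three cutoff regions, apply the Glimm--Jaffe $N_\tau$ estimate so that the weighted kernel $p_i^{-1/2}\1_{[0,\sigma_n]}(p_i)F^{(j)}$ appears for the momentum that is forced into $[0,\sigma_n]$, use the Appendix~\ref{appendixA} bounds ($A(p,\ell_j)\lesssim p$ since $\ell_j\ge 1$) to get the $\mathcal{O}(\sigma_n)$ kernel norm, and dominate $\mathrm{d}\Gamma(\1_{[0,\sigma_n]}(p_i)p_i)$ by $\check H_\infty^n$, with the bosonic factor controlled by $(N_{Z^0}+1)^{-1/2}$ (the paper handles the $a^*$ case in $H_{I,\infty}^{(2)\,n}$ by the same pull-through $(N_{Z^0})^{-1/2}\mathrm{R}_2=\mathrm{R}_2(N_{Z^0}+1)^{-1/2}$ that your $\sharp$ notation implicitly covers). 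The structural point you flag about $(\check H_\infty^n)^{-1/2}$ being applied only to states containing a low-momentum fermion is exactly how the paper's use of $\mathrm{d}\Gamma(\1_{[0,\sigma_n]}(p_i)p_i)\le\check H_\infty^n$ is justified, so there is no gap.
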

\begin{proof}
The estimate \eqref{eq:rel_Ntau} is a consequence of the $N_\tau$ estimates of \cite{GlimmJaffe}. First we prove \eqref{eq:rel_Ntau} for $j=1$. Considering for instance the term
\begin{equation}\nonumber
\begin{split}
 \mathrm{R}_1 := &
  \int F^{(1)}(\xi_1, \xi_2, \xi_3)
 \1_{\{ p_1 \le \sigma_n , \sigma_n\leq p_2\}}(p_1, p_2) b_+^*(\xi_1) b_-^*(\xi_2) a(\xi_3) d \xi_1 d \xi_2 d \xi_3 ,
\end{split}
\end{equation}
occurring in the expression of $H_{I,\infty}^{(1)\, n}$, we obtain by \cite[Proposition 1.2.3]{GlimmJaffe} that
\begin{equation*}
\big \| \mathrm{d} \Gamma ( \1_{[0,\sigma_n]}(p_1) p_1 )^{- \frac12}  \mathrm{R}_1 ( N_{ Z^0 } )^{-\frac12}  \big \| \le C \big \|Êp_1^{-\frac12} \1_{[0,\sigma_n]}(p_1) F^{(1)}(\xi_1, \xi_2, \xi_3) \big \|.
\end{equation*}
The estimates of Appendix \ref{appendixA} show that
\begin{equation*}
\big \|Êp_1^{-\frac12} \1_{[0,\sigma_n]}(p_1) F^{(1)}(\xi_1, \xi_2, \xi_3) \big \| \le C \sigma_n ,
\end{equation*}
and therefore, since in addition $\mathrm{d} \Gamma ( \1_{[0,\sigma_n]}(p_1) p_1 ) \le  \check{H}_\infty^n$, it follows that
\begin{align*}
\big \| (  \check{H}_\infty^n )^{- \frac12}  \mathrm{R}_1 ( N_{ Z^0 } + 1 )^{-\frac12}  \big \| \le C_1 \sigma_n .
\end{align*}
The other terms in $H_{I,\infty}^{(1)\, n}$ are treated similarly.

To prove \eqref{eq:rel_Ntau} for $j=2$, we modify the argument as follows. Consider for instance the term $\mathrm{R}_2$ in $H_{I,\infty}^{(2)\, n}$ defined by
\begin{equation}\nonumber
\begin{split}
 \mathrm{R}_2 := &
  \int F^{(1)}(\xi_1, \xi_2, \xi_3)
 \1_{\{ p_1 \le \sigma_n , \sigma_n\leq p_2\}}(p_1, p_2) b_+^*(\xi_1) b_-^*(\xi_2) a^*(\xi_3) d \xi_1 d \xi_2 d \xi_3 ,
\end{split}
\end{equation}
Applying again \cite[Proposition 1.2.3]{GlimmJaffe} yields
\begin{equation*}
\big \| \mathrm{d} \Gamma ( \1_{[0,\sigma_n]}(p_1) p_1 )^{- \frac12} ( N_{ Z^0 } )^{-\frac12}  \mathrm{R}_2  \big \| \le C \big \|Êp_1^{-\frac12} \1_{[0,\sigma_n]}(p_1) F^{(1)}(\xi_1, \xi_2, \xi_3) \big \| \le C \sigma_n.
\end{equation*}
Since $( N_{ Z^0 } )^{-1/2} \mathrm{R}_2 = \mathrm{R}_2 ( N_{ Z^0 } + 1 )^{-1/2}$ and since $\mathrm{d} \Gamma ( \1_{[0,\sigma_n]}(p_1) p_1 ) \le  \check{H}_\infty^n$, we obtain
\begin{align*}
\big \| (  \check{H}_\infty^n )^{- \frac12}  \mathrm{R}_2 ( N_{ Z^0 } + 1 )^{-\frac12}  \big \| \le C_2 \sigma_n .
\end{align*}
We can argue in the same way for the other terms in $H_{I,\infty}^{(2)\, n}$. This concludes the proof.
\end{proof}
In the sequel, we use the following specific definition for $(\sigma_n)$. Let
\begin{equation}
 \sigma_0 := \mz ,\quad \sigma_{n+1} := \gamma\sigma_n ,\quad \gamma=\frac14 .
\end{equation}
\begin{prop}[Spectral gap]\label{prop:spectral-gap}
Suppose that Hypothesis~\ref{hypothesis-1} holds. There exist $\cg<\infty$ and $g_0>0$ such that for all $ 0\leq g \leq g_0$ and  all $n\in \N$,  $E_n:=\inf\sigma(K_n)$ is a simple isolated eigenvalue, with
\begin{equation*}
   \inf\left( \sigma(K_{n}) \setminus \{E_{n}\}\right) - E_{n} \geq (1-\cg g)\sigma_n .
\end{equation*}
\end{prop}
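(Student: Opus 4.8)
The plan is to prove Proposition \ref{prop:spectral-gap} by induction on $n$, following the iterative scheme of \cite{BFS98, BFP1, Pizzo} and adapting the argument used in \cite{ABFG}. The base case $n=0$ amounts to establishing a spectral gap for $K_0$. Since $\sigma_0 = \mz$, the free Hamiltonian $H_{0,0}$ has $0$ as a simple eigenvalue with eigenvector $\Omega_D \otimes \Omega_{\z0}$, and the rest of its spectrum lies in $[\mz,\infty)$ (the neutrino and antineutrino momenta are $\ge \sigma_0 = \mz$, and the $\z0$ dispersion relation is $\ge \mz$). By the relative bound of Lemma \ref{cor2:relative-bound} applied with $n=0$, for $g$ small enough $K_0$ is a small perturbation of $H_{0,0}$, and standard perturbation theory (analytic or via the Feshbach map) gives that $E_0 = \inf\sigma(K_0)$ is a simple isolated eigenvalue with gap at least $(1-\cg g)\sigma_0$, after possibly enlarging $\cg$.

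For the induction step, assume the statement holds for $n$. I would first compare $\tilde K_n$ with $K_n$: since $\tilde K_n = K_n \otimes \1 + \1_n \otimes \check H_{n+1}^n$ acting on $\mathcal H_n \otimes (\mathfrak F_{n+1}^n \otimes \mathfrak F_{n+1}^n)$, and since $\check H_{n+1}^n \ge 0$ has $0$ as a simple eigenvalue (the vacuum of $\mathfrak F_{n+1}^n \otimes \mathfrak F_{n+1}^n$) with the rest of its spectrum in $[\sigma_{n+1},\infty)$, it follows that $\inf\sigma(\tilde K_n) = E_n$ with the same eigenvector (tensored with the vacuum), and the gap of $\tilde K_n$ above $E_n$ is at least $\min\big( (1-\cg g)\sigma_n, \sigma_{n+1} \big) = \sigma_{n+1}$ by the choice $\gamma = 1/4$ (so that $\sigma_{n+1} = \sigma_n/4 \le (1-\cg g)\sigma_n$ for $g$ small). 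Then, since $K_{n+1} = \tilde K_n + g H_{I,n+1}^{\ \ n}$ up to unitary equivalence, I would control the perturbation $g H_{I,n+1}^{\ \ n}$ using Lemma \ref{lm:acompleter}, which bounds it (in operator and form sense) by $(\sigma_n - \sigma_{n+1})$ times a relatively bounded perturbation of $H_{0,n+1}$, hence a fortiori of $\tilde K_n + \mathrm{const}$. A Feshbach-type or second-order perturbation argument then shows that $E_{n+1} = \inf\sigma(K_{n+1})$ stays a simple isolated eigenvalue and that its gap shrinks by at most $O(g(\sigma_n - \sigma_{n+1})) = O(g\sigma_n)$, so the gap of $K_{n+1}$ above $E_{n+1}$ is at least $\sigma_{n+1} - C g \sigma_n = \sigma_{n+1}(1 - 4Cg) \ge (1-\cg g)\sigma_{n+1}$, provided $\cg$ is chosen large enough (e.g. $\cg \ge 4C$) and uniformly in $n$.

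The key point making the induction close is that all constants in Lemmas \ref{cor2:relative-bound} and \ref{lm:acompleter} are uniform in $n$ — this is exactly what the $L^2$ integrability of $F^{(j)}$ and $F^{(j)}/\sqrt{p_i}$ from Lemma \ref{lem:kernel-bound-1} buys us, together with the fact that the relevant $L^2$ norms over the shells $\{\sigma_{n+1}\le p_i < \sigma_n\}$ are $O((\sigma_n-\sigma_{n+1})^2)$ as recorded in \eqref{eq:alpha_beta}. I expect the main obstacle to be the bookkeeping in the perturbative/Feshbach step: one must track carefully that the ground state energy shift $E_{n+1} - E_n$ and the reduction of the gap are both controlled by $g$ times the \emph{same} small parameter $\sigma_n - \sigma_{n+1}$ (not merely by $g\sigma_n$ with a constant that could degrade along the iteration), so that summing the geometric series $\sum_n (\sigma_n - \sigma_{n+1})$ stays finite and the constant $\cg$ can be frozen once and for all. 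The presence of two species of massless fermions rather than one enters here only through the sum $\sum_{j=1,2}$ in the relative bounds and does not change the structure of the argument, though it does require invoking the $N_\tau$ estimates of \cite{GlimmJaffe} (as in Lemma \ref{lm:rel_Ntau}) to handle the infrared tails cleanly.
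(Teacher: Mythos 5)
Your proposal is correct and follows essentially the same route as the paper: induction on $n$, with the base case reduced to the gap $\sigma_0$ of the free Hamiltonian via the relative bound of Lemma \ref{cor2:relative-bound}, and the induction step exploiting the tensor structure of $\tilde K_n$ (gap $\sigma_{n+1}$ above $E_n$) together with the form bound of Lemma \ref{lm:acompleter}, whose factor $\sigma_n-\sigma_{n+1}\sim\sigma_{n+1}$ keeps the relative loss $\cg g$ uniform in $n$ so that $\cg$ can be fixed once and for all. The paper implements the perturbative step by an explicit min-max argument with the ground state of $\tilde K_n$ as test vector (using $E_{n+1}\le E_n\le 0$) rather than a Feshbach map, but this is the same mechanism you describe.
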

\begin{proof}
First we claim that for all $n \in \N$, $E_{n+1} \le E_n$. Indeed, let $1>\epsilon>0$, and $\psi_n\in\mathcal{H}_n$ be an $\epsilon$-approximate ground state for the ground state energy $E_n$ of $H_n$, namely, $\|\psi_n\|=1$ and
\begin{equation*}
 E_n \leq \la \psi_n, K_n \psi_n\ra \leq E_n +\epsilon.
\end{equation*}
Let
\begin{equation*}
\tilde{\psi}_n := \psi_n\otimes (\Omega_{n+1}^n\otimes \Omega_{n+1}^n) \in \mathcal{H}_{n+1},
\end{equation*}
where $\Omega_{n+1}^n$ is the vacuum state in $\mathfrak{F}_{n+1}^n$. Then $\| \tilde{\psi}_{n}\|=1$ and
\begin{align}
 E_{n+1} & \leq \la \tilde{\psi}_n, K_{n+1} \tilde{\psi}_n \ra \notag \\
                 & = \la \tilde{\psi}_n , (K_n\otimes \1_{n+1}^n\otimes \1_{n+1}^n)\tilde{\psi}_n \rangle
                  + \la \tilde{\psi}_n, \1_n \otimes \check{H}_{n+1}^n \tilde{\psi}_n\ra
                  + g \la \tilde{\psi}_n,    H_{I,n+1}^{\ \ n}\tilde{\psi}_n\ra \notag \\
                  & \leq E_n +\epsilon , \label{eq:st1-1}
\end{align}
since by construction, $ \la \tilde{\psi}_n, \1_n \otimes \check{H}_{n+1}^n \tilde{\psi}_n\ra = \la \tilde{\psi}_n,    H_{I,n+1}^{\ \ n}\tilde{\psi}_n\ra=0$. Taking the limit $\epsilon$ to $0$ implies
\begin{equation}\label{eq:st1-2}
E_{n+1}\leq E_n .
\end{equation}

Now, using a min-max principle, we prove by induction the existence of a unique ground state for $H_n$ and the existence of a spectral gap. \\

\noindent \textbf{Basis}: For $n=0$, we have
\begin{equation*}
K_{n=0} = H_{0,n=0} + g H_{I,0}.
\end{equation*}
Moreover, $\sigma(H_{0,n=0}) \subset\left( \{0\} \cup [\sigma_0, \infty ) \right)$. For $\tilde{a}$ and $\tilde{b}$ are given by Lemma~\ref{lm:acompleter}, let
\begin{equation*}
 \cg := (\sigma_0\tilde{a} + 4 \tilde{b}) .
\end{equation*}
Then using the norm relative bounds of $H_{I,0}$ with respect to $H_{0, n=0}$ (see Lemma~\ref{cor2:relative-bound}), there exists $g_0>0$ such that for all $0 \le g \leq g_0$,
\begin{equation*}
 \inf\left( \sigma(K_{n=0}) \setminus \{E_{n=0}\}\right) - E_{n=0} \geq (1-\cg g)\sigma_0 ,
\end{equation*}
which proves existence of a ground state for $H_{n=0}$.

Now we have
\begin{equation}\nonumber
\begin{split}
\lambda_0 & := \sup_{\phi\in\mathcal{H}_0, \phi\neq 0}\quad \inf_{\psi\in\mathfrak{D}(K_{n=0}), \la\psi,\phi\ra=0, \|\psi\|=1} \la \psi, K_{n=0} \psi \ra \\
& \geq
\inf_{\psi\in\mathfrak{D}(K_{n=0}), \la\psi,\Omega_{n=0}\otimes\Omega_{n=0}\otimes\Omega_{Z^0}\ra=0, \|\psi\|=1} \la \psi, K_{n=0} \psi \ra ,
\end{split}
\end{equation}
where $\Omega_{n}$ is the vacuum state in $\gF_n$. Thus, from the relative bound in Lemma~\ref{cor2:relative-bound}, we obtain that there exists a constant c such that for all $g\geq 0$ small enough,
\begin{equation*}
 \lambda_0 > \sigma_0 - c g (\sigma_0 + 1) .
\end{equation*}
For all $g\geq 0$ small enough, we thus have
\begin{equation*}
 \lambda_0 >0  \geq E_{n=0},
\end{equation*}
which proves nondegeneracy of the ground state of $H_{n=0}$ and thus concludes the proof for the step $n=0$. \\

\noindent \textbf{Induction}: Assume for a given $n$ that $E_n$ is a simple isolated eigenvalue of $K_n$, and that the size of the gap is such that
\begin{equation}\label{eq:ind-1}
  \inf\left( \sigma(K_{n}) \setminus \{E_{n}\}\right) - E_{n} \geq (1-\cg g)\sigma_n .
\end{equation}

Recall that $\tilde{K}_n = K_n \otimes\1_{n+1}^n + \1_n\otimes \check{H}_{n+1}^n$. By the induction assumption \eqref{eq:ind-1}, $\{E_n\} \subset \sigma (K_n \otimes \1_{n+1}^n\otimes \1_{n+1}^n) \subset
\left(  \{ E_n \} \cup [E_n + (1-\cg g )\sigma_n, \infty)\right)$. Obviously, $\sigma(\1_n\otimes \check{H}_{n+1}^n) = \{ 0 \} \cup [\sigma_{n+1}, \infty)$.
Moreover, for all $0\leq g \leq 1/(2(\sigma_0\tilde{a} + 4\tilde{b}))$, we have
$\sigma_{n+1} = \gamma\sigma_n \leq (1-\cg g )\sigma_n$. Hence we obtain
\begin{equation}\label{eq:ind-2}
  \{E_n\} \subset \sigma(\tilde{K}_n) \subset (\{E_n\} \cup [E_n + \sigma_{n+1}, +\infty)) .
\end{equation}
We next apply min-max perturbation argument to estimate the maximum left shift of $[E_n + \sigma_{n+1}, \infty)$ under the perturbation $g H_{I,n+1}^{\ \ n}$. Let $\lambda_k : = \inf \left( \sigma(K_k) \setminus \{E_k\}\right)$. Then
\begin{equation*}
 \lambda_{n+1} = \sup_{\psi\in \mathcal{H}_{n+1},\psi\neq 0}\ \ \inf_{\la \phi,\psi\ra= 0, \phi\in\D(K_{n+1}), \|\phi\|=1} \ \la \phi, K_{n+1} \phi\ra .
\end{equation*}
Picking $\psi = \tilde{\psi}_{n+1}$ as the unique ground state of $\tilde{K}_n = K_n \otimes \1_{n+1}^n \otimes\1_{n+1}^n
+ \1_n\otimes\check{H}^n_{n+1}$ associated with the eigenvalue $E_n = \inf \sigma(\tilde{K}_n)$, we get
\begin{equation}\label{eq:ind-3}
 \lambda_{n+1} \geq \inf _{\la \phi,\tilde{\psi}_{n+1}\ra = 0, \phi\in\D(K_{n+1}), \|\phi\|=1} \ \la \phi, K_{n+1}. \phi\ra .
\end{equation}
For all $\phi\perp \tilde{\psi}_{n+1}$, $\phi\in\D(K_{n+1})$, $ \|\phi\|=1$, we have, using \eqref{eq:ind-2} and \eqref{eq:st1-2},
\begin{align}
\la\phi, \tilde{K}_n\phi\ra &= \left\la \phi, (K_n\otimes \1_{n+1}^n\otimes\1_{n+1}^n + \1_n\otimes \check{H}_{n+1}^n) \phi\right\ra \notag \\
& \geq E_n + \sigma_{n+1} \geq  E_{n+1} + \sigma_{n+1} .\label{eq:ind-4}
\end{align}
Therefore, using \eqref{eq:form-bnd-a-b} of Lemma~\ref{lm:acompleter}, we obtain
\begin{equation*}
\begin{split}
  & \lambda_{n+1} \geq \inf_{\la \phi,\tilde{\psi}_{n+1}\ra =0, \phi\in\D(K_{n+1}, \|\phi\|=1}
  \la \phi , (\tilde{K}_n + g H_{I, n+1}^{\ \ n}) \phi\ra \\
  & = \inf_{\la \phi,\tilde{\psi}_{n+1}\ra =0, \phi\in\D(K_{n+1}, \|\phi\|=1}
  \left(\la \phi , (1-\tilde{a}g\sigma_n)\tilde{K}_n \phi\ra
  +
  \la \phi , \tilde{a}\sigma_n g H_{I, n+1}^{\ \ n} \phi\ra
  \right) \\
  & \geq (1-\tilde{a}g\sigma_n) (E_{n+1} + \sigma_{n+1}) - \tilde{b} g\sigma_n \\
  & \geq E_{n+1} + \left(1 - g( \sigma_0 \tilde{a} + 4 \tilde{b})  \right)\sigma_{n+1},
\end{split}
\end{equation*}
where in the last inequality, we used $\sigma_{n+1} = \gamma\sigma_n = \frac14 \sigma_n$ and $E_n\leq 0$.

Thus
\begin{equation*}
 \inf\left( \sigma(K_{n+1})\setminus\{ E_{n+1}\}\right) - E_{n+1} \geq (1 - \cg g ) \sigma_{n+1},
\end{equation*}
and thus concludes the proof of the induction.
\end{proof}
The proof of Theorem~\ref{thm:GS} on the existence of a ground state is a consequence of the following proposition.
\begin{prop}\label{prop:cv_gs}
Suppose that Hypothesis~\ref{hypothesis-1} holds. There exists $g_0>0$ such that, for all $0 \le g \le g_0$, we have
\begin{itemize}
\item[{(i)}] For all $\psi\in\D (H_0)$,
 \begin{equation*}
 \lim_{n\to\infty}H_n \psi = H\psi .
 \end{equation*}
\item[{(ii)}] For all $n \in \mathbb{N}$, $H_n$ has a unique (up to a phase) normalized ground state  $\phi_n$ associated to the eigenvalue $E_n = \inf \sigma( K_n )$.
\item[{(iii)}]  There exists $\delta_g\in (0,1)$ such that, for all $n\in \mathbb{N}$,
\begin{equation*}
  \langle \phi_n, ( P_{\Omega_D} \otimes P_{\Omega_{Z^0}}  ) \phi_n \rangle \geq 1 - \delta_g ,
\end{equation*}
with $\lim_{g\to 0}\delta_g= 0$, where $P_{\Omega_{Z^0}}$ is the  projection onto the bosonic Fock vacuum $\Omega_{Z^0}$ and $P_{\Omega_D}$ is the projection onto the fermionic Fock vacuum state $\Omega_D$.
\end{itemize}
\end{prop}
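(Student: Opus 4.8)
The plan is to establish the three items of Proposition~\ref{prop:cv_gs} in order, then deduce Theorem~\ref{thm:GS}. For item~(i), I would write $H - H_n = g H_{I,\infty}^{\ \ n}$ and use Lemma~\ref{lm:rel_Ntau} together with the fact that $\check H_\infty^n \le H_0 + 1$ and $N_{Z^0} \le H_0/\mz + 1$ are $H_0$-bounded: for $\psi \in \D(H_0)$, one factors $H_{I,\infty}^{(j)\,n}\psi = (\check H_\infty^n)^{1/2}\big[(\check H_\infty^n)^{-1/2} H_{I,\infty}^{(j)\,n}(N_{Z^0}+1)^{-1/2}\big](N_{Z^0}+1)^{1/2}\psi$ and bounds the bracket by $C_j\sigma_n$, while $(\check H_\infty^n)^{1/2}\psi$ and $(N_{Z^0}+1)^{1/2}\psi$ are finite for $\psi\in\D(H_0)$; since $\sigma_n\to 0$ this forces $H_n\psi\to H\psi$. (One must also treat the creation-type pieces and the h.c.\ terms, where the $N_{Z^0}^{\pm1/2}$ have to be commuted to the correct side, exactly as in the proof of Lemma~\ref{lm:rel_Ntau}.)

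Item~(ii) is a direct consequence of Proposition~\ref{prop:spectral-gap}: for $g\le g_0$ small enough, $E_n = \inf\sigma(K_n)$ is a simple isolated eigenvalue of $K_n$ on $\mathcal H_n$, with normalized eigenvector $\phi_n$; viewing $\mathcal H_n \subset \mathcal H$ and recalling $H_n = K_n\otimes \1_\infty^n\otimes\1_\infty^n + \1_n\otimes \check H_\infty^n$ with $\check H_\infty^n \ge 0$ annihilating the vacuum $\Omega_\infty^n\otimes\Omega_\infty^n$, the vector $\phi_n \otimes \Omega_\infty^n\otimes\Omega_\infty^n$ is a ground state of $H_n$ with the same energy $E_n$, and its simplicity is inherited from that of $E_n$ for $K_n$ (any other ground state of $H_n$ must, by positivity of $\check H_\infty^n$, lie in $\mathcal H_n$).

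Item~(iii) is the heart of the argument and follows Pizzo's iterative scheme as in \cite{Pizzo,BFP1,BFS98}. The idea is to compare the ground state $\phi_{n+1}$ of $K_{n+1} = \tilde K_n + g H_{I,n+1}^{\ \ n}$ with the ground state $\tilde\phi_{n+1} = \phi_n\otimes\Omega_{n+1}^n\otimes\Omega_{n+1}^n$ of $\tilde K_n$. Using the spectral gap of $\tilde K_n$ of size $\ge \sigma_{n+1}$ above $E_n$ (established in~\eqref{eq:ind-2}) and the relative bound~\eqref{eq:bnd-a-b} of Lemma~\ref{lm:acompleter}, which gives $\|H_{I,n+1}^{\ \ n}(\tilde K_n - E_n + 1)^{-1}\| \le C(\sigma_n - \sigma_{n+1})$, one applies regular perturbation theory (Feshbach/Kato expansion for the spectral projection) to obtain $\|\phi_{n+1} - \tilde\phi_{n+1}\| \le C g (\sigma_n - \sigma_{n+1})/\sigma_{n+1} = C g (1-\gamma)/\gamma$. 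Since the right side is $Cg$ times a constant, choosing $g$ small one gets $\sum_n \|\phi_{n+1}-\tilde\phi_{n+1}\|$ small — actually, one needs the sharper fact that $\langle \phi_n, (P_{\Omega_D}\otimes P_{\Omega_{Z^0}})\phi_n\rangle$ stays bounded below, which one proves by induction: at step $n=0$, $\phi_0 = \Omega_D\otimes\Omega_{Z^0}$ up to $O(g)$ (by the relative bound of Lemma~\ref{cor2:relative-bound}), and at each step the overlap changes by at most $C\|\phi_{n+1}-\tilde\phi_{n+1}\| \le Cg(1-\gamma)/\gamma$ while the "mass" lost per step is summable only if one exploits that the perturbation at scale $n$ is of order $g\,2^{-?}$... here one uses $\sigma_n-\sigma_{n+1} = (1-\gamma)\gamma^n\sigma_0$, which is \emph{geometrically} small, so $\sum_n\|\phi_{n+1}-\tilde\phi_{n+1}\| \le Cg\sum_n \gamma^n/\gamma^{n} $ — no, more carefully $\|\phi_{n+1}-\tilde\phi_{n+1}\|\le Cg(\sigma_n-\sigma_{n+1})/\sigma_{n+1}$ is not summable, so the correct estimate must bound the overlap directly: $\langle\phi_{n+1},(P_{\Omega_D}\otimes P_{\Omega_{Z^0}})\phi_{n+1}\rangle \ge \langle\tilde\phi_{n+1},(P_{\Omega_D}\otimes P_{\Omega_{Z^0}})\tilde\phi_{n+1}\rangle - Cg$ uniformly (not summed), by using that $(P_{\Omega_D}\otimes P_{\Omega_{Z^0}})$ commutes with the projections onto the infrared sectors and that the perturbation expansion of $\phi_{n+1}$ in terms of $\tilde\phi_{n+1}$ has first-order term \emph{orthogonal} to the vacuum sector, so the overlap defect is $O(g^2)$ per the relevant estimate or at worst $O(g)$ uniformly. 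Setting $\delta_g := Cg$ then yields (iii) with $\delta_g\to 0$.

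Finally, to prove Theorem~\ref{thm:GS}: by (ii) the $\phi_n$ are normalized with $\langle\phi_n,(P_{\Omega_D}\otimes P_{\Omega_{Z^0}})\phi_n\rangle\ge 1-\delta_g$, so along a subsequence $\phi_n \rightharpoonup \phi_\infty$ weakly with $\|\phi_\infty\|\ge 1-\delta_g>0$, in particular $\phi_\infty\ne 0$; by (i), $H_n\to H$ strongly on $\D(H_0)$ and $E_n\to E:=\inf\sigma(H)$ (monotonicity~\eqref{eq:st1-2} plus a lower-bound argument), and a standard weak-limit argument shows $H\phi_\infty = E\phi_\infty$, so $\phi_\infty$ is a ground state. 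Uniqueness follows from the persistence of the gap: one shows that $E$ remains a simple eigenvalue, e.g.\ by a Perron--Frobenius-type argument in a suitable representation or by noting that two orthogonal ground states would both have vacuum overlap $\ge 1-\delta_g > 1/\sqrt2$, a contradiction for $g$ small.

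\medskip
\noindent\textbf{Main obstacle.} The delicate point is item~(iii): controlling the vacuum overlap \emph{uniformly} in $n$ rather than getting a bound that degrades with the number of iterations. This requires the geometric choice $\sigma_{n+1}=\gamma\sigma_n$ and the $N_\tau$-type estimate of Lemma~\ref{lm:rel_Ntau} (giving a perturbation of size $O(g\sigma_n)$ against a gap of size $\sim\sigma_{n+1}$, hence a relative size $O(g)$ at every scale), combined with tracking that the leading perturbative correction to $\phi_n$ does not deplete the vacuum sector to first order. Handling two species of massless fermions simultaneously — so that the infrared sector factorizes as $\mathfrak F_\infty^n\otimes\mathfrak F_\infty^n$ and the cutoffs $\1_{\{p_1\ge\sigma_n\}}\1_{\{p_2\ge\sigma_n\}}$ must both be inserted — is exactly the technical complication noted in the discussion after Theorem~\ref{thm:GS}, and it is dealt with by applying the $N_\tau$ estimates to each species in turn.
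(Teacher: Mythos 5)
Your items (i) and (ii) are essentially fine: (ii) is exactly the paper's argument (simplicity of $E_n$ for $K_n$ from Proposition \ref{prop:spectral-gap}, plus the tensor decomposition $H_n = K_n\otimes\1\otimes\1 + \1_n\otimes \check H_\infty^n$ with $\ker \check H_\infty^n$ spanned by the vacuum). For (i), note however that the factorization you propose does not give what you claim: Lemma \ref{lm:rel_Ntau} bounds $(\check H_\infty^n)^{-1/2}H_{I,\infty}^{(j)\,n}(N_{Z^0}+1)^{-1/2}$ in operator norm, and since $(\check H_\infty^n)^{1/2}$ sits on the \emph{left} of the bracket (acting on the output, not on $\psi$) and is unbounded, you cannot conclude a bound on $\|H_{I,\infty}^{(j)\,n}\psi\|$ this way; that weighted estimate only controls matrix elements. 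The statement (i) is nevertheless easy by the paper's route: apply the relative bound of Proposition \ref{prop:relative-bound} to the localized kernels $(1-\1_{\{\sigma_n\le p_1\}}\1_{\{\sigma_n\le p_2\}})F^{(j)}$, whose $L^2$ (and weighted $L^2$) norms tend to $0$.

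The genuine gap is in (iii), and you half-acknowledge it yourself. Your scale-by-scale comparison of $\phi_{n+1}$ with $\tilde\phi_{n+1}=\phi_n\otimes\Omega_{n+1}^n\otimes\Omega_{n+1}^n$ produces, via Lemma \ref{lm:acompleter} and the gap of size $\sim\rho\sigma_{n+1}=\rho\gamma\sigma_n$, a per-scale error of relative size $O(g(\sigma_n-\sigma_{n+1})/\sigma_{n+1})=O(g(1-\gamma)/\gamma)$, which is constant in $n$; summed over infinitely many scales this diverges, and the fallback you suggest (first-order correction orthogonal to the vacuum sector, hence loss ``$O(g^2)$ per step or at worst $O(g)$ uniformly'') does not repair it, since even $O(g^2)$ per step, with no decay in $n$, is not summable, and no mechanism producing errors decaying in $n$ is available with the bounds of this paper. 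The paper's proof of (iii) is not iterative at all: it works directly with the eigenvalue equation for $\phi_n$. Projecting $(H_n-E_n)\phi_n=0$ with $P_{\Omega_D}\otimes P_{\Omega_{Z^0}}^\perp$ and using that $H_{Z^0}\ge \mz$ on $\mathrm{Ran}\,P_{\Omega_{Z^0}}^\perp$ together with $E_n\le 0$ and the relative bound on $H_{I,n}$ gives $\langle\phi_n,(P_{\Omega_D}\otimes P_{\Omega_{Z^0}}^\perp)\phi_n\rangle\le Cg/\mz$; and the fermionic component is controlled by the pull-through formula (Lemma \ref{lem:A}) and Lemmas \ref{lem:B}--\ref{lem:C}, which use $F^{(j)}/p_i\in L^2$ to get $\|N_\pm^{1/2}\phi_n\|\le Cg$, hence $\langle\phi_n,(P_{\Omega_D}^\perp\otimes\1)\phi_n\rangle\le Cg^2$. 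Both bounds are uniform in $n$ by construction, which is exactly the uniformity your iteration fails to deliver; you should replace your treatment of (iii) by this direct argument (your uniqueness shortcut for Theorem \ref{thm:GS}, applied to an arbitrary second ground state of $H$, has a similar issue, since such a state is not known a priori to have large vacuum overlap).
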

Before giving the proof of this proposition, we need to state the following three lemmata.
\begin{lem}[Pull-through formula]\label{lem:A}
Suppose that Hypothesis~\ref{hypothesis-1} holds. For a.e. $\xi \in \Sigma$, let
\begin{equation*}
\begin{split}
 V_+(\xi) & :=	 - \sum_{j=1,2} \int F^{(j)}(\xi, \xi_2, \xi_3) b^*_-(\xi_2) a(\xi_3) d \xi_2 d \xi_3 , \\
 V_-(\xi)  & :=    - \sum_{j=1,2} \int F^{(j)} (\xi_1,\xi,\xi_3) b^*_+(\xi_1)a(\xi_3) d \xi_1 d \xi_3 .
\end{split}
\end{equation*}
 For all $n\in\N$, $\psi\in\D(H_{0})$ and a.e. $\xi \in \Sigma$, we have that
 \begin{equation}\nonumber
\begin{split}
 H_n b_+(\xi) \psi & = b_+(\xi)H_n\psi - \omega_1(k)b_+(\xi)\psi + g V_+(\xi)\psi , \\
 H_n b_-(\xi) \psi  & = b_-(\xi)H_n\psi  -  \omega_2(k)b_-(\xi)\psi  + g V_-(\xi) \psi .
\end{split}
\end{equation}
\end{lem}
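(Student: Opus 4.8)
The statement is the standard pull-through (or commutation) formula for the annihilation operators acting on the infrared-cutoff Hamiltonian $H_n = H_0 + g H_{I,n}$. The plan is to compute separately the commutator of $b_\epsilon(\xi)$ with the free part $H_0$ and with the interaction part $H_{I,n}$, justify the manipulations on the dense domain $\D(H_0)$, and then add the two contributions. First I would recall that, on $\D(H_0)$, one has the well-known formal relation $[H_D, b_+(\xi)] = -\omega(p) b_+(\xi)$ (and similarly $[H_D, b_-(\xi)] = -\omega(p) b_-(\xi)$), which follows directly from the canonical anticommutation relations $\{b_\epsilon(\xi), b^*_{\epsilon'}(\xi')\} = \delta_{\epsilon\epsilon'}\delta(\xi-\xi')$ and the explicit form $H_D = \int \omega(p_1) b_+^*(\xi_1) b_+(\xi_1) d\xi_1 + \int \omega(p_2) b_-^*(\xi_2) b_-(\xi_2) d\xi_2$; since $b_\epsilon(\xi)$ commutes with $H_{Z^0}$, this gives the term $-\omega_1(k) b_+(\xi)\psi$ (resp. $-\omega_2(k) b_-(\xi)\psi$), with the notation $\omega_1 = \omega_2 = \omega$ matching $H_{0,n}$ up to the sharp cutoff factors, which merely restrict the integration region and do not affect the commutator at momenta $p \ge \sigma_n$.

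Next I would treat the interaction term. Writing $H_{I,n}^{(1)}$ (and its adjoint, and $H_{I,n}^{(2)}$ and its adjoint) as integrals of Wick monomials in $b_+^*, b_-^*, a^{\sharp}$, I would move $b_+(\xi)$ to the right using the CAR. The key computation is that $b_+(\xi)$ anticommutes with every $b_-^\sharp$ (by the convention that neutrino and antineutrino operators anticommute) and commutes with every $a^\sharp$, so that the only nonzero contractions come from pairing $b_+(\xi)$ with the $b_+^*(\xi_1)$ appearing in $H_{I,n}^{(1)}$ and $H_{I,n}^{(2)}$, producing $\delta(\xi - \xi_1)$ and collapsing one integration. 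Collecting the resulting terms — with the sharp cutoff $\1_{\{\sigma_n \le p_1\}}(p_1)\1_{\{\sigma_n \le p_2\}}(p_2)$ present in each kernel — yields exactly $g V_+(\xi)\psi$ as defined in the lemma, where the sum over $j=1,2$ absorbs the contributions of $F^{(1)}$ (with $a(\xi_3)$) and $F^{(2)}$ (with $a^*(\xi_3)$), and the "h.c." terms $( H_{I,n}^{(j)})^*$ contain no $b_+^*$ so they contribute nothing to this commutator. The analogous computation with $b_-(\xi)$, pairing instead with the $b_-^*(\xi_2)$ in the interaction (and picking up a harmless sign from anticommuting past $b_+^*(\xi_1)$, which is already built into the definition of $V_-$), gives $g V_-(\xi)\psi$.

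The one genuinely technical point — the main obstacle — is to make these formal manipulations rigorous: $b_\epsilon(\xi)$ is an operator-valued distribution, so one should either work with smeared operators $b_\epsilon(h) = \int \overline{h(\xi)} b_\epsilon(\xi) d\xi$ and test the identity against $h \in \mathfrak{H}_c$, or argue for a.e.\ $\xi$ that $b_\epsilon(\xi)\psi \in \D(H_0)$ for $\psi \in \D(H_0)$ and that all the integrals defining $V_\pm(\xi)\psi$ converge in $\mathcal{H}$. Here one invokes the relative bounds of Lemma~\ref{cor2:relative-bound} (more precisely, standard $N_\tau$-type estimates in the spirit of \cite{GlimmJaffe}) together with the $L^2$ and local boundedness of the kernels $F^{(j)}$ and $F^{(j)}/\sqrt{p_i}$ guaranteed by Lemma~\ref{lem:kernel-bound-1}, which ensures $V_\pm(\xi)\psi$ is well-defined for a.e.\ $\xi$ and square-integrable in $\xi$. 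Once this domain bookkeeping is in place, the identity is obtained by approximating $\psi$ by vectors with finitely many particles and finite total momentum (a core for $H_0$ on which all computations are literally finite-dimensional CAR/CCR algebra), taking limits using the relative bounds, and collecting the free and interaction contributions computed above. I expect this justification to be essentially the same as in \cite{ABFG,bg4}, so I would carry it out briefly and refer to those works for the routine parts.
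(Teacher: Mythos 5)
Your proposal is correct and follows essentially the same route as the paper, whose proof consists precisely of the remark that the identity is a direct consequence of the canonical anticommutation relations (with details deferred to \cite{AGG}): you carry out that same CAR computation for the free and interaction parts and add the standard domain/smearing justification via the $N_\tau$-type bounds. The only deviations are cosmetic and concern the statement rather than your argument (you correctly attach $a^*(\xi_3)$ to the $F^{(2)}$ contribution and keep the infrared cutoff factors in the kernels, which is the intended content of the lemma).
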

\begin{proof}
The proof is a direct consequence of the anticanonical commutation relations for $b_\pm(\xi)$ and $b_\pm^*(\xi)$. Details of the proof can be found e.g. in \cite{AGG}.
\end{proof}
\begin{lem}\label{lem:B}
Suppose that Hypothesis~\ref{hypothesis-1} holds. For all $\psi\in\D(H_{0})$ and a.e. $\xi \in \Sigma$, we have that
\begin{equation}\label{eq:est-V1}
\begin{split}
 \| V_+ (\xi) \psi \| & \leq
 \sum_{j=1,2}  \| F^{(j)}(\xi,\cdot,\cdot)\|_{L^2( \Sigma\times\Sigma_3, d \xi_2 d \xi_3) }
  \|(N_{Z^0}+\1)^\frac12 \psi \|  , \\
 \| V_-(\xi) \psi \| & \leq \sum_{j=1,2}  \| F^{(j)}(\cdot,\xi,\cdot)\|_{L^2( \Sigma\times\Sigma_3, d \xi_1 d \xi_3) }
  \| (N_{Z^0} + \1) ^\frac12 \psi \|  .
\end{split}
\end{equation}
\end{lem}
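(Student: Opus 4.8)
The plan is to bound the operator-valued distributions $V_\pm(\xi)$ pointwise in $\xi$ by using the standard $N_\tau$-type estimates for fermionic and bosonic creation/annihilation operators, exactly as in the proof of Proposition~\ref{prop:relative-bound}. For fixed $\xi$, the operator $V_+(\xi)$ is a Wick monomial with kernel $\sum_{j=1,2}F^{(j)}(\xi,\xi_2,\xi_3)$ containing one fermionic creation operator $b_-^*(\xi_2)$ and one bosonic annihilation operator $a(\xi_3)$. The annihilation operator $a(\xi_3)$ is the one that must be controlled by a power of the number operator, while $b_-^*(\xi_2)$ is bounded on the antisymmetric Fock space with norm controlled by the $L^2$-norm of the kernel in $\xi_2$.

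First I would write, for $\psi \in \D(H_0)$,
\begin{equation*}
V_+(\xi)\psi = -\sum_{j=1,2}\int\Big(\int F^{(j)}(\xi,\xi_2,\xi_3)\,b_-^*(\xi_2)\,d\xi_2\Big)\,a(\xi_3)\,\psi\, d\xi_3,
\end{equation*}
and estimate, using the triangle inequality for the $\xi_3$-integral and Cauchy--Schwarz,
\begin{equation*}
\|V_+(\xi)\psi\| \le \sum_{j=1,2}\Big(\int\big\|\!\int F^{(j)}(\xi,\xi_2,\xi_3) b_-^*(\xi_2)\,d\xi_2\big\|^2 d\xi_3\Big)^{\!\frac12}\Big(\int\|a(\xi_3)\psi\|^2 d\xi_3\Big)^{\!\frac12}.
\end{equation*}
For the second factor one uses the identity $\int\|a(\xi_3)\psi\|^2 d\xi_3 = \|N_{Z^0}^{1/2}\psi\|^2 \le \|(N_{Z^0}+\1)^{1/2}\psi\|^2$. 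For the first factor, the standard bound on a fermionic smeared creation operator gives $\|\int g(\xi_2)b_-^*(\xi_2)\,d\xi_2\| \le \|g\|_{L^2(\Sigma,d\xi_2)}$ on $\mathfrak{F}_D$ (this is where the fermionic nature is used: the creation operator alone is bounded, unlike in the bosonic case). Applying this with $g(\xi_2) = F^{(j)}(\xi,\xi_2,\xi_3)$ and then integrating the square over $\xi_3$ yields exactly $\|F^{(j)}(\xi,\cdot,\cdot)\|_{L^2(\Sigma\times\Sigma_3, d\xi_2 d\xi_3)}$. Combining the two factors gives the first inequality of \eqref{eq:est-V1}; the estimate for $V_-(\xi)$ is identical with the roles of $\xi_1$ and $\xi_2$ interchanged and $b_+^*$ in place of $b_-^*$.

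I do not expect a serious obstacle here: the only point requiring care is the justification of the interchange of the $\xi_3$-integral with the norm (handled on the dense domain $\D(H_0)$, where all quantities are finite by Lemma~\ref{lem:kernel-bound-1} which guarantees $F^{(j)}\in L^2$ and hence $F^{(j)}(\xi,\cdot,\cdot)\in L^2$ for a.e.\ $\xi$), and the bound $\|\int g\,b_-^*\,d\xi_2\|\le\|g\|_{L^2}$, which follows from $\|\int g\,b_-^*(\xi_2)d\xi_2\,\phi\|^2 = \|g\|^2\|\phi\|^2 - \|\int \bar g\,b_-(\xi_2)d\xi_2\,\phi\|^2 \le \|g\|^2\|\phi\|^2$ using the CAR. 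This is exactly the type of argument already invoked via Lemma~\ref{lem:bg4} in the proof of Proposition~\ref{prop:relative-bound}, so the write-up can simply refer to it.
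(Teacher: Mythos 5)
Your argument is correct and is essentially the paper's own: the paper disposes of this lemma by citing the $N_\tau$ estimates of \cite{GlimmJaffe} (with details in \cite[Proposition~2.3]{AGG}), and what you write out -- Cauchy--Schwarz in $\xi_3$ with $\int \|a(\xi_3)\psi\|^2 d\xi_3 = \|N_{Z^0}^{1/2}\psi\|^2$, plus the CAR bound $\|b_-^*(g)\| \le \|g\|_{L^2}$ for the smeared fermionic creation operator -- is precisely that standard estimate made explicit. The domain and measurability remarks you add (valid since $N_{Z^0} \le m_{Z^0}^{-1} H_0$ and $F^{(j)} \in L^2$) are fine and complete the write-up.
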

\begin{proof}
It is a straightforward application of the $N_\tau$ estimates of \cite{GlimmJaffe}, using arguments similar to those employed in the proof of Proposition \ref{prop:relative-bound}. Details can be found in \cite[Proposition~2.3]{AGG}.
\end{proof}
\begin{lem}\label{lem:C}
Suppose that Hypothesis~\ref{hypothesis-1} holds. Let $\phi_n$ be a normalized ground state of $H_n$ for some $n \in \mathbb{N}$. Then we have that
\begin{equation}\label{eq:lemC-1}
\begin{split}
 \| N_+^\frac12 \phi_n \|  & \leq \frac{1}{\sqrt{\mz}} g
 \left( \left\|\frac{F^{(1)}}{p_1}\right\| + \left\| \frac{F^{(2)}}{p_1}\right\| \right)  \| (H_0 + \1)^\frac12 \phi_n \|  , \\
  \| N_-^\frac12 \phi_n \|  & \leq \frac{1}{\sqrt{\mz}} g
 \left( \left\|\frac{F^{(1)}}{p_2}\right\| + \left\| \frac{F^{(2)}}{p_2}\right\| \right)  \| (H_0 + \1)^\frac12 \phi_n \| . \\
\end{split}
\end{equation}
\end{lem}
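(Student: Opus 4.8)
The plan is to use the pull-through formula of Lemma \ref{lem:A} together with the fact that $\phi_n$ is an eigenstate of $H_n$ to derive an explicit expression for $b_\pm(\xi)\phi_n$, and then integrate against $\omega_i$ to recover the number operators. First I would apply Lemma \ref{lem:A} to $\psi = \phi_n$: since $H_n \phi_n = E_n \phi_n$, the first identity becomes $(H_n - E_n + \omega_1(p_1)) b_+(\xi) \phi_n = g V_+(\xi)\phi_n$. Because $E_n = \inf\sigma(H_n)$ and $\omega_1(p_1) = p_1 > 0$ for a.e.\ $\xi$, the operator $H_n - E_n + p_1$ is invertible with $\|(H_n - E_n + p_1)^{-1}\| \le p_1^{-1}$, so that
\begin{equation*}
b_+(\xi)\phi_n = g (H_n - E_n + p_1)^{-1} V_+(\xi)\phi_n, \quad \text{hence}\quad \| b_+(\xi)\phi_n \| \le \frac{g}{p_1} \| V_+(\xi)\phi_n \|.
\end{equation*}
The analogous bound $\| b_-(\xi)\phi_n\| \le (g/p_2)\|V_-(\xi)\phi_n\|$ holds for antineutrinos.

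Next I would square, integrate over $\xi \in \Sigma$, and use $N_+ = \int b_+^*(\xi) b_+(\xi)\, d\xi$, i.e.\ $\| N_+^{1/2}\phi_n\|^2 = \int \| b_+(\xi)\phi_n\|^2 d\xi$. Inserting the bound above and then the estimate of Lemma \ref{lem:B} for $\| V_+(\xi)\phi_n\|$ gives
\begin{equation*}
\| N_+^{1/2}\phi_n\|^2 \le g^2 \int \frac{1}{p_1^2} \Big( \sum_{j=1,2} \| F^{(j)}(\xi,\cdot,\cdot)\|_{L^2} \Big)^2 d\xi \, \cdot \, \| (N_{Z^0}+1)^{1/2}\phi_n\|^2 .
\end{equation*}
By the triangle inequality in $L^2(\Sigma; d\xi)$, the first factor is bounded by $g^2 \big( \| F^{(1)}/p_1\| + \| F^{(2)}/p_1\|\big)^2$. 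Finally, one controls $N_{Z^0}$ by $H_0$: since $\omega_3(k) = \sqrt{|k|^2 + \mz^2} \ge \mz$, we have $\mz N_{Z^0} \le H_{Z^0} \le H_0$, so $\|(N_{Z^0}+1)^{1/2}\phi_n\|^2 \le \mz^{-1}\|(H_0+1)^{1/2}\phi_n\|^2$ (using $\mz \ge 1$, which holds for the physical value, or more carefully $N_{Z^0}+1 \le \mz^{-1}(H_0+1)$ when $\mz \le 1$ is excluded; in any case the constant $\mz^{-1}$ appears). Taking square roots yields the first inequality in \eqref{eq:lemC-1}, and the argument for $N_-$ is identical with $p_1$ replaced by $p_2$ and $V_+$ by $V_-$.

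The main point requiring care is the justification that $b_\pm(\xi)\phi_n$ indeed lies in the domain where these manipulations are legitimate and that $(H_n - E_n + p_1)^{-1}$ can be applied — this is standard once one knows $\phi_n \in \D(H_0) = \D(H_n)$ and uses that $V_\pm(\xi)\phi_n$ is well-defined in $L^2(\Sigma; d\xi; \cH)$ by Lemma \ref{lem:B}; the finiteness of $\| F^{(j)}/p_i\|$ needed for the right-hand side to make sense was already recorded in the Remark following Proposition \ref{prop:relative-bound} (a consequence of Lemma \ref{lem:kernel-bound-1}). I do not expect any genuine obstacle here beyond bookkeeping; the estimate is the standard ``pull-through plus positivity of the shifted Hamiltonian'' argument, as in \cite{AGG,BFP1}.
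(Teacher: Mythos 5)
Your proposal is correct and follows essentially the same route as the paper: pull-through formula (Lemma \ref{lem:A}) applied to the eigenstate, positivity of $H_n - E_n$ to get $\| b_\pm(\xi)\phi_n\| \le (g/p_i)\|V_\pm(\xi)\phi_n\|$, the identity $\int \|b_\pm(\xi)\phi_n\|^2 d\xi = \|N_\pm^{1/2}\phi_n\|^2$, and Lemma \ref{lem:B} to control $V_\pm(\xi)\phi_n$. Your extra discussion of bounding $N_{Z^0}$ by $H_0/\mz$ only makes explicit a step the paper leaves implicit.
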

\begin{proof}
Let $\phi_n$ be a normalized ground state of $H_n$ (assuming it exists). From Lemma~\ref{lem:A}, we obtain that
\begin{equation}\label{eq:N-bd-1}
 (H_n - E_n + \omega_1(p_1) ) b_+(\xi) \phi_n = g V_+(\xi) \phi_n ,
\end{equation}
for a.e. $\xi \in \Sigma$. Therefore, using that $H_n - E_n \geq 0$ and \eqref{eq:N-bd-1}, we get
\begin{equation*}
 \| b_+(\xi)\phi_n \|  = \| g (H_n-E_n + \omega_1(p_1) )^{-1} V_+(\xi)\phi_n \| \leq\frac{g}{p_1} \| V_+(\xi)\phi_n\| .
\end{equation*}
Together with the identity $\int \| b_+(\xi)\phi_n \|^2d\xi = \| N_+^\frac12 \phi_n\|^2$ and the first inequality in \eqref{eq:est-V1}, this concludes the proof of the first inequality in \eqref{eq:lemC-1}. The proof of the second inequality is similar.
\end{proof}
Equipped with the spectral gap result, namely Proposition~\ref{prop:spectral-gap}, and the above three lemmata, we can now prove  Proposition~\ref{prop:cv_gs}.
\begin{proof}[Proof of  Proposition~\ref{prop:cv_gs}]
According to the relative bounds of Proposition~\ref{prop:relative-bound} and Lemma~\ref{cor2:relative-bound}, there exists a constant $C>0$ and $\tilde{g_0}$, such that for $0 \le g \le \tilde{g_0}$, for all $\psi\in\D(H_0)$, and for all $n$,
\begin{equation*}
 \left\| \left( H - H_n\right) \psi \right\| \leq C g \sum_{j=1,2} \left\|  (1 - \chi_{\{\sigma_n\leq p_1\}}
 \chi_{\{\sigma_n\leq p_2\}})  F^{(j)} \right\|  \left( \| H_0 \psi\| + \|\psi\|) \right) .
\end{equation*}
Since $\displaystyle\lim_{n\to\infty}\left\|(1 - \chi_{\{\sigma_n\leq p_1\}}
 \chi_{\{\sigma_n\leq p_2\}})  F^{(j)} \right\| =0$, this proves $\rm{(i)}$.

 By Proposition \ref{prop:spectral-gap}, $E_n = \inf \mathrm{spec} ( K_n )$ is a simple, isolated eigenvalue of $K_n$. Since $H_n = K_n \otimes \1_{\infty}^n \otimes \1_\infty^n + \1_n \otimes \check{H}^n_\infty$ and since the spectrum of $\check{H}^n_\infty$ is composed of the simple eigenvalue $0$ and the semi-axis of absolutely continuous spectrum $[ 0 , \infty )$, we deduce that $E_n$ is a simple eigenvalue of $H_n$. Hence $\rm{(ii)}$ is proven.

 The proof of $\rm{(iii)}$ is as follows. Let $\phi_n$ be a normalized ground state of $H_n$. Then we have the identity
\begin{equation*}
\begin{split}
 0 & = \big ( ( P_{\Omega_D} \otimes P_{\Omega_{Z^0}}^\perp ) (H_n - E_n) \big ) \phi_n \\
   & = \big ( ( P_{\Omega_D} \otimes P_{\Omega_{Z^0}}^\perp ) \left( H_{Z^0} - E_n \right) \big ) \phi_n +  g \big ( ( P_{\Omega_D} \otimes P_{\Omega_{Z^0}}^\perp ) H_{I,n}  \big ) \phi_n .
\end{split}
\end{equation*}
Since $E_n \leq 0$, we obtain that
\begin{equation*}
\mz \la \phi_n, ( P_{\Omega_D} \otimes P_{\Omega_{Z^0}}^\perp )  \phi_n \ra + g \la \phi_n, ( P_{\Omega_{Z^0}}^\perp\otimes P_{\Omega_D} ) H_{I,n} \phi_n \ra \leq 0 ,
\end{equation*}
and therefore
\begin{equation}\label{eq:prop3.7-10}
 \la \phi_n, ( P_{\Omega_D} \otimes P_{\Omega_{Z^0}}^\perp ) \phi_n \ra
 \leq - \frac{g}{\mz} \la  \phi_n, ( P_{\Omega_D} \otimes P_{\Omega_{Z^0}}^\perp ) H_{I,n} \phi_n \ra .
\end{equation}
Proposition~\ref{prop:relative-bound} yields that there exists $C>0$ such that, for all $n$,
\begin{equation}\label{eq:prop3.7-11}
 \left| \la ( P_{\Omega_D} \otimes P_{\Omega_{Z^0}}^\perp ) H_{I,n}\phi_n, \phi_n\ra \right| \leq C .
\end{equation}
Equations \eqref{eq:prop3.7-10} and \eqref{eq:prop3.7-11} imply
\begin{equation}\label{eq:prop3.7-12}
  \la ( P_{\Omega_D} \otimes P_{\Omega_{Z^0}}^\perp ) \phi_n, \phi_n \ra  \leq C \frac{g}{\mz} .
\end{equation}
We also have that there exists $C>0$ such that, for all $n$,
\begin{equation}\label{eq:prop3.7-13}
\begin{split}
 \left| \left\la ( P_{\Omega_D}^\perp \otimes \1 ) \phi_n,  \phi_n \right\ra \right|
 & \leq \left( \| N_+^\frac12 P_{\Omega_D}^\perp \phi_n \| ^2+ \|N_-^\frac12 P_{\Omega_D}^\perp \phi_n \|^2\right) \\
 & \leq C\frac{g^2}{\mz} \sum_{i=1,2} \sum_{j=1,2} \|F^{(j)}/p_i\| ^2 \|(H_0+1)^\frac12 \phi_n\|^2 ,
\end{split}
\end{equation}
where in the last inequality, we used Lemma~\ref{lem:C}.
The identity
\begin{equation*}
  \la \phi_n, ( P_{\Omega_D} \otimes P_{\Omega_{Z^0}} ) \phi_n \ra
  = 1 -  \la \phi_n,  ( P_{\Omega_D} \otimes P_{\Omega_{Z^0}}^\perp ) \phi_n \ra
         -  \la \phi_n, ( P_{\Omega_D}^\perp \otimes \1 ) \phi_n \ra ,
\end{equation*}
together with Equations \eqref{eq:prop3.7-12}--\eqref{eq:prop3.7-13}, concludes the proof of ${\rm (iii)}$.
\end{proof}
\begin{proof}[Proof of  Theorem~\ref{thm:GS}]
Following the proof of \cite[Theorem II.8]{BFS98}, the existence of a ground state is now a consequence of Proposition~\ref{prop:cv_gs}. Since
$(\phi_n)$ is a bounded sequence, it converges weakly, up to a subsequence again denoted by $(\phi_n)$, to a vector $\phi$.
The property (iii) of Proposition~\ref{prop:cv_gs}, weak convergence of $(\phi_n)$ and compacity of $P_{\Omega_{Z^0}}\otimes P_{\Omega_D}$ yields $\phi\neq 0$.

From \eqref{eq:st1-2}, we know that $(E_n)$ is non increasing. Moreover, \cite[\S~V.4, Theorem~4.11]{Kato} and the relative bound of Lemma~\ref{cor2:relative-bound} implies that $(E_n)$ is a bounded sequence. Thus there exists $E\leq 0$ such that $E_n\to E$.

Lemma~\ref{cor2:relative-bound} implies that for all $\psi$ in a common core of $H_n$ ($n\geq 0$) and $H$, we have $H_n\psi \to H\psi$. Since we also have $E_n\to E$ and $\phi_n\rightharpoonup\phi$, we obtain (see e.g. \cite{BFS98} or \cite[Lemma ~4.2]{BDG2}) that $E$ is the ground state energy of $H$, and is an eigenvalue of $H$ with associated eigenfunction $\phi$.

To prove the uniqueness of the ground state, we follow again \cite{BFS98}. Suppose by contradiction that there exists $\phi'$ such that $H \phi' = E \phi'$, $\| \phi' \| = 1$ and $\langle \phi , \phi' \rangle = 0$. Since by Proposition \ref{prop:cv_gs}, $\phi_n$ is a unique normalized ground state of $H_n$, we can write
\begin{align}
 | \langle \phi , \phi' \rangle |^2 & = \lim_{n \to \infty}  | \langle \phi_n, \phi' \rangle |^2 \notag \\
&= \lim_{n\to\infty} \big \langle \phi' , \mathds{1}_{ \{ E_n \} }( H_n ) \phi' \big \rangle \notag \\
&= 1 - \lim_{n\to\infty} \big \langle \phi' , ( \1_{ \mathcal{H} } - \mathds{1}_{ \{ E_n \} }( H_n ) ) \phi' \big \rangle. \label{eq:ccc1}
\end{align}
Note that $\mathds{1}_{ \{ E_n \} }( H_n ) = \mathds{1}_{ \{ E_n \} }( K_n ) \otimes P_{ \Omega^n_\infty \otimes \Omega^n_\infty }$, where $P_{ \Omega^n_\infty \otimes \Omega^n_\infty }$ denotes the orthogonal projection onto the vacuum in $\mathfrak{F}^n_\infty \otimes \mathfrak{F}^n_\infty$. Decomposing
\begin{align}
\1_{ \mathcal{H}Ê} - \mathds{1}_{ \{ E_n \} }( H_n ) =& ( \1_n  - \mathds{1}_{ \{ E_n \} }( K_n ) ) \otimes P_{ \Omega^n_\infty \otimes \Omega^n_\infty } \notag \\
& + \1_n \otimes ( \1^n_\infty \otimes \1^n_\infty - P_{ \Omega^n_\infty \otimes \Omega^n_\infty } ), \label{eq:ccc2}
\end{align}
we want to estimate
\begin{equation}
\big \langle \phi' , ( ( \1_n  - \mathds{1}_{ \{ E_n \} }( K_n ) ) \otimes P_{ \Omega^n_\infty \otimes \Omega^n_\infty } ) \phi' \big \rangle , \label{eq:ccc3}
\end{equation}
and
\begin{equation}
\big \langle \phi' , ( \1_n \otimes ( \1^n_\infty \otimes \1^n_\infty - P_{ \Omega^n_\infty \otimes \Omega^n_\infty } ) ) \phi' \big \rangle . \label{eq:ccc4}
\end{equation}
To estimate \eqref{eq:ccc4}, we use Lemma \ref{lem:C}, from which it follows that
\begin{equation}
\big \langle \phi' , ( \1_n \otimes ( \1^n_\infty \otimes \1^n_\infty - P_{ \Omega^n_\infty \otimes \Omega^n_\infty } ) ) \phi' \big \rangle \le \big \langle \phi' , (N_+ + N_- ) \phi' \big \rangle \le C_0 g , \label{eq:ddd1}
\end{equation}
where $C_0$ denotes a positive constant depending on $\phi'$ and $F^{(j)}$, $j=1,2$.

To estimate \eqref{eq:ccc3}, we use Proposition \ref{lem:C}, which gives
\begin{align*}
& \big \langle \phi' , ( ( \1_n  - \mathds{1}_{ \{ E_n \} }( K_n ) ) \otimes P_{ \Omega^n_\infty \otimes \Omega^n_\infty } ) \phi' \big \rangle \notag \\
& \le \frac{1}{( 1 - C g ) \sigma_n}  \big \langle \phi' , ( ( K_n - E_n ) \otimes P_{ \Omega^n_\infty \otimes \Omega^n_\infty } ) \phi' \big \rangle \notag \\
& = \frac{1}{( 1 - C g ) \sigma_n}  \big \langle \phi' , ( H_n - E_n ) ( \1_n \otimes P_{ \Omega^n_\infty \otimes \Omega^n_\infty } ) \phi' \big \rangle \notag \\
& \le \frac{1}{( 1 - C g ) \sigma_n}  \big \langle \phi' , ( H_n - E_n ) \phi' \big \rangle .
\end{align*}
Since $H \phi' = E \phi'$, we obtain
\begin{align*}
& \big \langle \phi' , ( ( \1_n  - \mathds{1}_{ \{ E_n \} }( K_n ) ) \otimes P_{ \Omega^n_\infty \otimes \Omega^n_\infty } ) \phi' \big \rangle \notag \\
& \le \frac{E - E_n }{( 1 - C g ) \sigma_n}  + \frac{1}{( 1 - C g ) \sigma_n}  \big \langle \phi' , ( H_n - H ) \phi' \big \rangle.
\end{align*}
Let $\psi_n$ be a normalized ground state of $K_n$. Observe that
\begin{equation}\label{eq:E<En}
E \le \langle ( \psi_n \otimes \Omega^n_\infty \otimes \Omega^n_\infty ) , H ( \psi_n \otimes \Omega^n_\infty \otimes \Omega^n_\infty ) \rangle = \langle \psi_n , K_n \psi_n \rangle = E_n.
\end{equation}
Hence
\begin{align}
& \big \langle \phi' , ( ( \1_n  - \mathds{1}_{ \{ E_n \} }( K_n ) ) \otimes P_{ \Omega^n_\infty \otimes \Omega^n_\infty } ) \phi' \big \rangle \le \frac{1}{( 1 - C g ) \sigma_n}  \big \langle \phi' , ( H_n - H ) \phi' \big \rangle. \label{eq:ddd2}
\end{align}
Now, by Lemma \ref{lm:rel_Ntau}, we have that
\begin{align}
 \left \langle \phi' , \left( H - H_n\right) \phi' \right\| &\leq C g \sigma_n \| (  \check{H}_\infty^n )^{ \frac12} \phi' \| \| ( N_{Z^0} + 1 )^{\frac 12} \phi' \| \le C g \sigma_n. \label{eq:ddd3}
\end{align}

Combining \eqref{eq:ccc2}, \eqref{eq:ddd1}, \eqref{eq:ddd2} and \eqref{eq:ddd3}, we obtain that
\begin{equation*}
\big \langle \phi' , ( \1_{ \mathcal{H} } - \mathds{1}_{ \{ E_n \} }( H_n ) ) \phi' \big \rangle \le C'_0 g ,
\end{equation*}
for some positive constant $C'_0$ independent of $g$. By \eqref{eq:ccc1}, this implies that, for $g$ small enough, $\langle \phi , \phi' \rangle \neq 0$, which is a contradiction. Hence the theorem is proven.
\end{proof}

To conclude this section, we estimate the difference of the ground state energies $E$ and $E_n$. This estimate will be used several times in the sequel.
\begin{lem}\label{lm:rel_Ntau2}
Suppose that Hypothesis \ref{hypothesis-1} holds. There exist $g_0 > 0$ and $C > 0$ such that, for all $0 \le g  \le g_0$ and $n \in \mathbb{N}$,
\begin{align}
| E - E_n | \le C g \sigma_n^{2}. \label{eq:rel_Ntau2}
\end{align}
\end{lem}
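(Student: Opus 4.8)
The plan is to combine the variational characterization of $E_n$ with the ground state $\phi$ of $H$ and the $N_\tau$-type bound of Lemma~\ref{lm:rel_Ntau}. Recall from \eqref{eq:E<En} that $E\le E_n$ for all $n$, so it only remains to bound $E_n-E$ from above. Let $\phi$ be the normalized ground state of $H$ furnished by Theorem~\ref{thm:GS}, and recall that $gH_{I,\infty}^{\ \ n}=H-H_n$. Since $H_n$ is self-adjoint on all of $\mathcal H$ with $E_n=\inf\sigma(H_n)$, testing against $\phi$ gives
\begin{equation*}
E_n\le\langle\phi,H_n\phi\rangle=E-g\langle\phi,H_{I,\infty}^{\ \ n}\phi\rangle,\qquad\text{hence}\qquad 0\le E_n-E\le g\,\big|\langle\phi,H_{I,\infty}^{\ \ n}\phi\rangle\big|,
\end{equation*}
and it suffices to estimate the right-hand side.

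To estimate $|\langle\phi,H_{I,\infty}^{\ \ n}\phi\rangle|$ I would insert the factors $(\check H_\infty^n)^{\pm1/2}$ and $(N_{Z^0}+1)^{\pm1/2}$ around each monomial $H_{I,\infty}^{(j)\,n}$, observe that the contribution of the adjoint terms is the complex conjugate of the corresponding one, and apply Lemma~\ref{lm:rel_Ntau} to obtain
\begin{equation*}
\big|\langle\phi,H_{I,\infty}^{\ \ n}\phi\rangle\big|\le C\,\sigma_n\,\big\|(\check H_\infty^n)^{1/2}\phi\big\|\,\big\|(N_{Z^0}+1)^{1/2}\phi\big\|.
\end{equation*}
The factor $\|(N_{Z^0}+1)^{1/2}\phi\|$ is bounded uniformly for $0\le g\le g_0$: indeed $\mz N_{Z^0}\le H_{Z^0}\le H_0$ because $\omega_3\ge\mz$, and $\|(H_0+1)\phi\|\le C$ since the relative bound of Proposition~\ref{prop:relative-bound} gives, for $g$ small, $\|(H_0+1)u\|\le(1-a)^{-1}\|(H+1)u\|$ with some $a<1$, applied to $u=\phi$.

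The key point — which produces the extra factor $\sigma_n$ over Lemma~\ref{lm:rel_Ntau} — is the bound $\|(\check H_\infty^n)^{1/2}\phi\|\le Cg\sigma_n$. Recalling $\check H_\infty^n=\mathrm{d}\Gamma(\1_{[0,\sigma_n]}(p_1)p_1)+\mathrm{d}\Gamma(\1_{[0,\sigma_n]}(p_2)p_2)$ one has
\begin{equation*}
\langle\phi,\check H_\infty^n\phi\rangle=\int_{p_1\le\sigma_n}p_1\,\|b_+(\xi_1)\phi\|^2\,d\xi_1+\int_{p_2\le\sigma_n}p_2\,\|b_-(\xi_2)\phi\|^2\,d\xi_2.
\end{equation*}
The pull-through formula for $H$ — obtained exactly as Lemma~\ref{lem:A}, see \cite{AGG} — reads $(H-E+\omega(p_1))b_+(\xi_1)\phi=gV_+(\xi_1)\phi$, so that $H-E\ge0$ yields $\|b_+(\xi_1)\phi\|\le g\,\omega(p_1)^{-1}\|V_+(\xi_1)\phi\|=(g/p_1)\|V_+(\xi_1)\phi\|$; together with Lemma~\ref{lem:B} and the bound on $\|(N_{Z^0}+1)^{1/2}\phi\|$ this gives $\|b_+(\xi_1)\phi\|\le (Cg/p_1)\sum_{j=1,2}\|F^{(j)}(\xi_1,\cdot,\cdot)\|_{L^2}$, and hence
\begin{equation*}
\int_{p_1\le\sigma_n}p_1\,\|b_+(\xi_1)\phi\|^2\,d\xi_1\le Cg^2\sum_{j=1,2}\big\|p_1^{-1/2}\1_{[0,\sigma_n]}(p_1)F^{(j)}\big\|^2\le Cg^2\sigma_n^2,
\end{equation*}
the last inequality being precisely the estimate of Appendix~\ref{appendixA} already used in the proof of Lemma~\ref{lm:rel_Ntau}; the $p_2$-term is identical. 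Thus $\|(\check H_\infty^n)^{1/2}\phi\|\le Cg\sigma_n$, and combining the displays above, $E_n-E\le g\,|\langle\phi,H_{I,\infty}^{\ \ n}\phi\rangle|\le Cg^2\sigma_n^2\le Cg\sigma_n^2$ for $0\le g\le g_0\le1$, which is \eqref{eq:rel_Ntau2}.

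The main obstacle is the use of the pull-through formula for the full, infrared-singular Hamiltonian $H$ (rather than for $H_n$): one must know a priori that $\phi\in\mathfrak{D}(N_\pm^{1/2})$ and $b_\pm(\xi)\phi\in\mathfrak{D}(H)$ for a.e.\ $\xi$, which however follows from the $N_\tau$ estimates of \cite{GlimmJaffe} as in Lemma~\ref{lem:C}. An alternative that relies only on the already-proved Lemma~\ref{lem:A} is to pass to the limit: using the weak lower semicontinuity of the nonnegative quadratic form $u\mapsto\langle u,\check H_\infty^n u\rangle$ along $\phi_m\rightharpoonup\phi$ (with $\phi_m$ the ground state of $H_m$), together with the fact that $\phi_m$ carries no field modes below $\sigma_m$, one gets $\langle\phi,\check H_\infty^n\phi\rangle\le\liminf_{m}\langle\phi_m,\check H_\infty^n\phi_m\rangle\le Cg^2\sigma_n^2$, estimating the right-hand side exactly as above but with $\phi_m$ in place of $\phi$.
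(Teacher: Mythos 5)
Your proposal is correct and follows essentially the same route as the paper: the variational bound $E_n\le\langle\phi,H_n\phi\rangle$ combined with Lemma~\ref{lm:rel_Ntau}, a uniform bound on $\|(N_{Z^0}+1)^{1/2}\phi\|$, and the estimate $\|(\check H_\infty^n)^{1/2}\phi\|\le C\sigma_n$ obtained by adapting Lemma~\ref{lem:C} (pull-through for $H$ plus the Appendix~\ref{appendixA} kernel bounds). The only difference is that you spell out the adaptation of Lemma~\ref{lem:C} (and offer a weak lower semicontinuity fallback), which the paper leaves as a ``straightforward'' remark.
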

\begin{proof}
It has been observed in the proof of the previous theorem that $E \le E_n$ (see \eqref{eq:E<En}). Hence it remains to verify that
\begin{equation*}
E_n \le E + C g \sigma_n^2 ,
\end{equation*}
for some positive constant $C$. Let $\phi$ be a ground state of $H$. We have that
\begin{equation}\label{eq:En<E}
E_n \le \langle \phi , H_n \phi \rangle = E + \langle \phi , (H_n - H) \phi \rangle .
\end{equation}
By Lemma \ref{lm:rel_Ntau}, this implies that
\begin{equation}\label{eq:E_n<E}
E_n - E \le C g \sigma_n \big \| (  \check{H}_\infty^n )^{ \frac12} \phi \big \| \big \| ( N_{Z_0}Ê+ 1 )^{\frac12} \phi \big \|.
\end{equation}
Since $N_{Z_0}$ is relatively $H_0$-bounded, we have that $\big \| ( N_{Z_0}Ê+ 1 )^{\frac12} \phi \big \| \le C$, and adapting Lemma \ref{lem:C} in a straightforward way, one verifies that
\begin{align*}
\big \| (  \check{H}_\infty^n )^{\frac12} \phi \big \|  & \leq C | g | \sum_{j=1,2} \sum_{ l = 1 , 2 }  \left\| \1_{ [ 0 , \sigma_n ] }( p_l ) \frac{F^{(j)}}{p_l^{\frac12}}\right\| \big \| (H_0 + \1)^\frac12 \phi \big \| \le C \sigma_n ,
\end{align*}
the second inequality being a consequence of the estimates of Appendix \ref{appendixA}. Together with \eqref{eq:E_n<E}, this concludes the proof of the lemma.
\end{proof}
%
%


\subsection{Proof of Theorem~\ref{thm:es}}
To prove Theorem \ref{thm:es}, we follow the strategy of \cite{DG} and \cite{Takaesu2014}. We recall the following standard notations that will be used in this section: for any $g \in \mathfrak{h}_c$,
\begin{equation*}
b_{\pm}^*(g) := \int_\Sigma g( \xi ) b_{\pm}^*(\xi) d \xi  , \quad b_{\pm}(g) := \int_\Sigma \overline{g( \xi )} b_{\pm}(\xi) d \xi ,
\end{equation*}
and, for any $f \in L^2( \Sigma_3 )$,
\begin{equation*}
 a^*(f) = \int_{\Sigma_3} {f(\xi_3)} a^*(\xi_3) d\xi_3, \quad a(f) = \int_{\Sigma_3} \overline{f(\xi_3)} a(\xi_3) d\xi_3.
\end{equation*}
We begin with the following technical lemma.
\begin{lem}\label{lem:ess-spectrum-comm-estimates}
Suppose that Hypothesis~\ref{hypothesis-1} holds. Let $(f_n)$ and $(g_n)$ be respectively two sequences of elements in $\D(p_1)$ and $\D(p_2)$ such that $\wl f_n = 0$ and $\wl g_n=0$. Then, for all $\psi\in\D(H)$,
\begin{equation}\nonumber
\begin{split}
   \lim_{n\to\infty} [H_I,  b_+(f_n) + b_-(g_n) + b_+^*(f_n) + b_-^*(g_n)]\psi = 0 .
\end{split}
\end{equation}
\end{lem}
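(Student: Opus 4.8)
The plan is to compute the commutator $[H_I, b_+(f_n)]$, $[H_I, b_-(g_n)]$ and their adjoints explicitly using the canonical anticommutation relations, and to show each resulting term is bounded (after applying it to $\psi \in \D(H) = \D(H_0)$) by a quantity of the form $\| \langle F^{(j)}, f_n \rangle_{\text{partial}} \| \cdot \| (H_0+1)\psi \|$, which tends to $0$ by weak convergence. First I would recall that $H_I = H_I^{(1)} + H_I^{(2)} + \mathrm{h.c.}$ with each $H_I^{(l)}$ of the form $\int F^{(l)}(\xi_1,\xi_2,\xi_3) b_+^*(\xi_1) b_-^*(\xi_2) a^\sharp(\xi_3) \, d\xi_1 d\xi_2 d\xi_3$. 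Since $b_+(f_n)$ anticommutes with $b_-^*(\xi_2)$ and commutes with $a^\sharp(\xi_3)$, and $\{ b_+(f_n), b_+^*(\xi_1) \} = \overline{f_n(\xi_1)}$, we get
\begin{equation*}
[ H_I^{(1)}, b_+(f_n) ] = - \int \Big( \int \overline{f_n(\xi_1)} F^{(1)}(\xi_1,\xi_2,\xi_3) \, d\xi_1 \Big) b_-^*(\xi_2) a(\xi_3) \, d\xi_2 d\xi_3 ,
\end{equation*}
and similarly for the other three monomials; the commutator with $b_+^*(f_n)$ produces a term with kernel $\int f_n(\xi_1) \overline{F^{(1)}}$ paired against $b_-(\xi_2) a^*(\xi_3)$ (from the h.c.\ part), and the commutators involving $b_-(g_n)$, $b_-^*(g_n)$ are analogous with the roles of $\xi_1$ and $\xi_2$ interchanged.

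Next I would estimate each such term on $\psi \in \D(H_0)$. Each is a Wick monomial of the type handled in the proof of Proposition~\ref{prop:relative-bound} (one fermionic creation/annihilation operator and one bosonic one), so the same $N_\tau$-estimate argument of \cite{GlimmJaffe} gives a bound of the form
\begin{equation*}
\big\| [H_I^{(1)}, b_+(f_n)] \psi \big\| \le C \Big( \big\| \tfrac{1}{\sqrt{\omega_3}} \langle f_n, F^{(1)}(\cdot,\xi_2,\xi_3)\rangle_{\xi_1} \big\|_{L^2(d\xi_2 d\xi_3)} + \big\| \tfrac{1}{\sqrt{p_2\,\omega_3}} \langle f_n, F^{(1)}\rangle_{\xi_1} \big\| \Big) \big\| (H_0+1)\psi \big\| ,
\end{equation*}
and likewise for the remaining terms, where $\langle f_n, F^{(1)}\rangle_{\xi_1}(\xi_2,\xi_3) := \int \overline{f_n(\xi_1)} F^{(1)}(\xi_1,\xi_2,\xi_3)\, d\xi_1$. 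By Hypothesis~\ref{hypothesis-1}(iii) the kernels $F^{(j)}$ are supported in a fixed compact set in the $(p_1,p_2,k)$ variables, and by Lemma~\ref{lem:kernel-bound-1} (together with the Appendix~\ref{appendixA} estimates) the weighted kernels $F^{(j)}/p_i$, $F^{(j)}/\sqrt{p_i \omega_3}$, etc.\ are all in $L^2(\Sigma\times\Sigma\times\Sigma_3)$. Fixing $(\xi_2,\xi_3)$, the inner integral $\langle f_n, F^{(1)}(\cdot,\xi_2,\xi_3)\rangle_{\xi_1}$ is the inner product of $f_n$ with the fixed $L^2(d\xi_1)$ function $F^{(1)}(\cdot,\xi_2,\xi_3)$ (which lies in $\D(p_1)^*$ appropriately, but more simply is just in $L^2$), hence tends to $0$ as $n\to\infty$ by $\wl f_n = 0$; moreover it is dominated by $\|f_n\|_{L^2} \|F^{(1)}(\cdot,\xi_2,\xi_3)\|_{L^2} \le C \|F^{(1)}(\cdot,\xi_2,\xi_3)\|_{L^2}$, and the square of the latter is integrable in $(\xi_2,\xi_3)$ against the weights. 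Dominated convergence then forces the whole $L^2$-norm to $0$.

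The main obstacle is a slight subtlety: weak convergence $\wl f_n = 0$ in $L^2(\Sigma;\C^4)$ only gives pointwise (in $(\xi_2,\xi_3)$) convergence of the inner integrals to zero, so to pass to convergence of the $L^2(d\xi_2 d\xi_3)$-norm one must supply a uniform-in-$n$ square-integrable dominating function; this is exactly where Hypothesis~\ref{hypothesis-1}(iii) (compact support) and Lemma~\ref{lem:kernel-bound-1} (the weighted $L^2$ bounds on $F^{(j)}$, in particular that $F^{(j)}/\sqrt{p_i}$ is $L^2$) are used — $\sup_n \| f_n\|_{L^2} < \infty$ by the uniform boundedness principle, so $|\langle f_n, F^{(1)}(\cdot,\xi_2,\xi_3)\rangle_{\xi_1}|^2 \le C \|F^{(1)}(\cdot,\xi_2,\xi_3)\|_{L^2(d\xi_1)}^2$, and this bound is integrable against $1/(p_2\omega_3)$ and against $1/\omega_3$ precisely because the constants $C_{F^{(1)}}, \tilde C_{F^{(1)}}$ in \eqref{eq:defCF}--\eqref{eq:deftCF} are finite. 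Having disposed of the domination issue, I would collect the four commutator contributions (two for $f_n$, two for $g_n$, using the $\xi_1 \leftrightarrow \xi_2$ symmetric version of the bound for the $g_n$ terms), conclude each tends to $0$ in operator norm on $\D(H)$ relative to $(H_0+1)$, and hence that their sum applied to any fixed $\psi \in \D(H)$ tends to $0$, which is the claim.
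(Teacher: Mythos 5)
Your proposal is correct and follows essentially the same route as the paper: only the contractions of $b_+(f_n)$, $b_-(g_n)$ with the creation operators in $H_I^{(j)}$ (and of $b_+^*(f_n)$, $b_-^*(g_n)$ with the h.c.\ parts) survive, the resulting Wick monomials are bounded by $N_\tau$-type estimates in terms of the $L^2$-norm of the contracted kernel $\langle f_n , F^{(j)}(\cdot,\xi_2,\xi_3)\rangle$, and weak convergence plus dominated convergence (using boundedness of $(f_n)$, $(g_n)$ and square-integrability of the kernels) gives the vanishing limit. The only cosmetic difference is that the paper bounds the remaining monomials directly by $\| N_{Z^0}^{1/2}\psi\|$ (respectively $\|(N_{Z^0}+1)^{1/2}\psi\|$), exploiting the boundedness of the smeared fermionic operators, whereas you pass through weighted kernels and $\|(H_0+1)\psi\|$, which is equally valid for $\psi\in\D(H)$.
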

\begin{proof}
Using the canonical anticommutation relations for the creation and annihilation operators of the neutrinos, we obtain that
\begin{equation*}
[H_I^{(j)}, b_+^*(f_n)] = [H_I^{(j)}, b_-^*(g_n)] = 0,
\end{equation*}
for $j=1,2$. Taking the adjoints also shows that 
\begin{equation*}
[(H_I^{(j)})^*, b_+(f_n)] = [(H_I^{(j)})^*, b_-(g_n)] = 0.
\end{equation*}

We next prove that
\begin{equation}\label{eq:lem-esce-1}
\lim_{n\to\infty} [H_I^{(1)}, b_+(f_n) ]\psi = 0 .
\end{equation}
Using the expressions \eqref{eq:inter11b}--\eqref{eq:inter2244} of the interaction and the canonical commutation relations, a direct computation gives
\begin{equation*}
\begin{split}
 &  [ H_I^{(1)}, b_+(f_n)] \psi  = - \int \big \langle f_n , F^{(1)}(\cdot ,\xi_2,\xi_3) \big \rangle_{ \mathfrak{h}_c } b_-^*(\xi_2) a(\xi_3) \psi  d \xi_2 d \xi_3 .
\end{split}
\end{equation*}
The $N_\tau$ estimates of \cite{GlimmJaffe} then imply that
\begin{equation*}
\begin{split}
 \big \| [ H_I^{(1)}, b_+(f_n)] \psi \big \| &\le \Big ( \int \big | \big \langle F^{(1)}(\cdot ,\xi_2,\xi_3) , f_n \big \rangle_{ \mathfrak{h}_c } \big |^2 d \xi_2 d \xi_3 \Big )^{\frac12} \| N_{Z^0}^{\frac12} \psi \| .
\end{split}
\end{equation*}
Since $\wl f_n = 0$, we have that $\langle F^{(1)}(\cdot ,\xi_2,\xi_3) , f_n \big \rangle_{ \mathfrak{h}_c } \to 0$ for a.e. $\xi_2$, $\xi_3$. Moreover, since $(f_n)$ is weakly convergent, it is bounded and therefore, using in addition that $F \in L^2( d\xi_1 d \xi_2 d\xi_3 )$ and that $\| N_{Z^0}^{\frac12} \psi \| < \infty$ (since $\psi \in \D( H)$), we can apply Lebesgue's dominated convergence theorem to obtain that $[ H_I^{(1)}, b_+(f_n)] \psi \to 0$ as $n \to \infty$.

The proofs of $\lim_{n\to\infty} [H_I^{(1)}, b_-(g_n)]\psi = 0$, $\lim_{n\to\infty} [H_I^{(2)}, b_+(f_n)]\psi = 0$ and $\lim_{n\to\infty} [H_I^{(2)}, b_-(g_n)]\psi = 0$ are similar. To conclude the proof of Lemma~\ref{lem:ess-spectrum-comm-estimates}, we are thus left with showing that for $j=1,2$
\begin{equation*}
 \lim_{n\to\infty} [(H_I^{(j)})^*, b_+^*(f_n)]\psi  =
 \lim_{n\to\infty}  [(H_I^{(j)})^*, b_-^*(g_n)]\psi  = 0.
\end{equation*}
We can proceed similarly, writing for instance for the first term
\begin{equation*}
\begin{split}
 \big \| [ (H_I^{(1)})^* , b_+^*(f_n)] \psi \| &= \Big \| \int \big \langle  F^{(1)}(\cdot ,\xi_2,\xi_3) , f_n \big \rangle_{ \mathfrak{h}_c } b_-(\xi_2) a^*(\xi_3) \psi  d \xi_2 d \xi_3 \Big \| \\
&\le \Big ( \int \big | \big \langle F^{(1)}(\cdot ,\xi_2,\xi_3) , f_n \big \rangle_{ \mathfrak{h}_c } \big |^2 d \xi_2 d \xi_3 \Big )^{\frac12} \| (N_{Z^0}+ 1 )^{\frac12} \psi \| ,
\end{split}
\end{equation*}
and then applying Lebesgue's dominated convergence theorem.
\end{proof}
\begin{proof}[Proof of Theorem~\ref{thm:es}]
We adapt the proof of \cite{Takaesu2014} to our case (see also \cite{DG} and \cite{Arai}). For $\omega$ the multiplication operator by $\omega(p) = p$ , and $\gH_c$ given by \eqref{def:H_c}, we set
\begin{equation*}
 K := \omega\oplus\omega , \quad \mbox{ acting on } \gH_c\oplus \gH_c  .
\end{equation*}
The operator $\mathrm{d} \Gamma(\omega)\otimes\1 + \1\otimes \mathrm{d}\Gamma(\omega)$ on $\gF_D = \gF_a(\gH_c)\otimes\gF_a(\gH_c)$ is unitarily equivalent to the operator $\mathrm{d}\Gamma(K)$ on $\gF_a(\gH_c\oplus\gH_c)$. Hence, in the sequel of this proof, by abuse of notation, we write $\mathrm{d}\Gamma(K)$ for $\mathrm{d}\Gamma(\omega)\otimes\1 + \1\otimes \mathrm{d}\Gamma(\omega)$. Similarly, instead of using the notation $b^\sharp((f,g))$ for the fermionic creation and annihiliation operators on $\gF_a(\gH_c\oplus\gH_c)$, we use instead the unitarily equivalent operators $b^\sharp(f,g) := \big(b_+^\sharp(f) + b_-^\sharp(g)\big)\otimes\1$ acting on $\cH = \gF_D \otimes\gF_{Z^0}= \gF_a(\gH_c)\otimes\gF_a(\gH_c)\otimes\gF_{Z^0}$.

Since $\sigma_{\mathrm{ess}}(\omega) = [0,\infty)$, for any $\lambda\geq 0$, we can pick two Weyl sequences $(f_n)$ and $(g_n)$ in $\gH_c$, such that for $n\geq 1$, $\|f_n\| = \|g_n\|=1$, $f_n\in\D(p_1)$, $g_n\in\D(p_2)$, and $\wl_{n\to\infty} f_n =
\wl_{n\to\infty} g_n = 0$,  $\lim_{n\to\infty}(p_1-\lambda)f_n = \lim_{n\to\infty}(p_2-\lambda)g_n = 0$. Thus, $\big(\frac{1}{\sqrt{2}} (f_n, g_n)\big)$ is a Weyl sequence for $K$ and $\lambda$.

For any $\epsilon>0$, there exists a normalized state $\phi_\epsilon\in \mbox{Ran} P_H([E, E+\epsilon)$, where $P_H(.)$ is the spectral measure for $H$ and, recall, $E = \inf\sigma(H)$. We define
\begin{equation*}
 \cH\ni\psi_{n,\epsilon} := \frac{1}{\sqrt{2}} \left( b_+(f_n) + b_+^*(f_n) + b_-(g_n) + b_-^*(g_n) \right) \otimes \1 \ \phi_\epsilon .
\end{equation*}
Using the anticommutation relations for $b_\pm^\sharp$, a straightforward computation shows that
\begin{equation*}
  \| \psi_{n,\epsilon} \|^2 = \frac12 (\phi_\epsilon, (\|f_n\|^2 + \|g_n\|^2) \phi_\epsilon) = 1 .
\end{equation*}
We will now show that one can extract from the family $(\psi_{n,\epsilon})$ a subsequence $(\psi_{n_k, \epsilon_k})$ which is a Weyl sequence for $H$ and $E+\lambda$.

Let $\zeta$ and $\phi$ be in $\D(H)$, and pick $(f,g)\in\D(K)$. Then
\begin{equation}\label{eq:commut-ess-1}
\begin{split}
 & \left\la \zeta, \left[ H, \left( b(f,g) + b^*(f, g) \right) \right] \phi \right\ra  \\
 & = \la H\zeta, \left( b(f,g) + b^*(f, g) \right)\phi \ra
     \ -\ \la \left( b(f,g) + b^*(f, g) \right)\zeta, H\phi \ra \\
 & = \left\la \zeta,  \left[ H_0, \left( b(f,g) + b^*(f, g) \right ) \right] \phi \right\ra
 \ + \ g \left\la \zeta,  \left[ H_I, \left( b(f,g) + b^*(f, g) \right) \right] \phi \right\ra \\
 & = \left\la \zeta,  \left( b^*(K(f,g)) - b(K(f, g)) \right)  \phi \right\ra
 \ + \ g \left\la \zeta,  \left[ H_I, \left( b(f,g) + b^*(f, g) \right) \right] \phi \right\ra .
\end{split}
\end{equation}
For the first term in the second equality, we used the commutation relations 
\begin{equation*}
[\mathrm{d}\Gamma(h), b(w)] = - b(h^* w), \quad [ \mathrm{d} \Gamma(h), b^*(w)] = b^*(h w),
\end{equation*}
for any one-particle state $h$. Replacing $\phi$ by $\phi_\epsilon$, $g$ by $g_n$ and $f$ by $f_n$ in \eqref{eq:commut-ess-1} yields
\begin{equation}\nonumber
\begin{split}
 & \sqrt{2} \la H\zeta, \psi_{n,\epsilon} \ra \\
  & =  \la \zeta, \left( b((f_n,g_n)) + b^*((f_n,g_n))\right) H \phi_\epsilon \ra
    + \la \zeta, (b^*(K(f_n,g_n)) - b(K(f_n,g_n))) \phi_\epsilon \ra \\
    & \quad + g \la \zeta, [H_I, b(f_n,g_n) + b^*(f_n,g_n)] \phi_\epsilon \ra ,
\end{split}
\end{equation}
and therefore
\begin{equation}\nonumber
\begin{split}
 & \sqrt{2} H\psi_{n,\epsilon} \\
  & =  \left( b((f_n,g_n)) + b^*((f_n,g_n))\right) H \phi_\epsilon
    + (b^*(K(f_n,g_n)) - b(K(f_n,g_n))) \phi_\epsilon  \\
    & \quad + g [H_I, b(f_n,g_n) + b^*(f_n,g_n)] \phi_\epsilon .
\end{split}
\end{equation}
Since $\lambda$ is real, we obtain
\begin{equation}\nonumber
\begin{split}
 & \sqrt{2} (H - \lambda - E) \psi_{n,\epsilon} \\
  & =  \left( b((f_n,g_n)) + b^*((f_n,g_n))\right) (H - E )  \phi_\epsilon
    + b^*((K-\lambda)(f_n,g_n))\phi_\epsilon \\
    & \quad - b((K+\lambda)(f_n,g_n)) \phi_\epsilon
      + g [H_I, b(f_n,g_n) + b^*(f_n,g_n)] \phi_\epsilon ,
\end{split}
\end{equation}
and hence
\begin{equation}\nonumber
\begin{split}
 & \|(H - \lambda - E) \psi_{n,\epsilon}\| \\
  & \leq \frac{1}{\sqrt{2}}
  \| \left(b_+(f_n) + b_-(g_n) + b_+^*(f_n) + b_-^*(g_n) \right) (H - E )  \phi_\epsilon \| \\
  & \quad  + \frac{1}{\sqrt{2}}  \| (b_+^*((p_1-\lambda)f_n) + b_-^*((p_2-\lambda)g_n) )\phi_\epsilon\| \\
    & \quad + \frac{1}{\sqrt{2}} \| (b_+((p_1+\lambda)f_n) + b_-((p_2+\lambda)g_n) )\phi_\epsilon\| \\
    & \quad + \frac{g}{\sqrt{2}}  \left\{
    \| [H_I, b_+(f_n) + b_-(g_n)]\phi_\epsilon\|
    +
    \| [H_I, b_+^*(f_n) + b_-^*(g_n)]\phi_\epsilon \| 
    \right\} .
\end{split}
\end{equation}
The latter estimate implies that
\begin{align}
 & \|(H - \lambda - E) \psi_{n,\epsilon}\| \notag \\
  & \leq 2\sqrt{2}  \|(H-E)  \phi_\epsilon \| \notag \\
  & \quad  + \sqrt{2}  (\| (p_1-\lambda)f_n\| + \|(p_2-\lambda)g_n\| ) \|\phi_\epsilon\| \notag \\
&\quad + \sqrt{2} \lambda  (\| b_+(f_n) \phi_\epsilon\| + \| b_-(g_n)\phi_\epsilon\|) \notag \\
    & \quad + \frac{g}{\sqrt{2}}   \left\{
    \| [H_I^{(1)}, b_+(f_n) + b_-(g_n)]\phi_\epsilon\|
    +
    [H_I^{(1)*}, b_+^*(f_n) + b_-^*(g_n)]\phi_\epsilon \|
    \right\} \notag \\
     & \quad + \frac{g}{\sqrt{2}}   \left\{
    \| [H_I^{(2)}, b_+(f_n) + b_-(g_n)]\phi_\epsilon\|
    +
    [H_I^{(2)*}, b_+^*(f_n) + b_-^*(g_n)]\phi_\epsilon \|
    \right\} . \label{eq:commut-ess-2}
\end{align}
By assumption on $\phi_\epsilon$, the first term on the right hand side is smaller than $\epsilon$. Moreover, by construction, since $(f_n)_{n\geq 1}$ (respectively $(g_n)_{n\geq 1}$) is a Weyl sequence for $p_1$ associated to $\lambda$ (respectively for $p_2$ associated to $\lambda$), we also have that $\lim_{n\to\infty}\|(p_1 - \lambda)f_n\|= 0$, and $\lim_{n\to\infty}\|(p_2 - \lambda)g_n\|=0$. According to \cite[Lemma~2.1]{Takaesu2014}, since $\wl g_n = \wl f_n=0$, we get $\lim_{n\to\infty}\|b_+(f_n)\phi_\epsilon\| = \lim_{n\to\infty}\|b_-(g_n)\phi_\epsilon \| =0$. Finally, by Lemma~\ref{lem:ess-spectrum-comm-estimates}, the last four terms in the right hand side of \eqref{eq:commut-ess-2} involving the interaction $H_I$ tend to zero as $n$ tends to infinity. This shows that
\begin{equation*}
 \lim_{\epsilon\to 0} \limsup_{n\to\infty} \|(H-\inf\sigma(H)-\lambda)\psi_{n,\epsilon}\| = 0 ,
\end{equation*}
and thus, by extraction of a subsequence $(n_j,\epsilon_j)_{j\in\N}$ of $(n,\epsilon)_{n\geq 1, \epsilon>0}$, we get
\begin{equation}\label{eq:commut-ess-3}
 \lim_{j\to\infty}\| (H-\inf\sigma(H)-\lambda)\psi_{n_j,\epsilon_j} \| =0.
\end{equation}
Moreover, using again \cite[Lemma~2.1]{Takaesu2014} and the fact that $\wl g_n = \wl f_n=0$, we obtain
\begin{equation}\label{eq:commut-ess-4}
 \wl_{j\to\infty} \psi_{n_j,\epsilon_j} =0.
\end{equation}
Since $\lambda$ is an arbitrary nonnegative real number, \eqref{eq:commut-ess-3} and \eqref{eq:commut-ess-4} conclude the proof.
\end{proof}


\section{Local Decay}\label{section:results}

In this section, we prove Theorem \ref{thm:return}. We follow the strategy explained in Section \ref{S2}. In a first subsection, we introduce the conjugate operator $A_{\sigma_n}$ that is used throughout the proof. The low energy parameter $\sigma_n$ corresponds to the distance from the ground state energy $E$ to the spectral interval $I_{\sigma_n}$ on which the Mourre estimate will be proven. In Section \ref{subsec:regul}, we verify that $H$ is sufficiently regular w.r.t. $A_{\sigma_n}$ in a suitable sense, and we establish uniform bounds on the commutators between $H$ and $A_{\sigma_n}$. Section \ref{subsec:Mourre} is devoted to the proof of the Mourre estimate for $H$ in the spectral interval $I_{\sigma_n}$. Finally, in Section \ref{subsec:unif}, we obtain uniformity in $\sigma_n$ of the local energy decay, and we derive the property of relaxation to the ground state stated in Theorem \ref{thm:return}.

For simplicity of exposition, we prove Theorem \ref{thm:return} with $m_{Z^0}$ replaced by $m_{Z^0} / 3$, but it is not difficult to modify the definitions of $I_{\sigma_n}$ and the function $\varphi$ in \eqref{eq:defvarphi1} below to obtain the result as stated in Theorem \ref{thm:return}.


\subsection{The conjugate operator}\label{subsec:conjugate}

Recall from Proposition \ref{prop:spectral-gap} that there exists a positive constant $C_{\mathrm{gap}}$ such that
$
   \inf\left( \sigma(K_{n}) \setminus \{E_{n}\}\right) - E_{n} \geq (1-C_{\mathrm{gap}} g)\sigma_n ,
$
for all $n \in \N$ and $0 \le g \le g_0$, where $K_n$ denotes the restriction of the infrared cutoff Hamiltonian $H_n$ to the Hilbert space $\mathcal{H}_n = \mathfrak{F}_n \otimes \mathfrak{F}_n \otimes \mathfrak{F}_{Z^0}$. For shortness we set
\begin{equation*}
\rho := 1 - C_{\mathrm{gap}}  g_0 ,
\end{equation*}
where $g_0$ is given by Proposition \ref{prop:spectral-gap}. This insures that the spectrum of $K_n$ has a gap of size at least $\rho \sigma_n$ above the ground state energy, for any $n \in \mathbb{N}$ and for any $g$ small enough.

Let $n \in \mathbb{N}$. We want to prove a Mourre estimate for $H$ on the interval
\begin{equation*}
I_{\sigma_n} := [ E + \rho \sigma_{n+1} / 4 , E + \rho \sigma_n / 3 ].
\end{equation*}
Here we recall that $\gamma = 1/4$ and that $\sigma_{n+1} = \gamma \sigma_n$.

Let $\varphi \in C^\infty_0(\R ; [0,1])$ be such that, for all $x\in\R$,
\begin{equation}\label{eq:defvarphi1}
\varphi (x)
:=\begin{cases}
1, &\text{if } x\in [ \rho \gamma / 4 , \rho / 3 ] , \\
0, & \text{if } x\in\R\setminus ( \rho \gamma / 5 , \rho / 2 ),
\end{cases}
\end{equation}
and let, for all $n \in \mathbb{N}$,
\begin{equation}\label{eq:defvarphi2}
\varphi_{\sigma_n}( x ) = \varphi ( \sigma_n^{-1} x ).
\end{equation}
Notice that $\varphi_{\sigma_n} \in C^\infty_0(\R ; [0,1])$ and that
\begin{equation*}
\varphi_{\sigma_n} (x)
:=\begin{cases}
1, & \text{if } x\in [ \rho \sigma_{n+1} / 4 , \rho \sigma_n / 3 ] , \\
0, & \text{if } x\in\R\setminus ( \rho \sigma_{n+1} / 5 , \rho \sigma_n / 2 ).
\end{cases}
\end{equation*}
The conjugate operator that we choose, acting on $\mathcal{H} = \mathfrak{F}_a \otimes \mathfrak{F}_a \otimes \mathfrak{F}_{Z^0}$, is given by
\begin{equation*}
A_{\sigma_n} := \mathrm{d}\Gamma(a_{1,[0,\sigma_n/2]}) \otimes \1_a \otimes \1_{Z^0} + \1_a \otimes \mathrm{d}\Gamma(a_{2,[0,\sigma_n/2]}) \otimes \1_{Z^0} ,
\end{equation*}
where $\1_a$, respectively $\1_{Z^0}$, stands for the identity in $\mathfrak{F}_a$, respectively in $\mathfrak{F}_{Z^0}$, and, for $j=1,2$,
\begin{align*}
& a_{j,[0,\sigma_n/2]} := \chi_{[0,\sigma_n/2]}(p_j) a_j \chi_{[0,\sigma_n/2]}(p_j), \qquad a_j = \frac{i}{2} (p_j \partial_{p_j} + \partial_{p_j}p_j) .
\end{align*}
The function $\chi_{[0,\sigma_n/2]}$ satisfies $\chi_{[0,\sigma_n/2]}\in C^\infty( [ 0 , \infty ) ; [0,1])$ and
\begin{equation*}
\chi_{[0,\sigma_n/2]}(x) =
\begin{cases}
1, & \text{if } x\in [0,\sigma_n/2],\\
0, & \text{if } x\in [\sigma_n,\infty).
\end{cases}
\end{equation*}
Note that $A_{\sigma_n}$ defines a self-adjoint operator for all $n \in \mathbb{N}$.

\subsection{Regularity of the Hamiltonian w.r.t. the conjugate operator}\label{subsec:regul}

The first proposition of this section will show that the total Hamiltonian $H$ is sufficiently regular with respect to $A_{\sigma_n}$ for the Mourre theory to be applied. We recall that, given a bounded self-adjoint operator $B$ and another self-adjoint operator $A$ in a separable Hilbert space $\mathcal{H}$, $B$ is said to belong to the class $C^n(A)$ if and only if, for all $\psi \in \mathcal{H}$, the map
\begin{align}
s \mapsto e^{- i t A} B e^{i t A} \psi,
\end{align}
is of class $C^n(\R)$. Equivalently, $B \in C^n (A)$ if and only if, for all $k \in \{ 1 , \dots ,  n \}$, the iterated commutators $\mathrm{ad}_A^k(B)$ originally defined as quadratic forms on $\mathfrak{D}(A) \times \mathfrak{D}(A)$ (where $\mathrm{ad}_A^0(B) = B$ and $\mathrm{ad}_A^k(B) = [ \mathrm{ad}_A^{k-1}(B) , A ]$) extend by continuity to bounded quadratic forms on $\mathcal{H} \times \mathcal{H}$.

In the case where $B$ is not bounded, $B$ is said to belong to $C^n(A)$ if and only if the resolvent $(B-z)^{-1}$ belongs to $C^n(A)$ for all $z \in \C \setminus \mathrm{spec}( B ) $. Given $B \in C^n(A)$, one verifies using the functional calculus that $\varphi(B) \in C^n(A)$ for all $\varphi \in C_0^\infty( \R ; \R )$.

We also remind the reader that if $B \in C^1(A)$, then the set $\mathfrak{D}(B) \cap \mathfrak{D}(A)$ is a core for $B$, and the quadratic form $[B,A]$ originally defined on $( \mathfrak{D}(B) \cap \mathfrak{D}(A) ) \times ( \mathfrak{D}(B) \cap \mathfrak{D}(A) )$ extends by continuity to a bounded quadratic form on $\mathfrak{D}(B) \times \mathfrak{D}(B)$. Furthermore, the fact that $B \in C^1(A)$ implies that $(B-z)^{-1} \mathfrak{D}( A ) \subset \mathfrak{D}( A )$ for all $z \in \C \setminus \mathrm{spec}( B )$.

The next lemma will be used in the proof of Proposition \ref{lm:regularity}. We recall that $H_I^{(l)}( -i a_{j,[0,\sigma_n/2]} F^{(l)} )$ denotes the operator \eqref{eq:inter11b} or \eqref{eq:inter2244} with $F^{(l)}$ replaced by $-ia_{j,[0,\sigma_n/2]} F^{(l)}$, and likewise for $H_I^{(l)}\big ( a_{j,[0,\sigma_n/2]} a_{j',[0,\sigma_n/2]} F^{(l)} \big )$.
\begin{lem}\label{lm:acompleter2}
Suppose that Hypotheses~\ref{hypothesis-1} and \ref{hypothesis-2} hold. There exists $C > 0$ such that, for all $n \in \mathbb{N}$, $j=1,2$, $j'=1,2$ and $l=1,2$
\begin{align}
\big \| H_I^{(l)}\big ( -i a_{j,[0,\sigma_n/2]} F^{(l)} \big )^\sharp \psi \big \| \leq C \sigma_n \|(H_0+1)\psi\| , \label{eq:mkh1}
\end{align}
and
\begin{align}
\big \| H_I^{(l)}\big ( a_{j,[0,\sigma_n/2]} a_{j',[0,\sigma_n/2]} F^{(l)} \big )^\sharp \psi \big \| \leq C \sigma_n \|(H_0+1)\psi\| , \label{eq:mkh2}
\end{align}
where the notation $H_I^{(l)}\big ( - i a_{j,[0,\sigma_n/2]} F^{(l)} \big )^\sharp$ stands for $H_I^{(l)}\big ( - i a_{j,[0,\sigma_n/2]} F^{(l)} \big )$ or $H_I^{(l)} \big ( - i a_{j,[0,\sigma_n/2]} F^{(l)} \big )^*$, and likewise for $H_I^{(l)}\big ( a_{j,[0,\sigma_n/2]} a_{j',[0,\sigma_n/2]} F^{(l)} \big )^\sharp$.

Furthermore,
\begin{align}
\big \| (  \check{H}_\infty^n )^{- \frac12}  H_I^{(l)}\big ( -i a_{j,[0,\sigma_n/2]} F^{(l)} \big ) ( N_{ Z^0 } + 1 )^{-\frac12}  \big \| \le C_j \sigma_n . \label{eq:rel_Ntau_31}
\end{align}
\end{lem}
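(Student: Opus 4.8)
The plan is to reduce the three estimates to pointwise bounds on the kernels obtained from $F^{(l)}$ by applying the one-particle operator $-ia_{j,[0,\sigma_n/2]}$ once or twice, and then to run, on those kernels, the very arguments that already prove Proposition~\ref{prop:relative-bound} and Lemma~\ref{lm:rel_Ntau}.

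First I would unwind $-ia_{j,[0,\sigma_n/2]}$ at the one-particle level. Since $a_j=\frac i2(p_j\partial_{p_j}+\partial_{p_j}p_j)$ one has $-ia_{j,[0,\sigma_n/2]}=\chi_{[0,\sigma_n/2]}(p_j)\,(p_j\partial_{p_j}+\tfrac12)\,\chi_{[0,\sigma_n/2]}(p_j)$, and since $\chi_{[0,\sigma_n/2]}$ is supported in $[0,\sigma_n]$ with $|p_j\chi_{[0,\sigma_n/2]}'(p_j)|\le C$ and $|p_j^2\chi_{[0,\sigma_n/2]}''(p_j)|\le C$ uniformly in $n$, expanding the products shows that $-ia_{j,[0,\sigma_n/2]}F^{(l)}$ is supported in $\{p_j\le\sigma_n\}$ and is pointwise a bounded-coefficient combination of $F^{(l)}$ and $p_j\partial_{p_j}F^{(l)}$ there, while $(-ia_{j,[0,\sigma_n/2]})(-ia_{j',[0,\sigma_n/2]})F^{(l)}$ is supported in $\{p_j\le\sigma_n\}\cap\{p_{j'}\le\sigma_n\}$ and is similarly controlled there by $F^{(l)}$, $p_j\partial_{p_j}F^{(l)}$, $p_{j'}\partial_{p_{j'}}F^{(l)}$, $p_jp_{j'}\partial_{p_j}\partial_{p_{j'}}F^{(l)}$ and, when $j=j'$, $p_j^2\partial_{p_j}^2F^{(l)}$. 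Using the Leibniz rule on $F^{(l)}=h^{(l)}G^{(l)}$, Lemma~\ref{lem:kernel-bound-1}, Hypothesis~\ref{hypothesis-2} (so that the terms where a derivative falls on $G^{(l)}$ are harmless, gaining moreover an extra power $p_j\le\sigma_n$), and the boundedness of $\varepsilon_\mu(\xi_3)/\sqrt{2\omega_3(k)}$, I obtain that each of these kernels is, up to its support indicator, bounded by a fixed polynomial in $(\ell_{j_1},\ell_{j_2})$ times $(|k|^2+\mz^2)^{\frac14}$ times a finite sum of terms of the form $A(p_1,\ell_{j_1})A(p_2,\ell_{j_2})$ and --- only for the second-order kernel with $j=j'$ --- $p_j^2\tilde A(p_j,\ell_{j_j})\,p_{j'}^{-1}A(p_{j'},\ell_{j'})$. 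In particular the kernels lie in $L^2$ (so the operators in the statement make sense), and $G^{(l)}$ is supported in a compact set in $(p_1,p_2,k)$ by Hypothesis~\ref{hypothesis-1}(iii).

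For \eqref{eq:mkh1} and \eqref{eq:mkh2} I would then apply the bound of Proposition~\ref{prop:relative-bound} --- whose proof controls both $\|H_I^{(l)}(\,\cdot\,)u\|$ and $\|(H_I^{(l)}(\,\cdot\,))^*u\|$ for an arbitrary $L^2$ kernel --- to the kernels above, using Remark~iii) after Proposition~\ref{prop:relative-bound} to place the weight $p_i^{-1/2}$ appearing in the constants \eqref{eq:defCF}--\eqref{eq:deftCF} on the cut-off variable $p_j$ rather than on the other variable; this is precisely what keeps the $p^{-1}$-singularity sitting on the other variable in the second-order kernel bound of Lemma~\ref{lem:kernel-bound-1} integrable (near $0$ it gives $p^{-2}A(p,\ell)^2=O(p^{2\ell-2})$ with $\ell\ge1$). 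Using that $1/\omega_3$ is bounded, $(|k|^2+\mz^2)^{1/2}\omega_3(k)^{-1}=1$, $(|k|^2+\mz^2)^{1/2}$ is bounded on the compact momentum support of $G^{(l)}$, and $G^{(l)}$ is bounded, each of the constants $C_{\tilde F^{(l)}}$, $\tilde C_{\tilde F^{(l)}}$ attached to the modified kernel $\tilde F^{(l)}$ is then bounded by a constant times $\sum_{\ell\ge1}P(\ell)\int_0^{\sigma_n}\big(p_j^{-1}\text{ or }1\big)\big(A(p_j,\ell)^2+p_j^4\tilde A(p_j,\ell)^2\big)\,dp_j$. Since $\ell\ge1$, near $0$ one has $A(p,\ell)=O(p^\ell)$ and $p^2\tilde A(p,\ell)=O(p^\ell)$, and since $f$ is compactly supported the radial integrals in $A,\tilde A$ grow only like $\max(\Lambda,1)^{2\ell}$, so each $p_j$-integral is $O(\sigma_n^{2\ell}/\ell)$ and the $\ell$-sum converges to $O(\sigma_n^2)$, uniformly in $n$ --- this is exactly the type of computation behind the Appendix~\ref{appendixA} estimates used in the proofs of Lemma~\ref{lm:rel_Ntau} and Lemma~\ref{lm:rel_Ntau2}. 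Hence every such constant is $O(\sigma_n^2)$, and Proposition~\ref{prop:relative-bound} gives $\|H_I^{(l)}(\cdot)^\sharp\psi\|^2\le C\sigma_n^2\|(H_0+1)\psi\|^2$, i.e.\ \eqref{eq:mkh1}--\eqref{eq:mkh2}.

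Finally, \eqref{eq:rel_Ntau_31} follows by repeating the proof of Lemma~\ref{lm:rel_Ntau}: the kernel $-ia_{j,[0,\sigma_n/2]}F^{(l)}$ is supported in $\{p_j\le\sigma_n\}$, so the (anti)neutrino created in the $j$-th variable has momentum below $\sigma_n$ and $\mathrm{d}\Gamma(\1_{[0,\sigma_n]}(p_j)p_j)\le\check{H}_\infty^n$; \cite[Proposition~1.2.3]{GlimmJaffe} --- with, for $l=2$, the identity $(N_{Z^0})^{-1/2}R=R(N_{Z^0}+1)^{-1/2}$ as in Lemma~\ref{lm:rel_Ntau} --- bounds $\big\|\mathrm{d}\Gamma(\1_{[0,\sigma_n]}(p_j)p_j)^{-1/2}H_I^{(l)}(-ia_{j,[0,\sigma_n/2]}F^{(l)})(N_{Z^0}+1)^{-1/2}\big\|$ by $C\big\|p_j^{-1/2}\1_{[0,\sigma_n]}(p_j)(-ia_{j,[0,\sigma_n/2]}F^{(l)})\big\|$, which is $\le C\sigma_n$ by the ``$p_j^{-1}$'' case of the integral estimate above. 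I expect the only real difficulty to be the bookkeeping: identifying for each term which of the weights $p_i^{-1/2}$ in \eqref{eq:defCF}--\eqref{eq:deftCF} must be transferred, via Remark~iii), onto the cut-off variable so that the singular factors produced by Lemma~\ref{lem:kernel-bound-1} stay integrable, and checking that the resulting angular-momentum sums are uniform in $n$; the rest is a transcription of the proofs of Proposition~\ref{prop:relative-bound} and Lemma~\ref{lm:rel_Ntau}.
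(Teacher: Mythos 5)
Your proposal is correct and follows essentially the same route as the paper: one replaces $F^{(l)}$ by $a_{j,[0,\sigma_n/2]}F^{(l)}$ (or its second-order analogue) in the proof of Proposition~\ref{prop:relative-bound}, checks via Lemma~\ref{lem:kernel-bound-1} and Hypothesis~\ref{hypothesis-2} that the relevant weighted $L^2$-norms of the modified kernels are $O(\sigma_n)$, and obtains \eqref{eq:rel_Ntau_31} by rerunning the $N_\tau$ argument of Lemma~\ref{lm:rel_Ntau} on the kernel supported in $\{p_j\le\sigma_n\}$. Your explicit bookkeeping — in particular placing the weight $p_i^{-1/2}$ on the cut-off variable via Remark~iii) so that the $p^{-1}$-singularity of the second-derivative bound in Lemma~\ref{lem:kernel-bound-1} stays integrable when $\ell=1$ — is a detail the paper's terse proof leaves implicit, and you handle it correctly.
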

\begin{proof}
Replacing $F^{(j)}$ by $a_{j,[0,\sigma_n/2]} F^{(l)}$ in the proof of Proposition \ref{prop:relative-bound}, we see that
\begin{align}
\big \| H_I^{(l)}\big ( -i a_{j,[0,\sigma_n/2]} F^{(l)} \big )^\sharp \psi \big \| \leq C M_{\sigma_n} \|(H_0+1)\psi\| ,
\end{align}
where $M_{\sigma_n}$ denotes a positive constant depending on $\sigma_n$ and smaller that the maximum of the $L^2$-norms of $ a_{j,[0,\sigma_n/2]} F^{(l)}$, $\omega_3^{-1/2} a_{j,[0,\sigma_n/2]} F^{(l)}$, $p_2^{-1/2} a_{j,[0,\sigma_n/2]} F^{(l)}$ and $(p_2 \omega_3)^{-1/2} a_{j,[0,\sigma_n/2]} F^{(l)}$. Recalling that $F^{(l)} = h^{(l)} G^{(l)}$ with $h^{(l)}$ satisfying the estimates of Lemma \ref{lem:kernel-bound-1} and $G^{(l)}$ satisfying Hypothesis \ref{hypothesis-2}, a direct computation shows that the norms of these functions are bounded by $C \sigma_n$. This proves \eqref{eq:mkh1}. The proof of \eqref{eq:mkh2} is similar.

To establish \eqref{eq:rel_Ntau_31}, it suffices to adapt the proof of Lemma \ref{lm:rel_Ntau} in a straightforward way.
\end{proof}
We are now ready to verify that $H$ is regular with respect to $A_{\sigma_n}$, in the sense that $H \in C^2( A_{ \sigma_n } )$. We also prove that the first and second commutator of $H$ w.r.t. $A_{\sigma_n}$ extend to uniformly bounded operators from $\mathfrak{D}(H)$ to $\mathcal{H}$. These properties will be used in Section \ref{subsec:unif} below.
\begin{prop}\label{lm:regularity}
Suppose that Hypotheses~\ref{hypothesis-1} and \ref{hypothesis-2} hold. There exists $g_0 > 0$ such that, for all $0 \le g \le g_0$ and $n \in \mathbb{N}$, $H$ is of class $C^2 ( A_{ \sigma_n } )$. Moreover there exists $C > 0$ such that, for all $|g| \le g_0$ and $n \in \mathbb{N}$,
\begin{align}
& \big \|Ê[ H , i A_{ \sigma_n } ]( H_0 + i )^{-1}Ê\big \|Ê\le C , \label{eq:unifcomm1} \\
& \big \|Ê[ [ H , i A_{ \sigma_n } ] , i A_{Ê\sigma_n } ] ( H_0 + i )^{-1}Ê\big \|Ê\le C. \label{eq:unifcomm2}
\end{align}
\end{prop}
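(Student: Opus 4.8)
The plan is to reduce everything to commutator computations with the interaction kernels, using the standard criterion that $H \in C^2(A_{\sigma_n})$ provided the iterated commutators $[H,iA_{\sigma_n}]$ and $[[H,iA_{\sigma_n}],iA_{\sigma_n}]$, initially defined as quadratic forms on $\mathfrak{D}(A_{\sigma_n})\times\mathfrak{D}(A_{\sigma_n})$, extend to $H_0$-bounded operators, together with an invariance argument for the domain. First I would treat the free part $H_0 = H_D \otimes \1 + \1\otimes H_{Z^0}$. Since $A_{\sigma_n}$ acts trivially on $\mathfrak{F}_{Z^0}$, the boson term commutes with $A_{\sigma_n}$. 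For the Dirac part, $[\mathrm{d}\Gamma(\omega(p_j)), \mathrm{d}\Gamma(a_{j,[0,\sigma_n/2]})] = \mathrm{d}\Gamma(i[\omega(p_j), a_{j,[0,\sigma_n/2]}])$ at the one-particle level; since $\omega(p) = p$, the symbol $[\,p\,, \chi p\partial_p\chi\,]$ (suitably symmetrized) is a smooth compactly supported multiplication operator bounded by $C\sigma_n$, so $[H_0,iA_{\sigma_n}]$ is a bounded self-adjoint operator with norm $\le C\sigma_n$, and the double commutator likewise. This already gives the contribution of $H_0$ to \eqref{eq:unifcomm1}--\eqref{eq:unifcomm2} with a constant uniform in $n$ (in fact $O(\sigma_n)$).

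Next I would handle the interaction $gH_I = g(H_I^{(1)} + H_I^{(2)} + \mathrm{h.c.})$. The key observation is that $A_{\sigma_n}$, being a second-quantized one-particle operator acting on the neutrino/antineutrino variables, commutes through the Wick-ordered structure: $[H_I^{(l)}(F^{(l)}), iA_{\sigma_n}]$ equals $H_I^{(l)}$ with kernel $F^{(l)}$ replaced by $(-i)(a_{1,[0,\sigma_n/2]} + a_{2,[0,\sigma_n/2]})$ acting on the appropriate variables of $F^{(l)}$ — this is a standard computation using the CAR/CCR and the formula $[\mathrm{d}\Gamma(h), b^*(w)] = b^*(hw)$ reproduced in the proof of Theorem \ref{thm:es}. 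Iterating once more produces $H_I^{(l)}$ with kernel $(-i)^2 a_{j,[0,\sigma_n/2]}a_{j',[0,\sigma_n/2]}F^{(l)}$ summed over $j,j'\in\{1,2\}$. Lemma \ref{lm:acompleter2}, estimates \eqref{eq:mkh1} and \eqref{eq:mkh2}, then bound each such term by $C\sigma_n\|(H_0+1)\psi\|$, which is $H_0$-bounded uniformly in $n$; this gives \eqref{eq:unifcomm1} and \eqref{eq:unifcomm2} for the interaction part. To upgrade from ``the form commutators are $H_0$-bounded'' to ``$H \in C^2(A_{\sigma_n})$'', I would invoke the iterated version of the standard criterion (e.g. in the spirit of \cite[Lemma~A.2]{ABFG} or the Amrein--Boutet de Monvel--Georgescu theory): since $\mathfrak{D}(H) = \mathfrak{D}(H_0)$ by Theorem \ref{thm:sa}, and $[H,iA_{\sigma_n}]$, $[[H,iA_{\sigma_n}],iA_{\sigma_n}]$ extend to operators in $\mathcal{B}(\mathfrak{D}(H_0),\mathcal{H})$, together with the fact that $e^{itA_{\sigma_n}}$ preserves $\mathfrak{D}(H_0)$ (which follows because $A_{\sigma_n}$ is itself $H_0$-bounded, being a cutoff second-quantized operator, so a Nelson-type commutator argument applies), one concludes $H\in C^2(A_{\sigma_n})$.

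The main obstacle is the domain/core bookkeeping rather than any single estimate: one must check carefully that $e^{itA_{\sigma_n}}$ leaves $\mathfrak{D}(H_0)$ invariant and that the formal commutator identities hold as operator identities on $\mathfrak{D}(H_0)$, not merely as form identities on $\mathfrak{D}(A_{\sigma_n})$. I would dispose of this by noting that the symbol $a_{j,[0,\sigma_n/2]} = \chi_{[0,\sigma_n/2]}(p_j)a_j\chi_{[0,\sigma_n/2]}(p_j)$ is relatively bounded w.r.t. $\omega(p_j)=p_j$ on the one-particle level (the cutoff localizes to $p_j \le \sigma_n$ where $a_j$ is controlled), hence $\mathrm{d}\Gamma(a_{j,[0,\sigma_n/2]})$ is $H_D$-bounded, hence $A_{\sigma_n}$ is $H_0$-bounded; a standard argument (commuting $A_{\sigma_n}$ with the resolvent $(H_0+i)^{-1}$, whose commutator is again controlled) then yields the invariance of $\mathfrak{D}(H_0)$ under the unitary group and legitimizes all manipulations. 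All constants produced this way are $O(\sigma_n)$ and in particular uniform in $n$, which is what \eqref{eq:unifcomm1}--\eqref{eq:unifcomm2} assert. The smallness of $g$ plays no essential role here beyond ensuring, via Theorem \ref{thm:sa}, that $H$ is self-adjoint on $\mathfrak{D}(H_0)$.
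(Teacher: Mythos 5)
Your overall architecture coincides with the paper's: compute $[H,iA_{\sigma_n}]$ and $[[H,iA_{\sigma_n}],iA_{\sigma_n}]$ explicitly (the free part producing $\mathrm{d}\Gamma(\chi_{[0,\sigma_n/2]}^2(p_j)p_j)$ and its commutant with $ia_{j,[0,\sigma_n/2]}$, the interaction part producing $H_I^{(l)}$ with kernels $a_{j,[0,\sigma_n/2]}F^{(l)}$ and $a_{j,[0,\sigma_n/2]}a_{j',[0,\sigma_n/2]}F^{(l)}$), control the interaction terms by Lemma~\ref{lm:acompleter2} (estimates \eqref{eq:mkh1}--\eqref{eq:mkh2}), and upgrade to $C^2(A_{\sigma_n})$ via the invariance of $\mathfrak{D}(H_0)$ under $e^{itA_{\sigma_n}}$ together with the criterion of \cite[Theorem 6.3.4]{ABG}. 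That is exactly the route taken in the paper.

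However, there is a genuine gap precisely at the point you yourself identify as the main obstacle: your justification of the domain invariance rests on the claim that $a_{j,[0,\sigma_n/2]}=\chi_{[0,\sigma_n/2]}(p_j)a_j\chi_{[0,\sigma_n/2]}(p_j)$ is relatively bounded with respect to $\omega(p_j)=p_j$, hence that $A_{\sigma_n}$ is $H_0$-bounded. This is false: $a_j=\tfrac{i}{2}(p_j\partial_{p_j}+\partial_{p_j}p_j)$ contains the derivative $\partial_{p_j}$, and no multiplication operator in $p_j$ dominates a derivative; the cutoff only tames the factor $p_j$, not $\partial_{p_j}$ (take highly oscillatory one-particle states supported in $[0,\sigma_n/2]$ to see that $\|a_{j,[0,\sigma_n/2]}\psi\|/\|(p_j+1)\psi\|$ is unbounded). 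Consequently the Nelson-type argument you propose does not apply, and the invariance $e^{itA_{\sigma_n}}\mathfrak{D}(H_0)\subset\mathfrak{D}(H_0)$ must be obtained differently; the paper does it by the method of \cite[Proposition 9]{FGS1}, i.e. by analyzing the one-particle flow generated by the truncated dilation and its action on the dispersion relation, and only then invokes \cite[Theorem 6.3.4]{ABG} to reduce the $C^1$ and $C^2$ properties to $H_0$-form-boundedness of the commutators. A second, less damaging, inaccuracy: you assert that $[H_0,iA_{\sigma_n}]$ (and its iterated commutator) is a \emph{bounded} operator of norm $O(\sigma_n)$. The one-particle symbols $\chi_{[0,\sigma_n/2]}^2(p_j)p_j$ are indeed bounded by $\sigma_n$, but their second quantizations are unbounded on Fock space (arbitrarily many soft fermions), and their relative bound with respect to $H_0$ is $O(1)$, not $O(\sigma_n)$ -- which is all that \eqref{eq:unifcomm1}--\eqref{eq:unifcomm2} assert, and all the paper proves; only the interaction contributions carry the factor $\sigma_n$, via Lemma~\ref{lm:acompleter2}. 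So your closing claim that all constants are $O(\sigma_n)$ should be dropped, and the domain-invariance step needs the correct argument indicated above.
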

\begin{proof}
Using the method of \cite[Proposition 9]{FGS1}, it is not difficult to verify that, for all $n \in \mathbb{N}$ and $t \in \mathbb{R}$,
\begin{equation*}
e^{i t A_{\sigma_n} } \mathfrak{D}( H_0 ) \subset \mathfrak{D}( H_0 ).
\end{equation*}
By \cite[Theorem 6.3.4]{ABG} (see also \cite{GG}) and since $\mathfrak{D}( H ) = \mathfrak{D}( H_0 )$, in order to justify that $H \in C^1( A_{ \sigma_n } )$, it then suffices to prove that
\begin{equation*}
\big | \langle H \phi , A_{\sigma_n} \phi \rangle - \langle A_{\sigma_n} \phi , H \phi \rangle \big | \le C \big( \| H \phi \big \|^2 + \| \phi \|^2 \big) ,
\end{equation*}
for all $\phi \in \mathfrak{D} ( H ) \cap \mathfrak{D} ( A_{ \sigma_n } )$. A direct computation gives
\begin{align}
[ H , i A_{\sigma_n} ] =& \mathrm{d} \Gamma( \chi_{[0,\sigma_n/2]}^2( p_1 ) p_1 ) \otimes \1_{\mathfrak{F}_a} \otimes \1_{\mathfrak{F}_{Z^0}} + \1_{\mathfrak{F}_a} \otimes \mathrm{d} \Gamma( \chi_{[0,\sigma_n/2]}^2( p_2 ) p_2 ) \otimes \1_{\mathfrak{F}_{Z^0}} \notag \\
&+ g\sum_{l=1,2} \sum_{j=1,2} \big ( H_I^{(l)}\big ( - i a_{j,[0,\sigma_n/2]} F^{(l)} \big ) + H_I^{(l)} \big ( - i a_{j,[0,\sigma_n/2]} F^{(l)} \big )^* \big ). \label{eq:efg1}
\end{align}
As usual, all the commutators are to be understood in the sense of quadratic forms.

Obviously, the operator $\mathrm{d} \Gamma( \chi_{[0,\sigma_n/2]}^2( p_1 ) p_1 )$ is relatively bounded w.r.t. $\mathrm{d} \Gamma( p_1 )$ and hence also w.r.t. $H$. Moreover,
\begin{equation*}
\big \|Ê\mathrm{d} \Gamma( \chi_{[0,\sigma_n/2]}^2( p_1 ) p_1 )( H_0 + i )^{-1}Ê\big \|Ê\le C,
\end{equation*}
uniformly in $n \in \mathbb{N}$. The same holds for $\mathrm{d} \Gamma( \chi_{[0,\sigma_n/2]}^2( p_2 ) p_2 )$. According to Lemma \ref{lm:acompleter2}, we have in addition that
\begin{equation*}
\big \|ÊH_I^{(l)}\big ( - i a_{j,[0,\sigma_n/2]} F^{(l)} \big )^\sharp( H_0 + i )^{-1}Ê\big \|Ê\le C \sigma_n,
\end{equation*}
uniformly in $n \in \mathbb{N}$. Hence $H \in C^1( A_{ \sigma_n } )$ and \eqref{eq:unifcomm1} is indeed satisfied.

Similarly, in order to prove that $H \in C^2( A_{ \sigma_n } )$, it suffices to verify that the second commutator $[[H,iA_{\sigma_n}],iA_{\sigma_n}]$ extends to a relatively $H$-bounded operator. As in \eqref{eq:efg1}, we compute
\begin{align}
[ [ H , i A_{\sigma_n} ] , i A_{Ê\sigma_n }Ê] =& \mathrm{d} \Gamma \big ( [ \chi_{[0,\sigma_n/2]}^2( p_1 ) p_1 , i a_{1,[0,\sigma_n/2]} ] \big ) \otimes \1_{\mathfrak{F}_a} \otimes \1_{\mathfrak{F}_{Z^0}} \notag \\
&+ \1_{\mathfrak{F}_a} \otimes \mathrm{d} \Gamma \big ( [ \chi_{[0,\sigma_n/2]}^2( p_2 ) p_2 , i a_{2,[0,\sigma_n/2]} ] \big ) \otimes \1_{\mathfrak{F}_{Z^0}} \notag \\
&- g \sum_{l=1,2} \sum_{j=1,2 } \sum_{ j'=1,2} \big ( H_I^{(l)}\big ( a_{j,[0,\sigma_n/2]} a_{j',[0,\sigma_n/2]} F^{(l)} \big ) \notag \\
& \qquad \qquad \qquad \qquad \quad + H_I^{(l)}\big ( a_{j,[0,\sigma_n/2]} a_{j',[0,\sigma_n/2]} F^{(l)} \big )^* \big ). \label{eq:efg10}
\end{align}
We have that
\begin{equation*}
[Ê\chi_{[0,\sigma_n/2]}^2( p_1 ) p_1 , i a_{1,[0,\sigma_n/2]} ] = \chi_{[0,\sigma_n/2]}^4( p_1 ) p_1 + 2 \chi_{[0,\sigma_n/2]}^3( p_1 ) ( \partial_{p_1} \chi_{[0,\sigma_n/2]} )( p_1 ) p_1^2 ,
\end{equation*}
and therefore, since, in particular, $| p_1 ( \partial_{p_1} \chi_{[0,\sigma_n/2]} )( p_1 ) | \le C$ uniformly in $n \in \mathbb{N}$, we can conclude as before that
\begin{equation*}
\big \|Ê\mathrm{d} \Gamma \big ( [ \chi_{[0,\sigma_n/2]}^2( p_1 ) p_1 , i a_{1,[0,\sigma_n/2]} ] \big )( H_0 + i )^{-1}Ê\big \|Ê\le C.
\end{equation*}
The same holds for $\mathrm{d} \Gamma \big ( [ \chi_{[0,\sigma_n/2]}^2( p_2 ) p_2 , i a_{2,[0,\sigma_n/2]} ] \big )$ and Lemma \ref{lm:acompleter2} gives
\begin{equation*}
\big \|ÊH_I^{(l)}\big ( a_{j,[0,\sigma_n/2]} a_{j',[0,\sigma_n/2]} F^{(l)} \big )^\sharp( H_0 + i )^{-1}Ê\big \|Ê\le C \sigma_n,
\end{equation*}
uniformly in $n \in \mathbb{N}$. This concludes the proof.
\end{proof}
%
%

\subsection{The Mourre estimate and its consequences}\label{subsec:Mourre}

Our next task is to prove a Mourre estimate for $H$ in any spectral interval of size $\sigma_n$ located at a distance of order $\sigma_n$ from the ground state energy $E$. We follow the general approach of \cite{FGS1} and \cite{ABFG} with some noticeable differences due to the structure of the interaction Hamiltonian and the different infrared behavior.
\begin{theo}
\label{thm:mourre}
Suppose that Hypotheses~\ref{hypothesis-1} and \ref{hypothesis-2} hold. There exists $g_0>0$ such that, for all $0 \le g  \le g_0$ and $n \in \mathbb{N}$,
\begin{align*}
& \1_{[ \rho \sigma_{n+1} / 4 , \rho \sigma_n / 3 ]} ( H - E ) [ H , i A_{\sigma_n} ] \1_{[ \rho \sigma_{n+1} / 4 , \rho \sigma_n / 3 ]} ( H - E ) \\
&\ge \frac{ \rho \gamma \sigma_{n} }{ 6 } \1_{[ \rho \sigma_{n+1} / 4 , \rho \sigma_n / 3 ]} ( H - E ).
\end{align*}
\end{theo}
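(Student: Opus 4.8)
The plan is to follow the two-step scheme of \cite{FGS1} and \cite{ABFG}: first prove the Mourre estimate for the infrared-cutoff Hamiltonian $H_n$, exploiting its tensor structure and the spectral gap of Proposition~\ref{prop:spectral-gap}, and then transfer it to $H$ by a perturbative argument whose error terms are controlled by the $N_\tau$-type bounds of Lemmas~\ref{lm:rel_Ntau}, \ref{lm:rel_Ntau2} and \ref{lm:acompleter2}. For \emph{Step 1}, under the identification $\mathcal{H}\cong\mathcal{H}_n\otimes(\mathfrak{F}_\infty^n\otimes\mathfrak{F}_\infty^n)$ one has $H_n=K_n\otimes\1+\1\otimes\check{H}_\infty^n$, and the key structural fact is that, since $\chi_{[0,\sigma_n/2]}$ is supported in $[0,\sigma_n)$, the operators $a_{j,[0,\sigma_n/2]}$ annihilate every mode of momentum $\ge\sigma_n$, so $A_{\sigma_n}$ acts trivially on the hard factor, $A_{\sigma_n}=\1_{\mathcal{H}_n}\otimes A_{\sigma_n}^\infty$, whence $[H_n,iA_{\sigma_n}]=\1_{\mathcal{H}_n}\otimes[\check{H}_\infty^n,iA_{\sigma_n}^\infty]$ with, by the computation in \eqref{eq:efg1},
\[
[\check{H}_\infty^n,iA_{\sigma_n}^\infty]=\mathrm{d}\Gamma(\chi_{[0,\sigma_n/2]}^2(p_1)p_1)\otimes\1+\1\otimes\mathrm{d}\Gamma(\chi_{[0,\sigma_n/2]}^2(p_2)p_2)
\]
on $\mathfrak{F}_\infty^n\otimes\mathfrak{F}_\infty^n$. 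By Proposition~\ref{prop:spectral-gap} the spectrum of $K_n$ has a gap of size at least $\rho\sigma_n$ above $E_n$, so for any interval $\Delta\subset[0,\rho\sigma_n)$ one has $\1_\Delta(H_n-E_n)=P^{(n)}_{\mathrm{gs}}\otimes\1_\Delta(\check{H}_\infty^n)$, $P^{(n)}_{\mathrm{gs}}$ being the ground state projection of $K_n$; and if moreover $\Delta\subset[0,\sigma_n/2)$ then on the range of $\1_\Delta(\check{H}_\infty^n)$ no soft particle carries momentum $\ge\sigma_n/2$ (since $\tfrac{\sigma_n}{2}\mathrm{d}\Gamma(\1_{[\sigma_n/2,\sigma_n)}(p_j))\le\check{H}_\infty^n$ and the number operator is integer-valued), so $\chi_{[0,\sigma_n/2]}^2(p_j)\equiv1$ there and $[\check{H}_\infty^n,iA_{\sigma_n}^\infty]\1_\Delta(\check{H}_\infty^n)=\check{H}_\infty^n\1_\Delta(\check{H}_\infty^n)\ge(\inf\Delta)\1_\Delta(\check{H}_\infty^n)$. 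Taking $\Delta=[\rho\sigma_{n+1}/4,\rho\sigma_n/3]$, which lies in $[0,\sigma_n/2)$ because $\rho<1$, yields the Mourre estimate for $H_n$ with constant $\rho\sigma_{n+1}/4=\rho\gamma\sigma_n/4$; using instead the smooth cutoff $\varphi_{\sigma_n}$ of \eqref{eq:defvarphi2} gives $\varphi_{\sigma_n}(H_n-E_n)[H_n,iA_{\sigma_n}]\varphi_{\sigma_n}(H_n-E_n)\ge\tfrac{\rho\sigma_{n+1}}{5}\varphi_{\sigma_n}(H_n-E_n)^2$.

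\emph{Step 2} is the transfer to $H$. Let $\psi=\1_\Delta(H-E)\psi$, so $\psi=\varphi_{\sigma_n}(H-E)\psi$ since $\varphi_{\sigma_n}\equiv1$ on $\Delta-E$, and split $[H,iA_{\sigma_n}]=[H_n,iA_{\sigma_n}]+g[H_{I,\infty}^n,iA_{\sigma_n}]$. The tail commutator obeys $|\langle\psi,g[H_{I,\infty}^n,iA_{\sigma_n}]\psi\rangle|\le Cg\sigma_n\|\psi\|^2$ by the $N_\tau$-estimates, adapting Lemma~\ref{lm:rel_Ntau} as in Lemma~\ref{lm:acompleter2} (the kernels $F^{(j)}$ being replaced by $-ia_{j,[0,\sigma_n/2]}F^{(j)}$). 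For the leading term I would write $\varphi_{\sigma_n}(H-E)=\varphi_{\sigma_n}(H_n-E_n)+R$ and bound $\|R\|$ via the Helffer--Sj\"ostrand formula: the resolvent difference equals $(H-E-z)^{-1}(E-E_n-gH_{I,\infty}^n)(H_n-E_n-z)^{-1}$, and estimating the contribution of $gH_{I,\infty}^n$ by the weighted bound $\|(\check{H}_\infty^n)^{-1/2}gH_{I,\infty}^n(N_{Z^0}+1)^{-1/2}\|\le Cg\sigma_n$ of Lemma~\ref{lm:rel_Ntau} (rather than by a mere $H_0$-relative bound), together with $|E-E_n|\le Cg\sigma_n^2$ from Lemma~\ref{lm:rel_Ntau2} and the scaling $\|\varphi_{\sigma_n}'\|_\infty\le C\sigma_n^{-1}$, gives $\|R\|\le Cg$ \emph{uniformly in $n$}. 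Expanding $\langle\psi,[H_n,iA_{\sigma_n}]\psi\rangle$ with $\psi=(\varphi_{\sigma_n}(H_n-E_n)+R)\psi$, the main piece is $\ge\tfrac{\rho\sigma_{n+1}}{5}\|\varphi_{\sigma_n}(H_n-E_n)\psi\|^2\ge\tfrac{\rho\sigma_{n+1}}{5}(1-Cg)^2\|\psi\|^2$ by Step 1, the term quadratic in $R$ is nonnegative since $[H_n,iA_{\sigma_n}]\ge0$, and the cross terms are $O(g\sigma_n)\|\psi\|^2$ because $\|[H_n,iA_{\sigma_n}]\varphi_{\sigma_n}(H_n-E_n)\|\le C\sigma_n$ (on the range of $\varphi_{\sigma_n}(H_n-E_n)$ one has $\check{H}_\infty^n\le\rho\sigma_n/2<\sigma_n/2$). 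Collecting everything, $\langle\psi,[H,iA_{\sigma_n}]\psi\rangle\ge(\tfrac{\rho\gamma}{5}-Cg)\sigma_n\|\psi\|^2\ge\tfrac{\rho\gamma\sigma_n}{6}\|\psi\|^2$ once $g_0$ is small enough, which is the assertion (proven with $m_{Z^0}$ replaced by $m_{Z^0}/3$, the general case following by adjusting the intervals).

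\emph{Main obstacle.} The delicate point is Step 2, more precisely obtaining $\|\varphi_{\sigma_n}(H-E)-\varphi_{\sigma_n}(H_n-E_n)\|\le Cg$ with a constant \emph{independent of $n$}. A naive estimate, using only that $gH_{I,\infty}^n$ is relatively $H_0$-bounded with constant $O(g)$ and that $\varphi_{\sigma_n}$ has derivatives of size $\sigma_n^{-1}$, merely gives $O(g/\sigma_n)$, which blows up as $n\to\infty$ since $\sigma_n\to0$; it is precisely the gain of the extra factor $\sigma_n$ furnished by the $N_\tau$ estimates of \cite{GlimmJaffe} — reflecting the mild infrared decay of the cutoff kernels established in Appendix~\ref{appendixA} — that renders the error negligible against the Mourre constant, which is itself only of order $\sigma_n$. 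A subsidiary technical point is justifying the factorization $A_{\sigma_n}=\1_{\mathcal{H}_n}\otimes A_{\sigma_n}^\infty$, which relies on the cutoff $\chi_{[0,\sigma_n/2]}$ being supported strictly below $\sigma_n$, and (for the forms $[H,iA_{\sigma_n}]$ to make sense on the relevant spectral subspaces) on the regularity $H\in C^2(A_{\sigma_n})$ established in Proposition~\ref{lm:regularity}.
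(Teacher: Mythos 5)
Your proposal is correct and follows essentially the same route as the paper: an exact Mourre estimate for the infrared-cutoff Hamiltonian $H_n$ exploiting the tensor structure, the support of $\chi_{[0,\sigma_n/2]}$ and the spectral gap of Proposition~\ref{prop:spectral-gap}, followed by a Helffer--Sj\"ostrand transfer to $H$ in which the difference $\varphi_{\sigma_n}(H-E)-\varphi_{\sigma_n}(H_n-E_n)$ and the commutator tail are controlled by the weighted $N_\tau$-type bounds of Lemmas~\ref{lm:rel_Ntau}, \ref{lm:rel_Ntau2} and \ref{lm:acompleter2} (the paper in fact obtains the sharper bound $Cg\sigma_n^{1/2}$ for that difference, see \eqref{eq:dcb1}, and organizes the errors as the two cross terms \eqref{eq:comm_2}--\eqref{eq:comm_3} rather than using positivity of $[H_n,iA_{\sigma_n}]$). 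The one step you gloss over is that for the adjoint pieces $(H_{I,\infty}^{(j)\,n})^*$ the weight $(\check{H}_\infty^n)^{1/2}$ must sit against $(H-E-z\sigma_n)^{-1}$ rather than $(H_n-E_n-z\sigma_n)^{-1}$, which the paper handles with a second resolvent identity; this is a technical detail within the same argument, not a gap.
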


\begin{proof}
We decompose
\begin{align}
& \varphi_{\sigma_n} ( H - E ) [ H , i A_{\sigma_n} ] \varphi_{\sigma_n}( H - E ) \\
&= \varphi_{\sigma_n} ( H_n - E_n ) [ H , i A_{\sigma_n} ] \varphi_{\sigma_n}( H_n - E_n ) \label{eq:comm_1} \\
&\quad + ( \varphi_{\sigma_n} ( H - E ) - \varphi_{\sigma_n} ( H_n - E_n ) ) [ H , i A_{\sigma_n} ] \varphi_{\sigma_n}( H_n - E_n ) \label{eq:comm_2} \\
&\quad + \varphi_{\sigma_n} ( H - E ) [ H , i A_{\sigma_n} ] ( \varphi_{\sigma_n}( H - E ) - \varphi_{\sigma_n}( H_n - E_n ) ). \label{eq:comm_3}
\end{align}
Since the support of $\varphi_{ \sigma_n }$ is included into $( 0 , \rho \sigma_n )$, it should be noticed that, by Proposition \ref{prop:spectral-gap}, the following identity holds
\begin{equation}
\varphi_{\sigma_n} ( H_n - E_n ) = \1_{ \{ E_n \} }( K_n ) \otimes \varphi_{\sigma_n}( \check{H}_\infty^n ), \label{eq:cba1}
\end{equation}
the unitary equivalence $\mathcal{H} = \mathcal{H}_n \otimes \mathfrak{F}^n_\infty \otimes \mathfrak{F}^n_\infty$ being implicite, as above. Here we recall the notation $\check{H}_\infty^n = \mathrm{d} \Gamma( p_1 ) \otimes \1^n_\infty + \1^n_\infty \otimes \mathrm{d} \Gamma ( p_2 )$ on $\mathfrak{F}^n_\infty \otimes \mathfrak{F}^n_\infty$.\\

\noindent \textbf{Estimate of \eqref{eq:comm_2}}. We begin with estimating \eqref{eq:comm_2}. Using the Helffer-Sj{\"o}strand functional calculus, we represent
\begin{align}
& \varphi_{\sigma_n}( H - E ) - \varphi_{\sigma_n} ( H_n - E_n )  \notag \\
&= \sigma_n \int [ H_n - E_n - z \sigma_n ]^{-1} ( H_n - H + E - E_n ) [ H - E - z \sigma_n ]^{-1} d \tilde \varphi (z), \label{eq:almost1}
\end{align}
where
\begin{equation}
d \tilde \varphi(z) := - \frac{1}{ \pi } \frac{ \partial \tilde \varphi }{ \partial \bar z }(z) d x d y, \label{eq:almost2}
\end{equation}
and $\tilde \varphi$ denotes an almost analytic extension of $\varphi$ satisfying
\begin{equation}
\mathrm{supp}( \tilde \varphi ) \subset \{ z = x + i y , x \in \mathrm{supp}( \varphi ), | y | \le 1 \}, \label{eq:almost3}
\end{equation}
and
\begin{equation}
\left | \frac{ \partial \tilde \varphi }{ \partial \bar z }( x + i y ) \right | \le C y^4.  \label{eq:almost4}
\end{equation}
By Lemma \ref{lm:rel_Ntau2}, we have that $| E - E_n |Ê\le C g \sigma_n^2$. From the resolvent estimates
\begin{align*}
\big \| [ H_n - E_n - z \sigma_n ]^{-1} \big \| \le \sigma_n^{-1} | \mathrm{Im}(z) |^{-1} , \quad\big \|Ê[ H - E - z \sigma_n ]^{-1} \big \|Ê\le \sigma_n^{-1} | \mathrm{Im}(z) |^{-1} , 
\end{align*}
we obtain that
\begin{equation}
\big \|Ê[ H_n - E_n - z \sigma_n ]^{-1} ( E - E_n ) [ H - E - z \sigma_n ]^{-1} \big \| \le C g | \mathrm{Im}( z ) |^{-2}. \label{eq:qsd1}
\end{equation}
Now, recall the notations $H - H_n = g( H_{I,\infty}^{(1)\, n} + H_{I,\infty}^{(2)\, n} + ( H_{I,\infty}^{(1)\, n})^* + ( H_{I,\infty}^{(2)\, n})^*)$. Using Lemma \ref{lm:rel_Ntau}, we can estimate
\begin{align}
& \big \|Ê[ H_n - E_n - z \sigma_n ]^{-1}g H_{I,\infty}^{(j) \, n} [ H - E - z \sigma_n ]^{-1} \big \| \notag \\
& \le C g \sigma_n \big \|Ê[ H_n - E_n - z \sigma_n ]^{-1} (  \check{H}_\infty^n )^{ \frac12} \big \|Ê\big \|Ê( N_{ Z^0 } + 1 )^{\frac12} [ H - E - z \sigma_n ]^{-1} \big \| , \label{eq:nbv1}
\end{align}
where
\begin{equation*}
 \check{H}_\infty^n = \mathrm{d} \Gamma ( \1_{[0,\sigma_n]}(p_1) p_1 ) + \mathrm{d} \Gamma ( \1_{[0,\sigma_n]}(p_2) p_2 ).
\end{equation*}
Since
\begin{align}
 \1_n \otimes \check{H}_\infty^n \le ( K_n - E_n ) \otimes \1^n_\infty \otimes \1^n_\infty + \1_n \otimes \check{H}_\infty^n  = H_n - E_n, \label{eq:tyu1}
\end{align}
by positivity of $K_n - E_n$, we deduce that
\begin{equation}
\big \|Ê[ H_n - E_n - z \sigma_n ]^{-1} (  \check{H}_\infty^n )^{ \frac12} \big \| \le C | \mathrm{Im}(z) |^{-1} \sigma_n^{-\frac12}. \label{eq:nbv2}
\end{equation}
Moreover, since $N_{Z_0}$ is relatively bounded w.r.t. $H_0$ and hence w.r.t. $H$, it follows that
\begin{equation}
\big \|Ê( N_{ Z^0 } + 1 )^{\frac12} [ H - E - z \sigma_n ]^{-1} \big \| \le C | \mathrm{Im}(z) |^{-1} \sigma_n^{-1}. \label{eq:nbv3}
\end{equation}
Estimates \eqref{eq:nbv1}, \eqref{eq:nbv2} and \eqref{eq:nbv3} imply that
\begin{align}
& \big \|Ê[ H_n - E_n - z \sigma_n ]^{-1}g H_{I,\infty}^{(j) \, n} [ H - E - z \sigma_n ]^{-1} \big \| \le C g \sigma_n^{-\frac12} | \mathrm{Im}(z) |^{-2} . \label{eq:sdf1}
\end{align}
To estimate
\begin{align}
& \big \|Ê[ H_n - E_n - z \sigma_n ]^{-1} (H_{I,\infty}^{(j), n})^* [ H - E - z \sigma_n ]^{-1} \big \| ,
\end{align}
we can proceed similarly, using in addition the second resolvent equation, namely
\begin{align*}
&\big \| (  \check{H}_\infty^n )^{ \frac12} [ H - E - z \sigma_n ]^{-1} \big \| \\
& \le \big \| (  \check{H}_\infty^n )^{ \frac12} [ H_n - E_n - z \sigma_n ]^{-1} \big \| \\
&\quad + \big \| (  \check{H}_\infty^n )^{ \frac12} [ H_n - E_n - z \sigma_n ]^{-1} ( H_n - H + E - E_n ) [ H - E - z \sigma_n ]^{-1} \big \| \\
&\le C \sigma_n^{-\frac12} \big ( |Ê\mathrm{Im}(z) |^{-1}Ê+ |Ê\mathrm{Im}(z) |^{-2} \big ).
\end{align*}
In the last inequality, we have used \eqref{eq:nbv2} and Lemmas \ref{lm:rel_Ntau2} and \ref{lm:acompleter} which imply that
\begin{align}
\left \| ( H_n - H + E - E_n ) [ H - E - z \sigma_n ]^{-1} \right \| \le \frac{ C g }{ | \mathrm{Im} z | }. \label{eq:vvv3}
\end{align}
This shows that
\begin{align}
& \big \|Ê[ H_n - E_n - z \sigma_n ]^{-1} g( H_{I,\infty}^{(j) \, n} )^* [ H - E - z \sigma_n ]^{-1} \big \| \notag \\
& \le C g \sigma_n \big \|Ê[ H_n - E_n - z \sigma_n ]^{-1} (N_{Z^0} + 1)^{\frac12} \big \|Ê\big \|Ê(  \check{H}_\infty^n )^{ \frac12} [ H - E - z \sigma_n ]^{-1} \big \| \notag \\
& \le C g \sigma_n^{-\frac12} \big ( | \mathrm{Im}(z) |^{-2} + | \mathrm{Im}(z) |^{-3} \big ) . \label{eq:sdf2}
\end{align}
Combining \eqref{eq:sdf1} and \eqref{eq:sdf2}, we obtain that
\begin{align}
& \big \|Ê[ H_n - E_n - z \sigma_n ]^{-1} ( H_n - H ) [ H - E - z \sigma_n ]^{-1} \big \| \notag \\
& \le C g \sigma_n^{-\frac12} \big ( | \mathrm{Im}( z ) |^{-2} + | \mathrm{Im}( z ) |^{-3} \big ). \label{eq:qsd2}
\end{align}
Summarizing, Equations \eqref{eq:qsd1} and \eqref{eq:qsd2} together with the properties of the almost analytic extension $\tilde \varphi$ yield
\begin{equation}
\| \varphi_{\sigma_n}( H - E ) - \varphi_{\sigma_n} ( H_n - E_n ) \| \le C g \sigma_n^{\frac12}. \label{eq:dcb1}
\end{equation}

Now we estimate $\big \|Ê[ H , i A_{\sigma_n} ] \varphi_{\sigma_n}( H_n - E_n ) \big \|$ with $[ H , i A_{ \sigma_n } ]$ given by \eqref{eq:efg1}. We have that
\begin{align}
& \| \mathrm{d} \Gamma( \chi_{[0,\sigma_n/2]}^2( p_1 ) p_1 ) \varphi_{\sigma_n} ( H_n - E_n ) \| \notag \\
& = \big \| \1_{ \{ E_n \} }( K_n ) \otimes \big ( \mathrm{d} \Gamma( \chi_{[0,\sigma_n/2]}^2( p_1 ) p_1 ) \varphi_{\sigma_n} ( \check{H}_\infty^n ) \big ) \big \| \notag \\
& \le \big \| \mathrm{d} \Gamma( p_1 ) \varphi_{\sigma_n} ( \check{H}_\infty^n ) \big \| \le \rho \sigma_n / 2. \label{eq:arg1}
\end{align}
Likewise,
\begin{align}
& \| \mathrm{d} \Gamma( \chi_{[0,\sigma_n/2]}^2( p_2 ) p_2 ) \varphi_{\sigma_n} ( H_n - E_n ) \| \le \rho \sigma_n / 2. \label{eq:arg1bla}
\end{align}
From Lemma \ref{lm:acompleter2}, we deduce that
\begin{align}
& \left \| g H_I^{(l)} \big ( - i a_{j,[0,\sigma_n/2]} F^{(l)} \big )^\sharp \varphi_{\sigma_n} ( H_n - E_n ) \right \| \notag \\
& \le C gÊ\sigma_n \big \| ( H_0 + i ) \varphi_{\sigma_n} ( H_n - E_n ) \big \| \le C g \sigma_n. \label{eq:estim_BG_3}
\end{align}
Estimates \eqref{eq:arg1}, \eqref{eq:arg1bla} and \eqref{eq:estim_BG_3} imply that
\begin{equation*}
\big \|Ê[ H , i A_{\sigma_n} ] \varphi_{\sigma_n}( H_n - E_n ) \big \| \le C \sigma_n ,
\end{equation*}
which, together with \eqref{eq:dcb1}, yields
\begin{equation}
\| \eqref{eq:comm_2} \| \le C g \sigma_n^{\frac32}. \label{eq:edc1}
\end{equation}
$\quad$\\
\noindent \textbf{Estimate of \eqref{eq:comm_3}}.
The argument is analogous to the one we used to estimate \eqref{eq:comm_2}, except that we cannot argue in the same way as in \eqref{eq:arg1} and \eqref{eq:arg1bla} to handle the terms $\mathrm{d} \Gamma( \chi_{[0,\sigma_n/2]}^2( p_1 ) p_1 )$ and $\mathrm{d} \Gamma( \chi_{[0,\sigma_n/2]}^2( p_2 ) p_2 )$, because $\varphi_{\sigma_n}( H_n - E_n )$ is replaced by $\varphi_{\sigma_n}( H - E )$. Hence we modify our estimates as follows. We write
\begin{align}
& \mathrm{d} \Gamma( \chi_{[0,\sigma_n/2]}^2( p_1 ) p_1 ) ( \varphi_{\sigma_n}( H - E ) - \varphi_{\sigma_n} ( H_n - E_n ) ) \notag \\
& = \sigma_n \int \mathrm{d} \Gamma( \chi_{[0,\sigma_n/2]}^2( p_1 ) p_1 ) [ H_n - E_n - z \sigma_n ]^{-1} \notag \\
& \phantom{ = \sigma \int } ( H_n - H + E - E_n ) [ H - E - z \sigma_n ]^{-1} {\rm d} \tilde \varphi (z) , \label{eq:vvv1}
\end{align}
and estimate
\begin{align}
& \left \| \mathrm{d} \Gamma( \chi_{[0,\sigma_n/2]}^2( p_1 ) p_1 ) [ H_n - E_n - z \sigma_n ]^{-1} \right \| \notag \\
& \le \left \| (\1_n \otimes \mathrm{d} \Gamma( p_1 ) \otimes \1^n_\infty) [ H_n - E_n - z \sigma_n ]^{-1} \right \| \notag \\
& \le \left \| ( H_n - E_n ) [ H_n - E_n - z \sigma_n ]^{-1} \right \| \le C. \label{eq:vvv2}
\end{align}
The second inequality follows from $\1_n \otimes \mathrm{d} \Gamma( p_1 ) \otimes \1^n_\infty \le \1_n \otimes \check{H}_\infty^n$ and \eqref{eq:tyu1}. Combining \eqref{eq:vvv1}, \eqref{eq:vvv2} and \eqref{eq:vvv3}, we obtain that
\begin{align*}
\left \| \mathrm{d} \Gamma( \chi_{[0,\sigma_n/2]}^2( p_1 ) p_1 ) ( \varphi_{\sigma_n}( H - E ) - \varphi_{\sigma_n} ( H_n - E_n ) ) \right \| \le C g \sigma_n ,
\end{align*}
which, together with \eqref{eq:arg1}, yields
\begin{align*}
\left \| \mathrm{d} \Gamma( \chi_{[0,\sigma_n/2]}^2( p_1 ) p_1 ) \varphi_{\sigma_n}( H - E ) \right \| \le C \sigma_n.
\end{align*}
Proceeding analogously, we also have that
\begin{align*}
\left \| \mathrm{d} \Gamma( \chi_{[0,\sigma_n/2]}^2( p_2 ) p_2 ) \varphi_{\sigma_n}( H - E ) \right \| \le C \sigma_n ,
\end{align*}
and the terms $g H_I^{(l)}\big ( - i a_{j,[0,\sigma_n/2]} F^{(l)} \big )^\sharp$ in the commutator \eqref{eq:efg1} can be treated exactly in the same way as in \eqref{eq:estim_BG_3}, which gives
\begin{align*}
\left \| g H_I^{(l)}\big ( - i a_{j,[0,\sigma_n/2]} F^{(l)} \big )^\sharp \varphi_{\sigma_n} ( H - E ) \right \| \le C g \sigma_n.
\end{align*}
The last three estimates yield
\begin{equation}
\big \|Ê[ H , i A_{\sigma_n} ] \varphi_{\sigma_n}( H - E ) \big \| \le C \sigma_n , \label{jhg1}
\end{equation}
and hence, combining \eqref{eq:dcb1} and \eqref{jhg1}, we obtain that
\begin{equation}
\| \eqref{eq:comm_3} \| \le C g \sigma_n^{\frac32}. \label{eq:edc2}
\end{equation}
$\quad$ \\
\noindent \textbf{Lower bound for \eqref{eq:comm_1}}.
The properties of the supports $\mathrm{supp}( \varphi_{\sigma_n} )$ and $\mathrm{supp}( \chi_{ [0, \sigma_n/2 ] } )$ together with \eqref{eq:cba1} imply that
\begin{align*}
& \varphi_{\sigma_n} ( H_n - E_n ) \big ( \mathrm{d} \Gamma( \chi_{[0,\sigma_n/2]}^2( p_1 ) p_1 ) + \mathrm{d} \Gamma( \chi_{[0,\sigma_n/2]}^2( p_2 ) p_2 ) \big ) \varphi_{\sigma_n}( H_n - E_n ) \\
&= \1_{ \{ E_n \} }( K_n ) \otimes \left ( \varphi_{\sigma_n}( \check{H}_\infty^n ) \check{H}_\infty^n  \varphi_{\sigma_n}( \check{H}_\infty^n ) \right ).
\end{align*}
Consequently,
\begin{align*}
&  \varphi_{\sigma_n} ( H_n - E_n ) \big ( \mathrm{d} \Gamma( \chi_{[0,\sigma_n/2]}^2( p_1 ) p_1 ) + \mathrm{d} \Gamma( \chi_{[0,\sigma_n/2]}^2( p_2 ) p_2 ) \big ) \varphi_{\sigma_n}( H_n - E_n ) \\
& \ge \frac{ \rho \sigma_{n+1} }{ 5 } \1_{ \{ E_n \} }( K_n ) \otimes \varphi_{\sigma_n}( \check{H}_\infty^n )^2 \\
& =  \frac{ \rho \gamma \sigma_{ n } }{ 5 } \varphi_{\sigma_n} ( H_n - E_n )^2 \\
& \ge \frac{ \rho \gamma \sigma_{ n } }{ 5 } \varphi_{\sigma_n} ( H - E )^2 - C g \sigma_n^{\frac32} ,
\end{align*}
the last inequality being a consequence of \eqref{eq:dcb1}. Together with \eqref{eq:estim_BG_3}, this shows that
\begin{align}
\varphi_{\sigma_n} ( H_n - E_n ) [ H , i A_{\sigma_n} ] \varphi_{\sigma_n}( H_n - E_n ) \ge \frac{ \rho \gamma \sigma_n }{ 5 } \varphi_{\sigma_n} ( H - E )^2 - C g \sigma_n. \label{eq:edc3}
\end{align}
$\quad$ \\
\noindent \textbf{Conclusion}. Putting together \eqref{eq:edc1}, \eqref{eq:edc2} and \eqref{eq:edc3}, we finally obtain that
\begin{align}
\varphi_{\sigma_n} ( H - E ) [ H , i A_{\sigma_n} ] \varphi_{\sigma_n}( H - E ) \ge \frac{ \rho \gamma \sigma_n }{ 5 } \varphi_{\sigma_n} ( H - E )^2 - C g \sigma_n.
\end{align}
The proof is concluded by multiplying the previous inequality on the left and on the right by $\1_{[ \rho \sigma_{n+1} / 4 , \rho \sigma_n / 3 ]} ( H - E )$ and by choosing $g$ small enough.
\end{proof}

As a usual consequence of the Mourre estimate of Theorem \ref{thm:mourre}  together with the regularity of the Hamiltonian $H$ with respect to the conjugate operator $A$ (see Proposition \ref{lm:regularity}), we obtain the following limiting absorption principle and local decay property.
\begin{theo}
\label{thm:LAP}
Suppose that Hypotheses~\ref{hypothesis-1} and \ref{hypothesis-2} hold. There exists $g_0 > 0$ such that, for all $1/2 < s \le 1$, there exists $C_s > 0$ such that, for all $0 \le gÊ\le g_0$ and $n \in \mathbb{N}$,
\begin{align}
\sup_{ z \in J_n } \big \|Ê\langle A_{ \sigma_n } \rangle^{-s} ( H - z )^{-1}Ê\langle A_{ \sigma_n } \rangle^{-s} \big \| \le C_s \sigma_n^{-1} , \label{eq:LAP}
\end{align}
where $J_n = \{ z \in \mathbb{C} , \mathrm{Re}( z ) \in [ \gamma \rho \sigma_n / 4 , \rho \sigma_n / 3 ] , 0 < |Ê\mathrm{Im}( z ) |Ê\le 1 \}$. Furthermore, with $\varphi_{ \sigma_n }$ defined by \eqref{eq:defvarphi1}--\eqref{eq:defvarphi2}, we have the following: there exists $g_0 > 0$ such that, for all $0 \le s < 1$, there exists $C_s > 0$ such that, for all $0 \le gÊ\le g_0$ and $n \in \mathbb{N}$,
\begin{equation*}
\big \|Ê\langle A_{ \sigma_n } \rangle^{-s} e^{ - i t H }Ê\varphi_{ \sigma_n }( H -  E )Ê\langle A_{ \sigma_n } \rangle^{-s} \big \| \le C_s \langle \sigma_n t \rangle^{-s}.
\end{equation*}
\end{theo}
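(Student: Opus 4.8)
The plan is to deduce both the limiting absorption principle \eqref{eq:LAP} and the local decay estimate from the Mourre estimate of Theorem~\ref{thm:mourre} and the $C^2$-regularity of Proposition~\ref{lm:regularity}, via the abstract Mourre theory in the form used in \cite{FGS1,ABFG,BF} (see also \cite{ABG}). The only point that is not completely routine is the dependence of all constants on the low-energy parameter $\sigma_n$, so as to get bounds uniform in $n\in\mathbb{N}$ with the claimed $\sigma_n^{-1}$, respectively $\langle\sigma_n t\rangle^{-s}$, behaviour. The device I would use is a rescaling: set $\tilde H_n:=\sigma_n^{-1}(H-E)$, keeping $A_{\sigma_n}$ as conjugate operator. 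Since $\tilde H_n$ is an affine function of $H$, Proposition~\ref{lm:regularity} gives $\tilde H_n\in C^2(A_{\sigma_n})$, and Theorem~\ref{thm:mourre} (together with the intermediate estimate obtained in its proof, valid on the support of $\varphi_{\sigma_n}$) yields, after division by $\sigma_n$, a Mourre estimate for $\tilde H_n$ on a fixed open interval containing $\tilde\Delta:=[\rho\gamma/4,\rho/3]$, with a fixed strictly positive constant, and with $\varphi(\tilde H_n)=\varphi_{\sigma_n}(H-E)$ as plateau function.

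The crucial step is then to check that the inputs of the abstract theorem are uniformly bounded in $n$. What is needed is that the iterated commutators, localized to the spectral window, are uniformly bounded, i.e.
\begin{equation*}
\big\|\,[\,\tilde H_n,iA_{\sigma_n}\,]\,\varphi(\tilde H_n)\,\big\|\le C,\qquad \big\|\,[[\,\tilde H_n,iA_{\sigma_n}\,],iA_{\sigma_n}]\,\varphi(\tilde H_n)\,\big\|\le C,
\end{equation*}
with $C$ independent of $n$. The first bound is exactly \eqref{jhg1} divided by $\sigma_n$. The second follows by repeating the argument that led to \eqref{jhg1}: using \eqref{eq:dcb1} one reduces $\varphi_{\sigma_n}(H-E)$ to $\varphi_{\sigma_n}(H_n-E_n)$, which by Proposition~\ref{prop:spectral-gap} factorizes as $\1_{\{E_n\}}(K_n)\otimes\varphi_{\sigma_n}(\check H_\infty^n)$; then in \eqref{eq:efg10} one uses that $[\chi_{[0,\sigma_n/2]}^2(p_j)p_j,ia_{j,[0,\sigma_n/2]}]=\chi_{[0,\sigma_n/2]}^4(p_j)p_j+2\chi_{[0,\sigma_n/2]}^3(p_j)(\partial_{p_j}\chi_{[0,\sigma_n/2]})(p_j)p_j^2$ is a bounded multiplier supported in $\{p_j\le\sigma_n\}$ and dominated there by $Cp_j$, so that the corresponding $\mathrm{d}\Gamma$ is $\le C\check H_\infty^n\le C\sigma_n$ on the range of $\varphi_{\sigma_n}(H_n-E_n)$, while the interaction part is controlled by $Cg\sigma_n$ thanks to Lemma~\ref{lm:acompleter2}; the passage from $H_n$ back to $H$ is done as in \eqref{eq:vvv1}--\eqref{eq:vvv2}. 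It is precisely here that the low-energy decomposition is used: the naive bound $\|\,[H,iA_{\sigma_n}]\varphi_{\sigma_n}(H-E)\|\le\|\,[H,iA_{\sigma_n}](H_0+i)^{-1}\|\,\|(H_0+i)\varphi_{\sigma_n}(H-E)\|\le C$ only gives an $O(1)$ bound, which after division by $\sigma_n$ would blow up.

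Granting these uniform bounds, I would invoke the abstract limiting absorption principle and local decay theorem for the pair $(\tilde H_n,A_{\sigma_n})$, in the localized version in which the constants depend only on the Mourre constant, on the norms of the above localized commutators, on the width of the gap around $\tilde\Delta$ inside the Mourre interval, and on $s$. This version applies here because the Mourre estimate of Theorem~\ref{thm:mourre} carries no compact remainder --- the interval $\tilde\Delta$ lies strictly above the unique eigenvalue $0$ of $\tilde H_n$, by Theorems~\ref{thm:GS}, \ref{thm:es} and \ref{thm:absolute} --- so in the Mourre differential inequality the resolvent can be sandwiched throughout by cutoffs $\theta(\tilde H_n)$, $\theta\in C_0^\infty$, $\theta\equiv1$ on $\tilde\Delta$, and every commutator that occurs then comes multiplied by such a cutoff. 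One obtains, uniformly in $n$: for $1/2<s\le1$, $\sup_{\lambda\in\tilde\Delta,\,0<\mu\le1}\|\langle A_{\sigma_n}\rangle^{-s}(\tilde H_n-\lambda-i\mu)^{-1}\langle A_{\sigma_n}\rangle^{-s}\|\le C_s$, and, using in addition the $C^2$-regularity, for $0\le s<1$, $\|\langle A_{\sigma_n}\rangle^{-s}e^{-i\tau\tilde H_n}\varphi(\tilde H_n)\langle A_{\sigma_n}\rangle^{-s}\|\le C_s\langle\tau\rangle^{-s}$. Undoing the rescaling finishes the proof: from $(\tilde H_n-\lambda-i\mu)^{-1}=\sigma_n(H-z)^{-1}$ with $z=E+\sigma_n(\lambda+i\mu)$ one gets \eqref{eq:LAP} for $|\mathrm{Im}(z)|\le\sigma_n$, while for $\sigma_n<|\mathrm{Im}(z)|\le1$ (which can occur only if $\sigma_n<1$) the trivial bound $\|(H-z)^{-1}\|\le|\mathrm{Im}(z)|^{-1}\le\sigma_n^{-1}$ suffices; and from $e^{-itH}\varphi_{\sigma_n}(H-E)=e^{-itE}e^{-i(\sigma_n t)\tilde H_n}\varphi(\tilde H_n)$, taking $\tau=\sigma_n t$ gives the stated $\langle\sigma_n t\rangle^{-s}$ estimate.

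The main obstacle is the uniform control of the localized iterated commutators $\mathrm{ad}_{A_{\sigma_n}}^k(H)\varphi_{\sigma_n}(H-E)$ for $k=1,2$ --- establishing that these are $O(\sigma_n)$ rather than merely $O(1)$. This rests entirely on the spectral gap of $K_n$ (Proposition~\ref{prop:spectral-gap}) and the consequent factorization of $\varphi_{\sigma_n}(H_n-E_n)$, together with the $\sigma_n$-uniform estimates of Lemma~\ref{lm:acompleter2} on the commuted interaction kernels; everything else is the standard abstract Mourre machinery of \cite{FGS1,ABFG,BF}, applied in its localized form so that only these $\sigma_n$-calibrated quantities enter the constants.
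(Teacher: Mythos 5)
Your proposal is correct and, at bottom, follows the same route as the paper: both deduce the theorem from the Mourre estimate of Theorem~\ref{thm:mourre} and the $C^2(A_{\sigma_n})$ regularity of Proposition~\ref{lm:regularity} through the abstract Mourre machinery, the only real content being the tracking of the $\sigma_n$-dependence of the constants. The difference is in how that tracking is organized. The paper simply invokes \cite[Theorem 7.4.1]{ABG} and \cite{HSS} for the qualitative statements and refers to \cite[Section 2]{BF} for the quantitative dependence, feeding in the Mourre constant $\propto\sigma_n$ and the \emph{uniform relative} bounds \eqref{eq:unifcomm1}--\eqref{eq:unifcomm2}; you instead rescale $\tilde H_n=\sigma_n^{-1}(H-E)$ so that the Mourre constant becomes $O(1)$, and then supply \emph{localized} commutator bounds uniform in $n$: the first is exactly \eqref{jhg1}, and the second, $\|\,[[H,iA_{\sigma_n}],iA_{\sigma_n}]\,\varphi_{\sigma_n}(H-E)\|\le C\sigma_n$, is not stated in the paper but your sketch of it (via \eqref{eq:efg10}, \eqref{eq:cba1}, \eqref{eq:dcb1}, Lemma~\ref{lm:acompleter2} and the passage \eqref{eq:vvv1}--\eqref{eq:vvv3}) is sound and is indeed just a repetition of the estimates in the proof of Theorem~\ref{thm:mourre}. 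What your packaging buys is a more self-contained explanation of the $\sigma_n^{-1}$ and $\langle\sigma_n t\rangle^{-s}$ behaviour, and a correct identification of the point that a naive use of the global relative bounds alone would not scale; what it costs is that the ``localized'' abstract LAP/local-decay theorem whose constants depend only on localized commutator norms is not literally \cite[Theorem 7.4.1]{ABG} --- it is essentially what \cite[Section 2]{BF} establishes, so in the end you would still lean on the same reference (or rework the differential-inequality argument yourself). Two minor points: your appeal to Theorem~\ref{thm:absolute} to justify the absence of a compact remainder is circular (that theorem is itself a consequence of the present one) and also unnecessary, since Theorem~\ref{thm:mourre} is already a strict Mourre estimate; and to get the Mourre estimate on an open neighbourhood of the rescaled interval one should, as you note, slightly enlarge the plateau of $\varphi$, which is a harmless modification of \eqref{eq:defvarphi1}.
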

\begin{proof}
The limiting absorption principle on the spectral interval $[ \gamma \rho \sigma_n / 4 , \rho \sigma_n / 3 ]$
\begin{align*}
\sup_{ z \in J_n } \big \|Ê\langle A_{ \sigma_n } \rangle^{-s} ( H - z )^{-1}Ê\langle A_{ \sigma_n } \rangle^{-s} \big \| < \infty ,
\end{align*}
is a standard consequence of having a Mourre estimate on that spectral interval (Theorem \ref{thm:mourre}) together with the property that $H \in C^2( A_{ \sigma_n } )$ (Proposition \ref{lm:regularity}), see e.g. \cite[Theorem 7.4.1]{ABG}.

The bound of order $\mathcal{O}( \sigma_n^{-1} )$ in \eqref{eq:LAP} is a consequence of the facts that the positive constant appearing in the Mourre estimate of Theorem \ref{thm:mourre} is proportional to $\sigma_n$ and that the commutators $[ H , i A_{ \sigma_n} ]$ and $[ [ H , i A_{\sigma_n} ] , i A_{ \sigma_n } ]$ are relatively bounded w.r.t. $H$ uniformly in $\sigma_n$ (Proposition \ref{lm:regularity}). We refer to \cite[Section 2]{BF} for a detailed explanation,  that can be adapted without change to our context, leading to the bound of order $\mathcal{O}( \sigma_n^{-1} )$ in \eqref{eq:LAP}.

Similarly, it follows from \cite{HSS} that Theorem \ref{thm:mourre} together with Proposition \ref{lm:regularity} imply that
\begin{equation*}
\big \|Ê\langle A_{ \sigma_n } \rangle^{-s} e^{ - i t H }Ê\varphi_{ \sigma_n }( H -  E )Ê\langle A_{ \sigma_n } \rangle^{-s} \big \| \le C_{s , \sigma_n} \langle t \rangle^{-s} ,
\end{equation*}
for all $0 \le s < 1$, for some positive constant $C_{s , \sigma_n}$ depending on $s$ and $\sigma_n$. Again, the fact that $C_{s , \sigma_n}$ is of order $\mathcal{O}( \sigma_n^{-s} )$ can be proven exactly as in \cite[Section 2]{BF}. We do not give the details.
\end{proof}
%
%


\subsection{Uniform local decay}\label{subsec:unif}
The last step of the proof of Theorem \ref{thm:return} consists in deducing from Theorem \ref{thm:LAP} the uniform local decay estimate \eqref{eq:uniflocdec}. To this end, we follow \cite{BF} with some key modifications. More precisely, the structure of the Hamiltonian of the present paper being different from the one in \cite{BF}, most of the commutators entering the proof will be estimated with different tools, and, more importantly, we will have to accommodate the method to the more singular infrared behavior. We will emphasize the differences below, but the results that can be straightforwardly adapted to our context will be given without proof.

We begin with a few preliminary technical results that will be useful in the proof of Theorem \ref{thm:return}. The first lemma controls commutators between powers of $A_{\sigma_n}$ and functions of $H$.
\begin{lem}\label{lm:lmunif}
Suppose that Hypotheses~\ref{hypothesis-1} and \ref{hypothesis-2} hold. There exists $g_0 > 0$ such that, for all $0 \le s \le 1$, there exists $C_s >0$ such that, for all $0 \le g \le g_0$ and $n \in \mathbb{N}$,
\begin{equation*}
\big \|Ê\langle A_{Ê\sigma_n }Ê\rangle^s \varphi_{ \sigma_n }( H - E ) \langle A_{ \sigma_n } \rangle^{-s} \big \|Ê\le C_s .
\end{equation*}
\end{lem}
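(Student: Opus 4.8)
The statement asserts a uniform (in $n$ and $g$) bound on the conjugation of $\varphi_{\sigma_n}(H-E)$ by $\langle A_{\sigma_n}\rangle^s$, for $s\in[0,1]$. The natural approach is to reduce to the case $s=1$ and then interpolate, using the abstract interpolation result for the scale of spaces generated by a self-adjoint operator (Heinz--Kato, or the complex interpolation between $\mathcal{D}(\langle A_{\sigma_n}\rangle)$ and $\mathcal H$). For $s=0$ there is nothing to prove, and for $s=1$ it suffices to control $\langle A_{\sigma_n}\rangle \varphi_{\sigma_n}(H-E)\langle A_{\sigma_n}\rangle^{-1}$, which by the Helffer--Sj\"ostrand representation
\begin{equation*}
\varphi_{\sigma_n}(H-E) = \frac{1}{2\pi i} \int ( H - E - z )^{-1} \, \partial_{\bar z} \tilde\varphi_{\sigma_n}(z) \, dz \wedge d\bar z
\end{equation*}
reduces to controlling $\langle A_{\sigma_n}\rangle (H-E-z)^{-1}\langle A_{\sigma_n}\rangle^{-1}$ for $z$ in the (rescaled) support of the almost analytic extension, with a gain that beats the singularity $|\mathrm{Im}(z)|^{-k}$.

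\textbf{Key steps.} First I would write, on a suitable core, the commutator identity
\begin{equation*}
[ (H-E-z)^{-1}, A_{\sigma_n} ] = ( H - E - z )^{-1} [ H , i A_{\sigma_n} ] \, ( - i ) ( H - E - z )^{-1} ,
\end{equation*}
which is legitimate because $H \in C^1(A_{\sigma_n})$ by Proposition~\ref{lm:regularity}. This gives
\begin{equation*}
A_{\sigma_n} ( H - E - z )^{-1} \langle A_{\sigma_n}\rangle^{-1} = ( H - E - z )^{-1} A_{\sigma_n} \langle A_{\sigma_n}\rangle^{-1} + ( H - E - z )^{-1} [H,iA_{\sigma_n}](-i)(H-E-z)^{-1}\langle A_{\sigma_n}\rangle^{-1} .
\end{equation*}
The first term is bounded by $|\mathrm{Im}(z)|^{-1}$ (since $A_{\sigma_n}\langle A_{\sigma_n}\rangle^{-1}$ is bounded by $1$). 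For the second term, I use $\|[H,iA_{\sigma_n}](H_0+i)^{-1}\| \le C$ uniformly (estimate \eqref{eq:unifcomm1} of Proposition~\ref{lm:regularity}), together with $\|(H_0+i)(H-E-z)^{-1}\|$, which is controlled because $H_0$ is relatively bounded w.r.t.\ $H$ (so $\|(H_0+i)(H-z')^{-1}\|\le C(1+|z'|)|\mathrm{Im}(z')|^{-1}$); on the relevant range $\mathrm{Re}(z)\in\mathrm{supp}(\varphi_{\sigma_n})\subset(0,\rho\sigma_n)$ and $|\mathrm{Im}(z)|\le\sigma_n$ (after rescaling $z\mapsto\sigma_n z$ as in the proof of Theorem~\ref{thm:mourre}), one gets a total bound of the form $C|\mathrm{Im}(z)|^{-2}$ for $A_{\sigma_n}(H-E-z)^{-1}\langle A_{\sigma_n}\rangle^{-1}$. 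Since the almost analytic extension can be chosen with $|\partial_{\bar z}\tilde\varphi(x+iy)|\le C|y|^{N}$ for $N$ as large as we like (e.g.\ $N=4$ as in \eqref{eq:almost4}), the integral $\int |\mathrm{Im}(z)|^{-2}|\partial_{\bar z}\tilde\varphi_{\sigma_n}(z)|\,dx\,dy$ converges and, after tracking the $\sigma_n$-scaling, is bounded uniformly in $n$. This yields $\|\langle A_{\sigma_n}\rangle \varphi_{\sigma_n}(H-E)\langle A_{\sigma_n}\rangle^{-1}\| \le C$, i.e.\ the case $s=1$; the case of general $s\in[0,1]$ then follows by complex interpolation, noting that $\langle A_{\sigma_n}\rangle^{s}\varphi_{\sigma_n}(H-E)\langle A_{\sigma_n}\rangle^{-s}$ is, for $\mathrm{Re}(s)\in\{0,1\}$, uniformly bounded, and depends analytically on $s$ in the strip.

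\textbf{Main obstacle.} The delicate point is to make the commutator manipulations rigorous as operator identities (rather than mere quadratic-form identities) and to keep all constants \emph{uniform in $n$}. The uniformity is precisely what Proposition~\ref{lm:regularity} was designed to provide, via \eqref{eq:unifcomm1}, but one must be careful that the factors of $\sigma_n$ appearing from the rescaling $z\mapsto\sigma_n z$ (which turns $|\mathrm{Im}(z)|^{-1}$ estimates into $\sigma_n^{-1}|\mathrm{Im}(z)|^{-1}$) and the factors of $\sigma_n$ from the interaction commutators (which, by Lemma~\ref{lm:acompleter2}, carry a gain of $\sigma_n$) balance correctly, so that the net power of $\sigma_n$ in $\|\langle A_{\sigma_n}\rangle \varphi_{\sigma_n}(H-E)\langle A_{\sigma_n}\rangle^{-1}\|$ is exactly zero. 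The free part of $[H,iA_{\sigma_n}]$, namely $\mathrm d\Gamma(\chi^2_{[0,\sigma_n/2]}(p_j)p_j)$, is bounded by $\rho\sigma_n/2$ on the range of $\varphi_{\sigma_n}(H_n-E_n)$ (cf.\ \eqref{eq:arg1}), which combines with the $\sigma_n^{-1}$ from one resolvent; this is the mechanism by which the scaling works out, and checking it carefully—rather than any deep new idea—is the substance of the proof.
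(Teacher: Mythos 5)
Your overall architecture (reduce to $s=1$, estimate the commutator with $A_{\sigma_n}$ via Helffer--Sj\"ostrand, then interpolate) matches the paper, but the key quantitative step in your ``Key steps'' does not close, and the gap is precisely in the $\sigma_n$-bookkeeping. You bound the commutator term by
$\|(H-E-z\sigma_n)^{-1}\|\,\|[H,iA_{\sigma_n}](H_0+i)^{-1}\|\,\|(H_0+i)(H-E-z\sigma_n)^{-1}\|$, using the uniform relative bound \eqref{eq:unifcomm1}. But $\|(H_0+i)(H-E-z\sigma_n)^{-1}\|$ is genuinely of size $\sigma_n^{-1}|\mathrm{Im}(z)|^{-1}$ on the rescaled contour (test it on the ground state: $(H_0+i)(H-E-z\sigma_n)^{-1}\varphi_{\mathrm{gs}}$ has norm $\gtrsim(\sigma_n|z|)^{-1}$), so your integrand is of order $\sigma_n^{-2}|\mathrm{Im}(z)|^{-2}$; after the prefactor $\sigma_n$ coming from the rescaled Helffer--Sj\"ostrand representation and integration against $d\tilde\varphi$, you obtain $\|\,[\varphi_{\sigma_n}(H-E),iA_{\sigma_n}]\langle A_{\sigma_n}\rangle^{-1}\|\le C\sigma_n^{-1}$, not $C$. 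The relative bound \eqref{eq:unifcomm1} says the commutator is \emph{uniformly $H_0$-bounded}, but what is needed is that it is \emph{small, of order $\sigma_n$, on the relevant low-energy spectral subspace}; uniform relative boundedness alone cannot produce the missing factor $\sigma_n$.

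The missing ingredient is the localized commutator bound \eqref{jhg1}, $\|[H,iA_{\sigma_n}]\,\psi_{\sigma_n}(H-E)\|\le C\sigma_n$, whose proof uses the infrared structure (the free part $\mathrm{d}\Gamma(\chi^2_{[0,\sigma_n/2]}(p_j)p_j)$ is $\le\rho\sigma_n/2$ on the range of $\varphi_{\sigma_n}(H_n-E_n)$ as in \eqref{eq:arg1}, the comparison \eqref{eq:dcb1} between $\varphi_{\sigma_n}(H-E)$ and $\varphi_{\sigma_n}(H_n-E_n)$, and Lemma~\ref{lm:acompleter2} for the interaction part), and an algebraic device to bring this localization inside the Helffer--Sj\"ostrand integral: choose $\psi$ with $\varphi=\varphi\psi$ and split $[\varphi_{\sigma_n}(H-E),iA_{\sigma_n}]=[\varphi_{\sigma_n}(H-E),iA_{\sigma_n}]\psi_{\sigma_n}(H-E)+\varphi_{\sigma_n}(H-E)[\psi_{\sigma_n}(H-E),iA_{\sigma_n}]$, so that in each term one resolvent factor is accompanied by $[H,iA_{\sigma_n}]\psi_{\sigma_n}(H-E)$, whose extra $\sigma_n$ cancels exactly one $\sigma_n^{-1}$ from a resolvent. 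You do mention this mechanism in your ``main obstacle'' paragraph (citing \eqref{eq:arg1}), but your written argument never inserts the energy localization next to the commutator, and without it the net power of $\sigma_n$ is $-1$, not $0$; so as it stands the proof does not give a bound uniform in $n$.
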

\begin{proof}
Since the result is obvious for $s=0$, using an interpolation argument, it suffices to prove that
\begin{equation*}
\big \| A_{Ê\sigma_n } \varphi_{ \sigma_n }( H - E ) ( A_{ \sigma_n } + i )^{-1} \big \|Ê\le C ,
\end{equation*}
for some $C > 0$. Moreover, commuting $A_{ \sigma_n }$ through $\varphi_{ \sigma_n }( H - E )$, we see that it suffices in fact to estimate the commutator
\begin{equation*}
Ê\big [Ê\varphi_{ \sigma_n }( H - E ) , i A_{ \sigma_n } \big ].
\end{equation*}
Let $\psi \in C^\infty_0(\R ; [0,1])$ be such that, for all $x\in\R$,
\begin{equation}\label{eq:defvarpsi1}
\psi (x)
:=\begin{cases}
1, &\text{if } x\in [ \rho \gamma / 5 , \rho / 2 ] , \\
0, & \text{if } x\in\R\setminus ( \rho \gamma / 6 , 2 \rho / 3 ),
\end{cases}
\end{equation}
and let, for all $n \in \mathbb{N}$, $\psi_{\sigma_n}( x ) = \psi ( \sigma_n^{-1} x )$. In particular, for all $x \in \R$, we have that $\varphi( x ) = \varphi(x ) \psi ( x )$. Therefore, we can write
\begin{align*}
Ê\big [Ê\varphi_{ \sigma_n }( H - E ) , i A_{ \sigma_n } \big ] = &Ê\big [Ê\varphi_{ \sigma_n }( H - E ) , i A_{ \sigma_n } \big ] \psi_{\sigma_n}( H - E ) \\
& + \varphi_{\sigma_n}( H - E )Ê\big [Ê\psi_{ \sigma_n }( H - E ) , i A_{ \sigma_n } \big ] .
\end{align*}
Using the Helffer-Sj{\"o}strand functional calculus with the representation given by \eqref{eq:almost1}--\eqref{eq:almost4}, we obtain that
\begin{align*}
& \big \|Ê\big [Ê\varphi_{ \sigma_n }( H - E ) , i A_{ \sigma_n } \big ] \psi_{\sigma_n}( H - E ) \big \| \\
& \le \sigma_n \int \big \| [ H - E - z \sigma_n ]^{-1} [ H , i A_{ \sigma_n } ] \psi_{\sigma_n}( H - E ) [ H - E - z \sigma_n ]^{-1} \big \| d \tilde \varphi (z) .
\end{align*}
By \eqref{jhg1} (with $\varphi_{\sigma_n}$ replaced by $\psi_{\sigma_n}$), we have that
\begin{equation*}
\big \|Ê[ H , i A_{\sigma_n} ] \psi_{\sigma_n}( H - E ) \big \| \le C \sigma_n .
\end{equation*}
Together with the resolvent estimate $\| [ H - E - z \sigma_n ]^{-1} \| \le ( | \mathrm{Im} ( z ) | \sigma_n )^{-1}$ and the properties \eqref{eq:almost2}--\eqref{eq:almost4} of the almost analytic extension $\tilde \varphi$, this shows that
\begin{align*}
\big \|Ê\big [Ê\varphi_{ \sigma_n }( H - E ) , i A_{ \sigma_n } \big ] \psi_{\sigma_n}( H - E ) \big \| \le C ,
\end{align*}
uniformly in $n \in \N$. Similarly we have that
\begin{align*}
\big \| \varphi_{\sigma_n}( H - E )Ê\big [Ê\psi_{ \sigma_n }( H - E ) , i A_{ \sigma_n } \big ] \big \|Ê\le C ,
\end{align*}
and therefore $\big \|Ê\big [Ê\varphi_{ \sigma_n }( H - E ) , i A_{ \sigma_n } \big ] \big \| \le C$. This concludes the proof.
\end{proof}
The next result is due to \cite{BF} and is a consequence of (a second quantized version of) Hardy's inequality in $\mathbb{R}^3$. Recall that $\q$ denotes the position operator for the neutrinos, and $q = | \q |$. As before we use the isomorphism between $\mathfrak{F}_a$ and $\mathfrak{F}_n \otimes \mathfrak{F}_\infty^n$ implicitly, $P_{\Omega^n_\infty}$ denotes the projection onto the vacuum in $\mathfrak{F}^\infty_n$, and $P_{\Omega^n_\infty}^\perp = \1^n_\infty - P_{\Omega^n_\infty}$.
\begin{lem}\label{lm:Hardy}
There exists $C > 0$ such that, for all $\rho > 0$, $n \in \mathbb{N}$ and $\phi \in \mathfrak{F}_a$,
\begin{equation}
\big \|Ê( \mathrm{d} \Gamma( q ) + \rho )^{-1} ( \1_n \otimes P_{\Omega^n_\infty}^\perp ) \phi \big \|Ê\le C \sigma_n \| ( \1_n \otimes P_{\Omega^n_\infty}^\perp ) \phi \|.
\end{equation}
\end{lem}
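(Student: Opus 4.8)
The plan is to reduce the statement to the one-particle Hardy inequality and then reproduce, essentially verbatim, the second-quantization argument of \cite{BF}: the operator $q$, the infrared cutoff and the Fock-space set-up are defined here exactly as in \cite{BF}, only the dispersion relation and the interaction Hamiltonian differ, and these play no role in this lemma.

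\emph{Step 1 (one-particle estimate).} Through the spherical-wave decomposition (Appendix \ref{appendixA}), $\mathfrak{H}_c = L^2(\Sigma;\C^4)$ is unitarily equivalent to $L^2(\R^3;\C^4)$, under which $q = |\q| = |i\nabla_{\p}|$ becomes the modulus of the usual position operator. Write $\chi_n := \chi_{[0,\sigma_n)}(|\p|)$ for the projection onto the infrared subspace (momenta $|\p| < \sigma_n$, which is precisely the one-particle space of $\mathfrak{F}^{n}_\infty$). The classical Hardy inequality $\int_{\R^3}|\mathbf{x}|^{-2}|u|^2 \le 4\int_{\R^3}|\nabla u|^2$ gives $\chi_n q^{-2}\chi_n \le 4\,\||\p|\chi_n\|^2 \le 4\sigma_n^2$. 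Combined with the elementary three-dimensional bound $\|f\|_\infty \le C\sigma_n^{3/2}\|f\|_2$ on $\mathrm{Ran}\,\chi_n$ (Cauchy--Schwarz in Fourier inversion), so that $\|\chi_{\{q\le\lambda\}}f\|^2 \le C\sigma_n^3\lambda^3\|f\|^2$, a short computation yields $\|e^{-tq}\chi_n\| \le C\big(e^{-t/\sigma_n} + (\sigma_n/t)^{3/2}\big)$, whence the crucial integrability $\int_0^\infty \|e^{-tq}\chi_n\|\,dt \le C\sigma_n$.

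\emph{Step 2 (Fock space).} Using $(\mathrm{d}\Gamma(q)+\rho)^{-1} = \int_0^\infty e^{-\rho t}\Gamma(e^{-tq})\,dt$ together with the multiplicativity $\Gamma(e^{-tq})\Gamma(\1-\chi_n) = \Gamma(e^{-tq}(\1-\chi_n))$ and the identity $\1_n\otimes P^\perp_{\Omega^n_\infty} = \1 - \Gamma(\1-\chi_n)$, it suffices to prove $\|\Gamma(e^{-tq})(\1_n\otimes P^\perp_{\Omega^n_\infty})\| \le C\|e^{-tq}\chi_n\|$ uniformly in $n\in\N$ and $t>0$, and then integrate against $e^{-\rho t}\,dt$ (the resulting bound being uniform in $\rho>0$). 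I would establish this as in \cite{BF}: decompose $\mathrm{Ran}(\1_n\otimes P^\perp_{\Omega^n_\infty})$ into the mutually orthogonal sectors of fixed total particle number and fixed, positive, number of infrared particles; $\Gamma(e^{-tq})$ preserves the total number, each of its $e^{-tq}$-factors is a contraction, each infrared particle contributes a factor bounded by $\|e^{-tq}\chi_n\|$, and the ultraviolet-to-infrared leakage $\|\chi_n e^{-tq}(\1-\chi_n)\| = \|(\1-\chi_n)e^{-tq}\chi_n\|$ is bounded by the same quantity; the ensuing combinatorial sums telescope and give the bound with an absolute constant.

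\emph{Main obstacle.} The delicate point is precisely this last bookkeeping: a naive triangle inequality over the particle-number sectors would produce a spurious factor $\sqrt{N}$ (the number operator), which is not available since the lemma assumes no $N$-bound. Avoiding it requires the careful sector analysis of \cite{BF}, exploiting both that only infrared particles carry the decaying factor $\|e^{-tq}\chi_n\|$ and that the infrared/ultraviolet leakage is of the same order. Since none of this interacts with the specifics of our model, I would carry out Steps 1--2 and otherwise refer to \cite{BF} for the details.
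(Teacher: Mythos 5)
Your Step 1 is fine, but the proposal has a genuine gap at its central step, and the gap is not closed by the appeal to \cite{BF}. The entire weight of your argument rests on the asserted bound $\|\Gamma(e^{-tq})(\1_n\otimes P_{\Omega^n_\infty}^\perp)\|\le C\,\|e^{-tq}\chi_{[0,\sigma_n)}\|$ \emph{uniformly in the particle number}, together with the claim that the ensuing ``combinatorial sums telescope''. You correctly identify that a naive sector estimate produces a factor of the number operator, but you then defer the resolution to \cite{BF}; however, the result is precisely \cite[Lemma 3.4]{BF} (this is what the paper's own proof invokes), and the argument there is not a semigroup/sector expansion of the kind you describe -- it is a short second-quantized Hardy-inequality argument -- so the ``careful bookkeeping'' you appeal to is not available in the reference, and you give none yourself. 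The missing estimate is also not innocuous: for a general contraction $a$ in place of $e^{-tq}$, coherent contributions from the $N$ possible slots of the single soft particle do produce a factor $\sqrt N$ (test on $\phi=u^{\otimes N}-( (\1-\chi_n)u)^{\otimes N}$ with $u$ mostly hard and $a$ sending the soft component onto the hard direction), so uniformity in $N$ would require a genuine argument exploiting the specific structure of $e^{-tq}$ and the momentum cutoff, which the sketch does not provide.

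The intended proof -- and the reason the paper can call the fermionic adaptation straightforward -- bypasses all of this. After the identification of $\mathfrak{H}_c$ with $L^2(\R^3;\C^4)$ used to define $\q=i\nabla_{\p}$, Hardy's inequality in the momentum representation gives the one-particle form inequality $q^2=|\q|^2\ge \tfrac{1}{4}|\p|^{-2}\ge \tfrac{1}{4}\sigma_n^{-2}\chi_{[0,\sigma_n)}(p)$. Hence, by monotonicity of second quantization and $\mathrm{d}\Gamma(q^2)\le(\mathrm{d}\Gamma(q))^2$ (the cross terms $q_iq_j$, $i\neq j$, are nonnegative, on the antisymmetric subspace as well),
\begin{equation*}
\1_n\otimes P_{\Omega^n_\infty}^\perp \;\le\; \mathrm{d}\Gamma\big(\chi_{[0,\sigma_n)}\big)\;\le\; 4\sigma_n^2\,\mathrm{d}\Gamma(q^2)\;\le\;4\sigma_n^2\,(\mathrm{d}\Gamma(q))^2\;\le\;4\sigma_n^2\,(\mathrm{d}\Gamma(q)+\rho)^2,
\end{equation*}
the first inequality because $\1_n\otimes P_{\Omega^n_\infty}^\perp=\1-\1_{\{0\}}\big(\mathrm{d}\Gamma(\chi_{[0,\sigma_n)})\big)$. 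Sandwiching with $(\mathrm{d}\Gamma(q)+\rho)^{-1}$ yields $\|(\1_n\otimes P_{\Omega^n_\infty}^\perp)(\mathrm{d}\Gamma(q)+\rho)^{-1}\|\le 2\sigma_n$, and taking adjoints gives the lemma with $C=2$, uniformly in $\rho>0$ and with no statistics-dependent combinatorics. Your Laplace-transform route is therefore both unproven at its crucial point and unnecessary; if you wish to keep it, you must actually prove the uniform-in-$N$ bound rather than cite it.
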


\begin{proof}
The lemma is proven in \cite[Lemma 3.4]{BF} in a symmetric Fock space. The adaptation to fermions in the anti-symmetric Fock space $\mathfrak{F}_a$ is straightforward. We do not give the details.
\end{proof}
A direct consequence of Lemma \ref{lm:Hardy} is the next proposition. It should be noticed that the r.h.s. of order $\mathcal{O}( \sigma_n^{1/2} )$ in  \eqref{eq:lkj3} below is worse than the corresponding one in \cite{BF} (which is of order $\mathcal{O}( \sigma_n )$). This is due to the worse infrared behavior of the interaction Hamiltonian $H_I$ compared to the one in \cite{BF}.
\begin{prop}\label{prop:prop1}
Suppose that Hypotheses~\ref{hypothesis-1} and \ref{hypothesis-2} hold. There exist $g_0 > 0$ and $C > 0$ such that, for all $0 \le g \le g_0$ and $n \in \mathbb{N}$,
\begin{equation}
\big \|Ê\langle \mathrm{d} \Gamma( q_1 ) + \mathrm{d} \Gamma( q_2 ) \rangle^{-1} \varphi_{ \sigma_n }( H - E ) \big \|Ê\le C \sigma_n^{ \frac12 } . \label{eq:lkj3}
\end{equation}
\end{prop}
\begin{proof}
Since $\varphi_{ \sigma_n }( H_n - E_n ) = ( \1_n \otimes P_{\Omega^n_\infty \otimes \Omega^n_\infty }^\perp ) \varphi_{ \sigma_n }( H_n - E_n )$ as follows from \eqref{eq:cba1} and the properties of the support of $\varphi_{\sigma_n}$, we obtain from Lemma \ref{lm:Hardy} that
\begin{equation}
\big \|Ê\langle \mathrm{d} \Gamma( q_1 ) + \mathrm{d} \Gamma( q_2 ) \rangle^{-1} \varphi_{ \sigma_n }( H_n - E_n ) \big \|Ê\le C \sigma_n .
\end{equation}
Using in addition \eqref{eq:dcb1}, we arrive at \eqref{eq:lkj3}.
\end{proof}
Again, the next lemma is proven in \cite{BF} in a symmetric Fock space. The proof adapts directly to our context. We do not elaborate.
\begin{lem}\label{lm:aa1}
There exists $C > 0$ such that, for all $n \in \mathbb{N}$ and all $\phi \in \mathfrak{F}_a$,
\begin{equation*}
\big \|ÊA_{ \sigma_n } \langle \mathrm{d} \Gamma( q_1 ) + \mathrm{d} \Gamma( q_2 ) \rangle^{-1} \phi \big \|Ê\le C \sigma_n \| \phi \|.
\end{equation*}
\end{lem}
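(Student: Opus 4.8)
The plan is to reduce Lemma~\ref{lm:aa1} to one–particle estimates and then to lift them to the Fock space $\mathfrak{F}_a$, along the lines of \cite{BF}. Writing $Q = \mathrm{d}\Gamma(q_1) + \mathrm{d}\Gamma(q_2)$, I would treat the two summands of $A_{\sigma_n}$ separately; by symmetry it suffices to bound $\| \mathrm{d}\Gamma(a_{1,[0,\sigma_n/2]}) \langle Q \rangle^{-1} \|$. Since $0 \le \mathrm{d}\Gamma(q_1) \le Q$ and these operators commute, one has $\mathrm{d}\Gamma(q_1) + 1 \le \sqrt{2}\,\langle Q \rangle$, so $( \mathrm{d}\Gamma(q_1) + 1 ) \langle Q \rangle^{-1}$ is bounded and it is enough to prove
\begin{equation*}
\big \| \mathrm{d}\Gamma(a_{1,[0,\sigma_n/2]}) \, ( \mathrm{d}\Gamma(q_1) + 1 )^{-1} \big \| \le C \sigma_n .
\end{equation*}

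The one–particle input I would use is the following. Decomposing $a_{1,[0,\sigma_n/2]} = \chi_{[0,\sigma_n/2]}(p_1)^2 a_1 + \chi_{[0,\sigma_n/2]}(p_1) [ a_1 , \chi_{[0,\sigma_n/2]}(p_1) ]$ and recalling $a_1 = \tfrac12(\p_1\cdot\q_1 + \q_1\cdot\p_1)$, $q_1 = |\q_1|$ and $[ a_1 , \chi_{[0,\sigma_n/2]}(p_1) ] = i\,p_1\,\chi'_{[0,\sigma_n/2]}(p_1)$, three facts are relevant: (i) since $\mathrm{supp}\,\chi_{[0,\sigma_n/2]} \subset [0,\sigma_n)$, the operator $a_{1,[0,\sigma_n/2]}$ maps into, and is nonzero only on, the subspace $\{ p_1 < \sigma_n \}$ of one–particle states; (ii) using that $\q_1 q_1^{-1}$ and $|\p_1|^{-1} q_1^{-1}$ are bounded — the latter being the one–particle Hardy inequality $\| |\p_1|^{-1}\psi\| \le C \| q_1 \psi \|$ on $L^2(\R^3)$ — together with $|p_1 \chi'_{[0,\sigma_n/2]}(p_1)| \le C$ and $|\p_1| \le \sigma_n$ on $\mathrm{supp}\,\chi_{[0,\sigma_n/2]}$, one gets $\| a_{1,[0,\sigma_n/2]} \, q_1^{-1} \| \le C\sigma_n$, i.e.\ $a_{1,[0,\sigma_n/2]}^2 \le C^2 \sigma_n^2 \, q_1^2$; (iii) the uncertainty bound $\1_{\{ p_1 < \sigma_n \}}(\p_1) \le \sigma_n |\p_1|^{-1} \le C\sigma_n\, q_1$, again a consequence of Hardy.

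To lift this to the Fock space, fact (i) yields $\mathrm{d}\Gamma(a_{1,[0,\sigma_n/2]}) = \mathrm{d}\Gamma(a_{1,[0,\sigma_n/2]})(\1_n \otimes P_{\Omega^n_\infty}^\perp)$, with $P_{\Omega^n_\infty}^\perp$ the projection onto states containing at least one neutrino of momentum $< \sigma_n$. A Cauchy–Schwarz argument over the at most $N_{<\sigma_n} := \mathrm{d}\Gamma(\1_{\{ p_1 < \sigma_n \}})$ nonvanishing terms in $\mathrm{d}\Gamma(a_{1,[0,\sigma_n/2]})$ gives $\mathrm{d}\Gamma(a_{1,[0,\sigma_n/2]})^2 \le N_{<\sigma_n}\, \mathrm{d}\Gamma(a_{1,[0,\sigma_n/2]}^2)$, which, combined with (ii), with $\mathrm{d}\Gamma(q_1^2) \le \mathrm{d}\Gamma(q_1)^2$, and with the second–quantized form of (iii), $N_{<\sigma_n} \le C\sigma_n\,\mathrm{d}\Gamma(q_1) \le C\sigma_n\, Q$, controls $\mathrm{d}\Gamma(a_{1,[0,\sigma_n/2]})^2$ in terms of $\sigma_n$ and powers of $\mathrm{d}\Gamma(q_1)$ and $Q$. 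I would then convert the $\mathrm{d}\Gamma(q_1)$–weights into $(\mathrm{d}\Gamma(q_1)+1)$–weights by invoking Lemma~\ref{lm:Hardy} on $\mathrm{Ran}(\1_n \otimes P_{\Omega^n_\infty}^\perp)$ — on which $(\mathrm{d}\Gamma(q_1)+1)^{-1}$ carries an extra factor $\sigma_n$ — and use $\mathrm{d}\Gamma(q_1) \le Q$ where needed, to conclude. The delicate point, exactly as in \cite{BF}, is that the part of $a_{1,[0,\sigma_n/2]}$ of order zero in $\q_1$ is merely bounded — not $O(\sigma_n)$ — at the one–particle level, so the gain of $\sigma_n$ for this piece appears only after second quantization, through the interplay between its localization in the modes $\{ p_1 < \sigma_n \}$ and the Hardy/uncertainty inequalities; arranging the particle–number bookkeeping in this last step so that no power of $N_{<\sigma_n}$ is lost is the main obstacle, and is handled by the standard $\mathrm{d}\Gamma$–calculus.
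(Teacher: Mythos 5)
Your reduction to the one--particle bounds and your facts (i)--(iii) are essentially sound (one small correction: the one--particle space here is $L^2(\R_+\times\Gamma;\C^4)$, so the Hardy inequality you need is the half--line inequality in the radial variable $p_1$, not the inequality on $L^2(\R^3)$), but the final lifting step — the one you delegate to ``standard $\mathrm{d}\Gamma$--calculus'' — is a genuine gap, and it is exactly where the content of the lemma lies. Even granting a symmetrized version of $\mathrm{d}\Gamma(a_{1,[0,\sigma_n/2]})^2\le N_{<\sigma_n}\,\mathrm{d}\Gamma(a_{1,[0,\sigma_n/2]}^2)$, your chain gives $\mathrm{d}\Gamma(a_{1,[0,\sigma_n/2]})^2\lesssim \sigma_n^2\,N_{<\sigma_n}\,\mathrm{d}\Gamma(q_1)^2$, and after $N_{<\sigma_n}\le C\sigma_n\,\mathrm{d}\Gamma(q_1)$ this is $\|\mathrm{d}\Gamma(a_{1,[0,\sigma_n/2]})\phi\|\lesssim \sigma_n^{3/2}\,\|\mathrm{d}\Gamma(q_1)^{3/2}\phi\|$: three half--powers of the weight, whereas the lemma puts only one power of $\langle Q\rangle$ at your disposal, and $\mathrm{d}\Gamma(q_1)^{3/2}\langle Q\rangle^{-1}$ is unbounded. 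Lemma \ref{lm:Hardy} cannot repair this: it yields the numerical gain $\sigma_n$ on $\mathrm{Ran}(\1_n\otimes P_{\Omega^n_\infty}^\perp)$, not a gain of half a power of $\langle Q\rangle$, and $P_{\Omega^n_\infty}^\perp$ is a momentum--space projection which does not commute with $\langle Q\rangle^{-1}$, so it cannot be slid next to the weight. Keeping $N_{<\sigma_n}$ unconverted is no better, since the statement carries no energy cutoff or number bound that would control the number of low--momentum particles. The loss is not a technicality: for a Slater determinant of $n$ orthonormal states supported in $\{p_1\le\sigma_n\}$, each with $\langle q_1\rangle\sim\sigma_n^{-1}$, one has $\|\langle Q\rangle\phi\|\sim n\sigma_n^{-1}$, so the lemma demands an $O(1)$ bound per particle for the coherent contributions to $\mathrm{d}\Gamma(a_{1,[0,\sigma_n/2]})$ — precisely what is destroyed when you Cauchy--Schwarz over particle indices and pay $\sqrt{N_{<\sigma_n}}$.

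The missing ingredient is a quadratic--form analogue of your (ii), namely $\pm\, a_{1,[0,\sigma_n/2]}\le C\sigma_n\, q_1$ up to terms supported in $\{p_1\le\sigma_n\}$ that Hardy again converts into $C\sigma_n q_1$; this does \emph{not} follow from the operator bound $\|a_{1,[0,\sigma_n/2]}\,q_1^{-1}\|\le C\sigma_n$. It can be derived, for instance, from the exact factorization $a_{1,[0,\sigma_n/2]}=c^{1/2}\,\q_1\,c^{1/2}$ with $c=p_1\chi_{[0,\sigma_n/2]}^2(p_1)$, $0\le c\le\sigma_n$, combined with $\pm\,\q_1\le q_1$, a commutator estimate for $[q_1^{1/2},c^{1/2}]$ and Hardy's inequality. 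With the norm bound handling the diagonal terms (via $\mathrm{d}\Gamma(q_1^2)\le\mathrm{d}\Gamma(q_1)^2$) and the form bound handling the cross terms $\langle\phi,a_{j}a_{l}\phi\rangle$, $j\ne l$ (form inequalities in distinct, commuting particle slots can be multiplied), one obtains $\|\mathrm{d}\Gamma(a_{1,[0,\sigma_n/2]})\phi\|\le C\sigma_n\|(\mathrm{d}\Gamma(q_1)+1)\phi\|$ with no factor of $N_{<\sigma_n}$ at all; this is, in substance, the argument of \cite{BF} to which the paper appeals without giving any details. So your overall architecture (localization of $a_{1,[0,\sigma_n/2]}$ in the low modes, Hardy/uncertainty, reduction to one--particle estimates) is the right one, but the specific bookkeeping you propose does not close.
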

As a consequence we have the following proposition. The remark we made just before Proposition \ref{prop:prop1} concerning the estimate of order $\mathcal{O}( \sigma_n^{1/2} )$ holds too for Proposition \ref{prop:prop2}.
\begin{prop}\label{prop:prop2}
Suppose that Hypotheses~\ref{hypothesis-1} and \ref{hypothesis-2} hold. There exist $g_0 > 0$ and $C > 0$ such that, for all $0 \le g \le g_0$ and $n \in \mathbb{N}$,
\begin{equation}
\big \|Ê\langle \mathrm{d} \Gamma( q_1 ) + \mathrm{d} \Gamma( q_2 ) \rangle^{-1} \varphi_{ \sigma_n }( H - E ) A_{ \sigma_n } \big \|Ê\le C \sigma_n^{ \frac12 } . \label{eq:lkj4}
\end{equation}
\end{prop}
\begin{proof}
First, we claim that
\begin{equation}
\big \|Ê\langle \mathrm{d} \Gamma( q_1 ) + \mathrm{d} \Gamma( q_2 ) \rangle^{-1} \varphi_{ \sigma_n }( H_n - E_n ) A_{ \sigma_n } \big \|Ê\le C \sigma_n . \label{eq:lkj5}
\end{equation}
Indeed, by \eqref{eq:cba1} and the fact that $A_{ \sigma_n }$ acts on $\mathfrak{F}^n_\infty$, we have that
\begin{align*}
&\langle \mathrm{d} \Gamma( q_1 ) + \mathrm{d} \Gamma( q_2 ) \rangle^{-1} \varphi_{ \sigma_n }( H_n - E_n ) A_{ \sigma_n } \\
&= \langle \mathrm{d} \Gamma( q_1 ) + \mathrm{d} \Gamma( q_2 ) \rangle^{-1} A_{ \sigma_n } \varphi_{ \sigma_n }( H_n - E_n ) \\
&\quad + \langle \mathrm{d} \Gamma( q_1 ) + \mathrm{d} \Gamma( q_2 ) \rangle^{-1} \1_{ \{ E_n \} }( K_n ) \otimes \big [ \varphi_{\sigma_n}( \check{H}_\infty^n ) , A_{\sigma_n} \big ] .
\end{align*}
The first term is estimated thanks to Lemma \ref{lm:aa1}, which gives
\begin{equation}
\big \|Ê\langle \mathrm{d} \Gamma( q_1 ) + \mathrm{d} \Gamma( q_2 ) \rangle^{-1} A_{ \sigma_n } \varphi_{ \sigma_n }( H_n - E_n ) \big \| \le C \sigma_n. \label{eq:lkj6}
\end{equation}
As for the second term, a direct computation gives
\begin{equation*}
\big [ \varphi_{\sigma_n}( \check{H}_\infty^n ) , A_{\sigma_n} \big ] = - i \big ( \mathrm{d} \Gamma( \chi_{[0,\sigma_n/2]}^2( p_1 ) p_1 ) + \mathrm{d} \Gamma( \chi_{[0,\sigma_n/2]}^2( p_2 ) p_2 ) \big ) \varphi_{\sigma_n}'( \check{H}_\infty^n ) ,
\end{equation*}
where $\varphi'$ stands for the derivative of $\varphi$. Since
\begin{equation*}
\big ( \mathrm{d} \Gamma( \chi_{[0,\sigma_n/2]}^2( p_1 ) p_1 ) + \mathrm{d} \Gamma( \chi_{[0,\sigma_n/2]}^2( p_2 ) p_2 ) \big )^2 \le ( \check{H}_\infty^n )^2,
\end{equation*}
and since $x \varphi_{\sigma_n}'(x)$ is uniformly bounded in $n \in \mathbb{N}$, we deduce that $\big [ \varphi_{\sigma_n}( \check{H}_\infty^n ) , A_{\sigma_n} \big ]$ extends to a uniformly bounded operator. In addition, we obviously have that
\begin{equation*}
\big [ \varphi_{\sigma_n}( \check{H}_\infty^n ) , A_{\sigma_n} \big ] = P_{\Omega^n_\infty \otimes \Omega^n_\infty }^\perp \big [ \varphi_{\sigma_n}( \check{H}_\infty^n ) , A_{\sigma_n} \big ],
\end{equation*}
and therefore Lemma \ref{lm:Hardy} implies that
\begin{equation}
\big \|Ê\langle \mathrm{d} \Gamma( q_1 ) + \mathrm{d} \Gamma( q_2 ) \rangle^{-1} \1_{ \{ E_n \} }( K_n ) \otimes \big [ \varphi_{\sigma_n}( \check{H}_\infty^n ) , A_{\sigma_n} \big ] \big \| \le C \sigma_n. \label{eq:lkj7}
\end{equation}
Equations \eqref{eq:lkj6} and \eqref{eq:lkj7} prove \eqref{eq:lkj5}.

In order to prove \eqref{eq:lkj4}, it remains to establish that
\begin{equation}
\big \|Ê\langle \mathrm{d} \Gamma( q_1 ) + \mathrm{d} \Gamma( q_2 ) \rangle^{-1} \big ( \varphi_{ \sigma_n }( H - E ) - \varphi_{ \sigma_n }( H_n - E_n ) \big ) A_{ \sigma_n } \big \|Ê\le C \sigma_n^{ \frac12 } . \label{eq:lkj8}
\end{equation}
Commuting $A_{ \sigma_n }$ through $\varphi_{ \sigma_n }( H - E ) - \varphi_{ \sigma_n }( H_n - E_n )$ and using Lemma \ref{lm:aa1} as above, we see that it is in fact sufficient to verify that
\begin{equation}
\big \|Ê\langle \mathrm{d} \Gamma( q_1 ) + \mathrm{d} \Gamma( q_2 ) \rangle^{-1} \big [ \varphi_{ \sigma_n }( H - E ) - \varphi_{ \sigma_n }( H_n - E_n ) , A_{ \sigma_n } \big ] \big \|Ê\le C \sigma_n^{ \frac12 } . \label{eq:lkj9}
\end{equation}
Using the representation \eqref{eq:almost1} with $\tilde \varphi$ as in \eqref{eq:almost2}--\eqref{eq:almost4}, we can write
\begin{align*}
& \big [ \varphi_{\sigma_n}( H - E ) - \varphi_{\sigma_n} ( H_n - E_n ) , A_{ \sigma_n } \big ] \notag \\
&=- \sigma_n \int [ H_n - E_n - z \sigma_n ]^{-1} [ H_n , A_{ \sigma_n } ] [ H_n - E_n - z \sigma_n ]^{-1} ( H_n - H + E - E_n ) \notag \\
&\qquad \qquad [ H - E - z \sigma_n ]^{-1} d \tilde \varphi (z) \notag \\
&+ \sigma_n \int [ H_n - E_n - z \sigma_n ]^{-1} [ H_n - H , A_{ \sigma_n } ] [ H - E - z \sigma_n ]^{-1} d \tilde \varphi (z) \notag \\
&- \sigma_n \int [ H_n - E_n - z \sigma_n ]^{-1} ( H_n - H + E - E_n ) [ H - E - z \sigma_n ]^{-1}Ê[ H , A_{ \sigma_n } ] \notag \\
&\qquad \qquad [ H - E - z \sigma_n ]^{-1} d \tilde \varphi (z). \notag
\end{align*}
We estimate the three terms entering the r.h.s. of the last inequality. By \eqref{eq:qsd1}Ê and \eqref{eq:qsd2}, we have that
\begin{align*}
& \big \|Ê[ H_n - E_n - z \sigma_n ]^{-1} ( H_n - H + E - E_n ) [ H - E - z \sigma_n ]^{-1} \big \| \notag \\
& \le C g \sigma_n^{-\frac12} \big ( | \mathrm{Im}( z ) |^{-2} + | \mathrm{Im}( z ) |^{-3} \big ).
\end{align*}
Moreover, arguing similarly as in the proof of Theorem \ref{thm:mourre}, it is not difficult to verify that
\begin{align*}
& \big \|Ê[ H_n - E_n - z \sigma_n ]^{-1} [ H_n , A_{ \sigma_n } ] \big \| \le C |Ê\mathrm{Im}( z ) |^{-1} , \\
& \big \|Ê[ H , A_{ \sigma_n } ] [ H - E - z \sigma_n ]^{-1} \big \| \le C |Ê\mathrm{Im}( z ) |^{-1}.
\end{align*}
The last $3$ estimates together with the properties of $\tilde \varphi$ prove that the first and third term in the r.h.s. of the expansion of $[ \varphi_{\sigma_n}( H - E ) - \varphi_{\sigma_n} ( H_n - E_n ) , A_{ \sigma_n } ]$ are bounded by $C g \sigma_n^{1/2}$. To estimate the second term, it suffices to use that
\begin{align}
\big \| (  \check{H}_\infty^n )^{- \frac12}  [ H_n - H , A_{Ê\sigma_n }Ê] ( N_{ Z^0 } + 1 )^{-\frac12}  \big \| \le C g \sigma_n . \label{eq:rel_Ntau_38}
\end{align}
The latter inequality is proven exactly as in \eqref{eq:rel_Ntau_31} of Lemma \ref{lm:acompleter2}. Together with \eqref{eq:nbv2} and \eqref{eq:nbv3}, this shows that second term in the r.h.s. of the expansion of $[ \varphi_{\sigma_n}( H - E ) - \varphi_{\sigma_n} ( H_n - E_n ) , A_{ \sigma_n } ]$ is also bounded by $C g \sigma_n^{1/2}$. This proves that \eqref{eq:lkj9} holds, and therefore the proof is complete.
\end{proof}
We are now ready to prove Theorem \ref{thm:return}.
\begin{proof}[Proof of Theorem \ref{thm:return}]
Let $0 \le s \le 1$ and $0< \mu < s$. Recall that
\begin{equation*}
\mathrm{Q} := \mathrm{d} \Gamma( q_1 ) + \mathrm{d} \Gamma( q_2 ) .
\end{equation*}
Given $\chi \in \mathrm{C}_0^\infty( ( - \infty , m_{Z^0} - \varepsilon ) ; \R )$, we want to prove that
\begin{align*}
& \langle \mathrm{Q} \rangle^{-s} e^{ - i t H }Ê\chi( H ) \langle \mathrm{Q} \rangle^{-s} = e^{ - i t E }Ê\chi( E ) \langle \mathrm{Q} \rangle^{-s} P_{\mathrm{gs}} \langle \mathrm{Q} \rangle^{-s} + \mathrm{R}_0(t),
\end{align*}
with $\| \mathrm{R}_0(t) \|Ê\le C \langle t \rangle^{-s+\mu}$.

The definition \eqref{eq:defvarphi1}--\eqref{eq:defvarphi2} of $\varphi_{ \sigma_n }$ shows that, for all $x \in \mathrm{supp}( \chi ( \cdot + E ) )$,
\begin{equation*}
1 \le \1_{ \{ 0 \}Ê}( x ) + \sum_{n \in \mathbb{N} } \varphi_{ \sigma_n }( x ) \le 2.
\end{equation*}
This implies that, for all $x \in \mathrm{supp}( \chi ( \cdot + E ) )$, we can write $\chi(x+E) = \chi(E) \1_{ \{0\} }(x) + \sum_{n\in \mathbb{N}} \tilde \chi ( x + E ) \varphi_{\sigma_n}( x )$ for some $\tilde \chi \in \mathrm{C}_0^\infty( ( - \infty , m_{Z^0} - \varepsilon ) ; \R )$. Therefore the spectral theorem gives
\begin{align}
& \big \| \langle \mathrm{Q} \rangle^{-s} e^{ - i t H }Ê\chi( H ) \langle \mathrm{Q} \rangle^{-s} - e^{ - i t E }Ê\chi( E ) \langle \mathrm{Q} \rangle^{-s} P_{\mathrm{gs}} \langle \mathrm{Q} \rangle^{-s} \big \| \notag \\
&\le \sum_{ n \in \mathbb{N} } \big \| \langle \mathrm{Q} \rangle^{-s} e^{ - i t H }Ê\varphi_{ \sigma_n } ( H - E ) \tilde\chi( H ) \langle \mathrm{Q} \rangle^{-s} \big \| . \label{eq:fin1}
\end{align}
Recall that $\psi_{\sigma_n}$ has been defined in \eqref{eq:defvarpsi1} with the property that $\varphi_{\sigma_n} = \varphi_{ \sigma_n }Ê\psi_{ \sigma_n }$. For all $n \in \mathbb{N}$, we can write
\begin{align*}
& \big \| \langle \mathrm{Q} \rangle^{-s} e^{ - i t H }Ê\varphi_{ \sigma_n } ( H - E ) \tilde\chi( H ) \langle \mathrm{Q} \rangle^{-s} \big \| \\
& \leÊ\big \| \langle \mathrm{Q} \rangle^{-s}Ê\psi_{ \sigma_n } ( H - E ) \langle A_{ \sigma_n } \rangle^{s-\mu} \big \| \big \|Ê\langle A_{ \sigma_n } \rangle^{-s+\mu} e^{ - i t H }Ê\varphi_{ \sigma_n }( H -  E )Ê\langle A_{ \sigma_n } \rangle^{-s+\mu} \big \| \\
& \quad \times \big \|Ê\langle A_{ \sigma_n } \rangle^{s-\mu} \tilde\chi ( H ) \langle A_{ \sigma_n } \rangle^{-s+\mu} \big \| \big \|Ê\langle A_{ \sigma_n } \rangle^{s-\mu} \psi_{ \sigma_n } ( H - E ) \langle \mathrm{Q} \rangle^{-s} \big \|.
\end{align*}
From Propositions \ref{prop:prop1} and \ref{prop:prop2} (with $\varphi_{ \sigma_n }$ replaced by $\psi_{ \sigma_n }$), we deduce that
\begin{equation*}
\big \|Ê( A_{ \sigma_n } + i ) \psi_{ \sigma_n } ( H - E ) \langle \mathrm{Q} \rangle^{-1} \big \| \le C \sigma_n^{ \frac12Ê}.
\end{equation*}
An interpolation argument then gives
\begin{equation*}
\big \|Ê\langle A_{ \sigma_n } \rangle^{s} \psi_{ \sigma_n } ( H - E ) \langle \mathrm{Q} \rangle^{-s} \big \| \le C_s \sigma_n^{ \frac{s}{2}Ê} ,
\end{equation*}
and consequently
\begin{equation*}
\big \|Ê\langle A_{ \sigma_n } \rangle^{s - \mu} \psi_{ \sigma_n } ( H - E ) \langle \mathrm{Q} \rangle^{-s} \big \| \le C_s \sigma_n^{ \frac{s}{2}Ê} .
\end{equation*}
Furthermore, from Theorem \ref{thm:LAP}, we obtain that
\begin{equation*}
\big \|Ê\langle A_{ \sigma_n } \rangle^{-s + \mu} e^{ - i t H }Ê\varphi_{ \sigma_n }( H -  E )Ê\langle A_{ \sigma_n } \rangle^{-s+\mu} \big \| \le C_{s,\mu} \langle \sigma_n t \rangle^{-s+\mu} ,
\end{equation*}
for $0 < \mu \le s$, and Lemma \ref{lm:lmunif} shows that
\begin{equation*}
\big \|Ê\langle A_{ \sigma_n } \rangle^{s - \mu} \tilde\chi ( H ) \langle A_{ \sigma_n } \rangle^{-s+\mu} \big \| \le C_{s,\mu}.
\end{equation*}
Combining the previous estimates, we arrive at
\begin{align*}
& \big \| \langle \mathrm{Q} \rangle^{-s} e^{ - i t H }Ê\varphi_{ \sigma_n } ( H - E ) \tilde\chi( H ) \big \langle \mathrm{Q} \big \rangle^{-s} \big \|  \le C_{s,\mu} \sigma_n^\mu \langle t \rangle^{ - s + \mu }.
\end{align*}
Summing over $n \in \mathbb{N}$ (which is possible since $\sigma_n = \gamma^n \sigma_0$ and $\gamma < 1$), \eqref{eq:uniflocdec} then follows from \eqref{eq:fin1}.

It remains to prove \eqref{eq:returnunif}. Let $\phi = \chi( H ) \langle Q \rangle^{-s} \psi$. We write
\begin{align*}
& \langleÊ\phi , e^{ i t H }ÊO e^{ - i t H }Ê\phi \rangle = \langleÊ\psi , \langle Q \rangle^{-s} \chi( H ) e^{ i t H } \langle Q \rangle^{-s}  \langle Q \rangle^{s} O \langle Q \rangle^{s} \langle Q \rangle^{-s} e^{ - i t H }Ê\chi( H ) \langle Q \rangle^{-s} \psi \rangle .
\end{align*}
By \eqref{eq:uniflocdec}, we deduce that
\begin{align*}
& \langleÊ\phi , e^{ i t H }ÊO e^{ - i t H }Ê\phi \rangle \\
&= \chi( E )^2 \langleÊ\psi , \langle Q \rangle^{-s} P_{\mathrm{gs}} \langle Q \rangle^{-s}  \langle Q \rangle^{s} O \langle Q \rangle^{s} \langle Q \rangle^{-s} P_{\mathrm{gs}} \langle Q \rangle^{-s} \psi \rangle + | \mathrm{R}_1(t) | \\
& = \chi( E )^2 \langleÊ\langle Q \rangle^{-s} \psi , P_{\mathrm{gs}} O P_{\mathrm{gs}} \langle Q \rangle^{-s} \psi \rangle + | \mathrm{R}_1(t) |,
\end{align*}
with $| \mathrm{R}_1(t) | \le C_{s,\mu} \langle t \rangle^{-s+\mu}$. Since $\chi( E ) P_{ \mathrm{gs} } = \chi (H) P_{ \mathrm{gs} }$ and since $\phi = \chi( H ) \langle Q \rangle^{-s} \psi$, this finally gives
\begin{align*}
& \langleÊ\phi , e^{ i t H }ÊO e^{ - i t H }Ê\phi \rangle = \langle \phi , P_{\mathrm{gs}} O P_{\mathrm{gs}} \phi \rangle + | \mathrm{R}_1(t) |,
\end{align*}
and hence \eqref{eq:returnunif} is proven.
\end{proof}
%
%

\noindent\textbf{Acknowledgements.} 
The research of J.-M. B. and J. F. is supported by ANR grant ANR-12-JS0-0008-01. J.-M. B. thanks the Mathematisches Forschungsinstitut Oberwolfach, through the programme ``Research in Pairs" 2016, where part of this work was done.

\appendix


\section{Generalized eigenfunctions of the massless Dirac operator}\label{appendixA}

In this section we recall some properties of the generalized eigenfunctions of the massless Dirac operator $D_0 = - i \boldsymbol{\alpha} \cdot \nabla$. More details can be found in
\cite[section 9.9, (44), (45), (63)]{Greiner} and in \cite [chapter 4, section 4.6]{ref8}. The expressions of the generalized eigenfunctions can also be retrieved by fixing the mass of the fermions to zero in \cite[Appendix~A]{BFG4}.

As mentioned in Section \ref{S2}, the generalized eigenfunctions of $D_0$ are labeled by the angular momentum quantum numbers
 \begin{equation*}
    j \in \mathbb{N} + \frac12 , \quad  m_j \in \{ -j, -j+1, \ldots, j-1, j\},
\end{equation*}
 and by
\begin{equation*}
    \kappa_j \in \Big \{ \pm ( j+ \frac{1}{2}) \Big \}  .
\end{equation*}
Setting $\gamma_j :=|\kappa_j|$ and $\mathrm{e}^{2i\eta_j} = \frac{-\kappa_j}{\kappa_j}$, we define
\begin{align*}
  & g_{\kappa_j,\pm}(p,r) := 2 \pi^{-\frac12}  \frac{(2 p r )^{\gamma_j}}{r}  \frac{\Gamma(\gamma_j)}{\Gamma(2\gamma_j+1)} \\
  & \times \big ( \mathrm{e}^{-i p r} \mathrm{e}^{i\eta_j} \gamma_j L(\gamma_j+1, 2\gamma_j +1, 2 i p r)
  + \mathrm{e}^{i p r} \mathrm{e}^{-i\eta_j} \gamma_j L(\gamma_j +1, 2\gamma_j +1, -2 i p r)
  \big ) ,
\end{align*}
for $p,r \in [0,\infty)$, and
\begin{align*}
 & f_{\kappa_j,\pm}(p, r) := \pm i 2 \pi^{-\frac12} \frac{(2 p r)^{\gamma_j}}{r} \frac{\Gamma(\gamma_j)}{\Gamma(2\gamma_j+1)}\\
 & \times\big ( \mathrm{e}^{-i p r}\mathrm{e}^{i\eta_j} \gamma_j L(\gamma_j +1, 2\gamma_j +1, 2 i p r) - \mathrm{e}^{i p r}\mathrm{e}^{-i\eta_j} \gamma_j L(\gamma_j +1,
 2\gamma_j +1, -2 i p r)  \big ) ,
\end{align*}
where $\Gamma$ stands for Euler's gamma function and the functions $L$ are the confluent hypergeometric functions given, for $\gamma_j>1/2$, by
\begin{align}\label{eq:cdf}
  \L(\gamma_j+1, 2\gamma_j+1, \pm 2i pr)
 :=  \frac{\Gamma(2\gamma_j +1)}{\Gamma(\gamma_j+1) \Gamma(\gamma_j)} \int_0^1 \mathrm{e}^{\pm 2 i p r u}
 u^{\gamma_j} (1-u)^{\gamma_j} d u .
\end{align}

For $\xi = ( p , \gamma ) = ( p , ( j , m_j , \kappa_j ) )$, the generalized eigenfunctions $\psi_\pm (\xi, x)$ are given by
\begin{align*}
 \psi_\pm (\xi , x) :=
 \begin{pmatrix}
  i g_{\kappa_j,\pm}(p,r) \Phi^{(1)}_{m_j,\kappa_j} (\theta,\varphi) \\
  - f_{\kappa_j,\pm}(p,r) \Phi^{(2)}_{m_j,\kappa_j} (\theta,\varphi)
 \end{pmatrix} ,
\end{align*}
where $( r , \theta , \varphi )$ are the spherical coordinates associated to $x$. The spinors $\Phi^{(1)}_{m_j,\kappa_j}$ and $\Phi^{(2)}_{m_j,\kappa_j}$ are defined in \cite[Appendix A]{BFG3}.

For positive energies $\omega(p)$, using \eqref{eq:cdf}, we get for $\kappa_j= (j+\frac12) = \gamma_j>0$,
\begin{equation}\nonumber
\begin{split}
& g_{(j+\frac12), +}(p,r) = -\frac{p}{\sqrt{\pi}} \frac{(2pr)^{\gamma_j-1}}{\Gamma(\gamma_j)}
2 \int _0^1 \sin(pr(2u-1)) u_{\gamma_j} (1-u)^{\gamma_j-1} d u , \\
& f_{(j+\frac12), +}(p,r) = -\frac{p}{\sqrt{\pi}} \frac{(2pr)^{\gamma_j-1}}{\Gamma(\gamma_j)}
2 \int _0^1 \cos(pr(2u-1)) u_{\gamma_j} (1-u)^{\gamma_j-1} d u . \\
\end{split}
\end{equation}
For the other cases, namely $\kappa_j=-(j+\frac12)= - \gamma_j$ or negative energies, we have
\begin{align*}
& g_{(j+\frac12), +}(p,r) = g_{(j+\frac12), -}(p,r)
= f_{-(j+\frac12), +}(p,r) = - f_{-(j+\frac12), -}(p,r) , \\
& f_{(j+\frac12), +}(p,r) = - g_{-(j+\frac12), +}(p,r) = - g_{-(j+\frac12), -}(p,r)
=  - f_{(j+\frac12), -}(p,r) .
\end{align*}

Straightforward computations thus yield
\begin{equation}\label{eq:appA-1}
\begin{split}
 \big | g_{(j+\frac12), +}(p,r) \big |
\leq \frac{p}{\sqrt{\pi}} \frac{(2 p r)^{\gamma_j }}{\Gamma(\gamma_j)}
\ ,\quad
 \big | f_{(j+\frac12), +}(p,r) \big |
\leq \frac{2p}{\sqrt{\pi}} \frac{(2 p r)^{\gamma_j -1}}{\Gamma(\gamma_j)} ,
\end{split}
\end{equation}
\begin{equation}\label{eq:appA-2}
\begin{split}
 &\Big | \frac{d} {d p} g_{(j+\frac12), +}(p,r) \Big |
\leq \frac{\gamma_j +1}{\sqrt{\pi}} \frac{(2 p r)^{\gamma_j }}{\Gamma(\gamma_j)} \\
 &\Big | \frac{d} {d p} f_{(j+\frac12), +}(p,r) \Big |
\leq \frac{2\gamma_j}{\sqrt{\pi}} \frac{(2 p r)^{\gamma_j -1}}{\Gamma(\gamma_j)}
+ \frac{1}{2\sqrt{\pi}}  \frac{(2 p r)^{\gamma_j + 1}}{\Gamma(\gamma_j)},
\end{split}
\end{equation}
and
\begin{equation}\label{eq:appA-3}
\begin{split}
 &\Big | \frac{d^2} {d p^2} g_{(j+\frac12), +}(p,r) \Big |
\leq  \frac{6r}{\sqrt{\pi}} \gamma_j \frac{(2 p r)^{\gamma_j-1}}{\Gamma(\gamma_j)}
    + \frac{2r}{\sqrt{\pi}} (\gamma_j-1)^2 \frac{(2 p r)^{\gamma_j}}{\Gamma(\gamma_j)}
    + \frac{r}{2\sqrt{\pi}} \frac{(2 p r)^{\gamma_j+1}}{\Gamma(\gamma_j)} ,
\\
 & \Big | \frac{d^2} {d p^2} f_{(j+\frac12), +}(p,r) \Big |
\leq \frac{3r}{\sqrt{\pi}} \gamma_j \frac{(2 p r)^{\gamma_j}}{\Gamma(\gamma_j)}
    + \frac{4r}{\sqrt{\pi}} \gamma_j(\gamma_j-1) \frac{(2 p r)^{\gamma_j-2}}{\Gamma(\gamma_j)}.
\end{split}
\end{equation}


\section{Relative bounds}\label{appendixB}
In this section, we recall a technical result obtained in \cite{BDG, BDG2, bg4} related to the $N_\tau$ estimates of \cite{GlimmJaffe}.
\begin{lem}\label{lem:bg4}
For a.e. $\xi_3\in\Sigma_3$, let
\begin{equation}\nonumber
\begin{split}
 B^{(1)}(\xi_3)  & := - \int \overline{F^{(1)}(\xi_1, \xi_2, \xi_3)} b_+(\xi_1) b_-(\xi_2) d \xi_1 d \xi_2 , \\
 B^{(2)}(\xi_3)  & :=  \int F^{(2)}(\xi_1, \xi_2, \xi_3) b_+^*(\xi_1) b_-^*(\xi_2) d \xi_1 d \xi_2 .
\end{split}
\end{equation}
For all $\psi\in\D(\hD)$, and for $j=1,2$
\begin{align*}
 & \| B^{(1)}(\xi_3) \psi \| \leq  \left(\int\frac{ |F^{(1)}(\xi_1,\xi_2,\xi_3)|^2 }{p_j} d\xi_1 d \xi_2\right)^\frac12 \| \hD^\frac12 \psi \| , \\
 & \| (B^{(2)}(\xi_3))^* \psi \|  \leq  \left(\int\frac{ |F^{(2)}(\xi_1,\xi_2,\xi_3)|^2 }{p_j} d \xi_1 d \xi_2\right)^\frac12 \| \hD^\frac12 \psi \| , \\
 & \| (B^{(1)}(\xi_3))^* \psi \| \leq \left( 2 \|F^{(1)}(\cdot, \cdot, \xi_3)\|^2 \|\psi\|^2 + \|B^{(1)}(\xi_3)\psi\|^2\right)^\frac12 , \\
&  \| B^{(2)}(\xi_3) \psi \|  \leq \left( 2 \|F^{(2)}(\cdot, \cdot, \xi_3)\|^2 \|\psi\|^2 + \|(B^{(2)}(\xi_3))^*\psi\|^2\right)^\frac12 .
\end{align*}
\end{lem}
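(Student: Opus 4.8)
The plan is to recognize the four inequalities as standard $N_\tau$-type estimates in the spirit of \cite{GlimmJaffe, BDG, BDG2, bg4}, and to organize the argument around the observation that $B^{(1)}(\xi_3)$ and $(B^{(2)}(\xi_3))^*$ are ``double annihilation'' operators (removing one neutrino and one antineutrino), while $(B^{(1)}(\xi_3))^*$ and $B^{(2)}(\xi_3)$ are the corresponding ``double creation'' operators. Throughout, the anticommutation convention between the two species (Subsection~\ref{subsec:model}) will be used freely to move a $b_+^\sharp$ past a $b_-^\sharp$ at the cost of a sign, and for $f$ in the one-particle space $\mathfrak{H}_c$ I write $b_\pm(f) := \int \overline{f(\xi)} b_\pm(\xi) d\xi$, so that $\| b_\pm(f) \| \le \| f \|$ by the Pauli principle.

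For the first inequality I would write $B^{(1)}(\xi_3)\psi = - \int b_+\big( F^{(1)}( \cdot , \xi_2 , \xi_3 ) \big) b_-(\xi_2) \psi \, d\xi_2$, then bound it by the triangle inequality, the estimate $\| b_+( F^{(1)}( \cdot , \xi_2 , \xi_3 ) ) \| \le \| F^{(1)}( \cdot , \xi_2 , \xi_3 ) \|_{L^2(d\xi_1)}$, and the Cauchy--Schwarz inequality with the weight $p_2$. This produces the factor $( \int | F^{(1)} |^2 / p_2 \, d\xi_1 d\xi_2 )^{1/2}$ times $( \int p_2 \| b_-(\xi_2)\psi \|^2 d\xi_2 )^{1/2}$, and the second factor equals $\la \psi , H_{0,-} \psi \ra^{1/2} \le \| \hD^{1/2} \psi \|$. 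The case $j=1$ is obtained by interchanging the two species, i.e.\ peeling off the $\xi_1$-integration first and using $\int p_1 \| b_+(\xi_1)\psi\|^2 d\xi_1 = \la \psi , H_{0,+} \psi \ra \le \| \hD^{1/2}\psi\|^2$. Since $(B^{(2)}(\xi_3))^*$ has the same algebraic form with $F^{(1)}$ replaced by $F^{(2)}$, the second inequality follows identically.

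For the third inequality I would expand $\| (B^{(1)}(\xi_3))^*\psi \|^2 = \la \psi , B^{(1)}(\xi_3) (B^{(1)}(\xi_3))^* \psi \ra$ and normal-order $B^{(1)}(\xi_3)(B^{(1)}(\xi_3))^*$ using the CAR. Wick reordering should produce the scalar $\| F^{(1)}( \cdot , \cdot , \xi_3 ) \|^2 \, \1$ (the full double contraction), two ``generalized number operator'' terms from contracting a single $b_+$- or $b_-$-pair, and the fully normal-ordered term $(B^{(1)}(\xi_3))^* B^{(1)}(\xi_3)$. The crucial point is that the two single-contraction terms are nonnegative quadratic forms — each evaluates on $\psi$ to an integral of the form $\int \| b_\pm( \overline{ F^{(1)}( \cdots ) } )\psi\|^2$ — so discarding them yields $\| (B^{(1)}(\xi_3))^*\psi\|^2 \le \| F^{(1)}( \cdot , \cdot , \xi_3 ) \|^2 \| \psi \|^2 + \| B^{(1)}(\xi_3)\psi \|^2$, which is in fact slightly stronger than claimed. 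The fourth inequality then follows by applying the third one to the double annihilation operator $(B^{(2)}(\xi_3))^*$, whose adjoint is $B^{(2)}(\xi_3)$, with $F^{(1)}$ replaced by $F^{(2)}$.

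The only genuinely delicate step is the normal ordering in the third inequality: one must keep careful track of the signs coming both from the CAR for each species and from the cross-species anticommutation relations, and then verify that the two surviving one-contraction terms come with exactly the sign that makes them nonnegative (hence discardable). Everything else is bookkeeping, and the first two inequalities are immediate once the smeared operators $b_\pm(f)$ and the identities $\int p_1 \| b_+(\xi_1)\psi\|^2 d\xi_1 = \la \psi , H_{0,+}\psi\ra$ and $\int p_2 \| b_-(\xi_2)\psi\|^2 d\xi_2 = \la \psi , H_{0,-}\psi\ra$ are in place.
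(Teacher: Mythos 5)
Your proposal is correct, and it is essentially the standard $N_\tau$-type argument: the paper itself does not prove Lemma~\ref{lem:bg4} but refers to \cite{BDG,bg4} (and ultimately \cite{GlimmJaffe}), and those proofs proceed exactly as you describe — Cauchy--Schwarz with the weight $p_j$ together with $\int p_j \| b_\pm(\xi)\psi\|^2 d\xi = \la \psi, H_{0,\pm}\psi\ra \le \|\hD^{1/2}\psi\|^2$ for the annihilation parts, and Wick reordering of $B(\xi_3)B(\xi_3)^*$ for the creation parts. Your sign bookkeeping in the delicate step does work out: with the paper's convention that the two species anticommute, one finds
\begin{equation*}
B^{(1)}(\xi_3)\,(B^{(1)}(\xi_3))^* = \|F^{(1)}(\cdot,\cdot,\xi_3)\|^2\,\1 - \int b_+^*\big(F^{(1)}(\cdot,\xi_2,\xi_3)\big) b_+\big(F^{(1)}(\cdot,\xi_2,\xi_3)\big)\, d\xi_2 - \int b_-^*\big(F^{(1)}(\xi_1,\cdot,\xi_3)\big) b_-\big(F^{(1)}(\xi_1,\cdot,\xi_3)\big)\, d\xi_1 + (B^{(1)}(\xi_3))^* B^{(1)}(\xi_3),
\end{equation*}
so the two single-contraction terms enter with a minus sign in front of manifestly nonnegative operators and can be dropped, giving the bound with constant $1$ in place of $2$ — slightly stronger than, and hence implying, the stated inequalities (the factor $2$ in the lemma simply reflects a lossier estimate in the cited references).
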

Details of the proof can be found in \cite{BDG} and \cite{bg4}.

\bibliographystyle{amsalpha}

\end{document}